\renewcommand{\b}[1]{\boldsymbol{ #1}}
\newcommand{\refp}[1]{(\ref{#1})}			
\newcommand{\ie}{{\textit{i.e.~}}}
\newcommand\R{\ensuremath{\mathbbm R}}     
\newcommand\C{\ensuremath{\mathscr{C}}}     
\newcommand{\vect}[1]{\mathbf{#1}}  
\newcommand{\mat}[1]{\mathbb{#1}}  
\newcommand{\dd}{\ensuremath{\mathrm d}} 
\newcommand\Eps{\ensuremath{\varepsilon}}         
\newcommand{\bpm}{\ensuremath{\begin{pmatrix}}}        
\newcommand{\epm}{\ensuremath{\end{pmatrix}}}            
\newcommand{ \mvec}[1]{\overset{\circ}{#1}} 
\newcommand{	\Hol}{\mathsf{Hol}}
\newcommand{\cev}[1]{\reflectbox{\ensuremath{\vec{\reflectbox{\ensuremath{#1}}}}}}
\newcommand{\Lag}{\mathcal{L}}
\newcommand{\Ld}{\mathsterling}
\newcommand{\lag}{\ell}
\newcommand{\ld}{\mathbbm{l}}
\newcommand{\ldR}{\mathbbm{l'}}
\newcommand{\Ha}{\mathcal{H}}
\newcommand{\ha}{h}
\newcommand{\haR}{h'}
\newcommand{\Hd}{\mathbbm{H}}
\newcommand{\hd}{\mathbbm{h}}
\newcommand{\hdR}{\mathbbm{h'}}
\newcommand{\mcfL}{\lambda}
\newcommand{\mcfLv}{\boldsymbol{\lambda}}
\newcommand{\mcfR}{\rho}
\newcommand{\mcfRv}{\boldsymbol{\rho}}
\newcommand{\fcv}{\ensuremath{\b\theta}}
\newcommand{\fcvG}{\ensuremath{\b\vartheta}}
\newcommand{\fc}[1]{\ensuremath{\theta^{#1}}}
\newcommand{\fcG}[1]{\ensuremath{\vartheta^{#1}}}
\newcommand{\vol}{\ensuremath{\omega}}
\newcommand{\dnx}[2]{\ensuremath{\dd^{#1}x_{#2}}}
\newcommand{\PC}{\ensuremath{\Theta}}
\newcommand{\PCL}{\ensuremath{\Theta_\Lag}}
\newcommand{\PCl}{\ensuremath{\Theta_\lag}}
\newcommand{\PS}{\ensuremath{\Omega	}}
\newcommand{\PSL}{\ensuremath{\Omega_\Lag}}
\newcommand{\PSl}{\ensuremath{\Omega_\lag}}
\newcommand{\PSh}{\ensuremath{\Omega_\ha}}
\newcommand{\Psymp}[1]{\ensuremath{\Omega}^{#1}}
\newcommand{\Ocan}{\ensuremath{\Omega^*}}
\newcommand{\Ocanon}{\ensuremath{\Omega^*}}
\newcommand{\ps}{\PSh^*}
\newcommand{\psymp}[1]{\PSh^{#1}}
\newcommand{\sect}{\ensuremath{\phi}}
\newcommand{\Chp}[1]{\ensuremath{\chi(#1)}}
\newcommand{\ChpV}[2]{\ensuremath{\chi^{V(#1)}(#2)}}
\newcommand{\prolonge}[1]{j^1{#1}}
\newcommand{\pullback}[1]{(j^{1}{#1})^{*}} 
\newcommand{\prld}[1]{\tilde{j}^{1}{#1}}
\newcommand{\pullbackd}[1]{(\prld{#1})^{*}} 
\newcommand{\jpi}{\jmath \pi}
\newcommand{\FLe}{\mathbb{F}\mathfrak{L}}
\newcommand{\x}[1]{{x}^{#1}}
\newcommand{\y}[1]{{y}^{#1}}
\renewcommand{\v}[2]{v_{#1}^{#2}} 
\newcommand{\z}[1]{\zeta^{#1}} 
\newcommand{\zv}{\boldsymbol{\zeta}} 
\newcommand{\zG}[1]{\Xi^{#1}} 
\newcommand{\g}[1]{{g}^{#1}}
\newcommand{\xiL}{\xi}
\newcommand{\xiR}{\chi}
\newcommand{\vL}[2]{\xiL_{#1}^{#2}}
\newcommand{\vLv}{\boldsymbol{\xi}}
\newcommand{\piL}{\pi}
\newcommand{\piR}{\Pi}
\newcommand{\momL}[2]{\piL^{#1}_{#2}}
\newcommand{\momR}[2]{\piR^{#1}_{#2}}
\newcommand{\basex}[1]{\vec{\partial}_{#1}}
\newcommand{\basey}[1]{\vec{\partial}_{#1}}
\newcommand{\basev}[2]{\vec{\partial}_{#1}^{#2}}
\newcommand{\baseyTG}[1]{\vec{e}_{#1}}
\newcommand{\baseyL}[1]{\vec{e}_{#1}}
\newcommand{\baseyR}[1]{\cev{e}_{#1}}
\newcommand{\basevL}[2]{\vec{\Game}^{#1}_{#2}}
\newcommand{\basevR}[2]{\cev{\Game}^{#1}_{#2}}
\newcommand{\dx}[1]{\dd \x{#1}}
\newcommand{\dy}[1]{\dd \y{#1}}
\newcommand{\dv}[2]{\dd \v{#1}{#2}}
\newcommand{\ptE}{\ensuremath{\textsc{p}}}
\newcommand{\ptJE}{\ensuremath{\bar \ptE}}
\renewcommand{\gg}{\ensuremath{\b g}} 
\newcommand{\JE}{\ensuremath{J^1E}}
\newcommand{\JEpt}[1]{\ensuremath{J^1_{#1}E}}
\newcommand{\der}[2]{\frac{d{#1}}{d{#2}}}
\newcommand{\derp}[2]{\frac{\partial#1}{\partial{#2}}}
\newcommand{\derpm}[3]{\frac{\partial^2#1}{\partial{#2}\partial{#3}}}
\newcommand{\dpv}[1]{\partial_{#1}}
\newcommand{\contr}{\,\lrcorner\,}
\newcommand{\crochetL}[2]{\left[#1,#2\right]}
\newcommand{\crochetF}[2]{\left[#1\wedge#2\right]}
\newcommand{\dual}[2]{\left(#1,#2\right)}
\newcommand{\Lie}[1]{\ensuremath{\mathsf{L}_{#1}}}
\newcommand{	\Ad}[2]{\text{Ad}_{#1}#2}
\newcommand{	\Ade}[2]{\text{Ad}^*_{#1}#2}
\newcommand{	\ad}[2]{\text{ad}_{#1}#2}
\newcommand{	\ade}[2]{\text{ad}^*_{#1}#2}
\newcommand{\pos}{\mathbf{r}}
\newcommand{\Rot}{\mat{R}}
\newcommand{\s}{s}
\newcommand{\gre}{\mathbf{H}}
\newcommand{\ig}{\mathbf{H}^{-1}}
\newcommand{\xp}{\mathbf{x}}
\newcommand{\Yo}{\mathbf{w}_{0}}
\newcommand{\Eun}{\vect{E}_1}
\newcommand{\Ps}{\boldsymbol{\pi}_{R}}
\newcommand{\Pc}{\boldsymbol{\pi}_{L}}
\newcommand{\Ss}{\boldsymbol{\sigma}_{R}}
\newcommand{\Sc}{\boldsymbol{\sigma}_{L}}
\newcommand{\Ep}{\boldsymbol{\epsilon}}
\newcommand{\X}{\boldsymbol{\chi}}
\newcommand{\om}{\boldsymbol{\omega}}  
\newcommand{\vt}{\vect{v}} 
\newcommand{\m}{\vect{m}} 
\newcommand{\n}{\vect{p}}
\newtheoremstyle{myThmStyle}
{6pt}
{6pt}
{\itshape}
{}
{}
{}
{ }
{\thmname{\normalfont\textsc{#1}} \thmnumber{\normalfont #2} \thmnote{#3}\\}
\definecolor{thmBgColor}{RGB}{250,250,250}
\definecolor{thmLnColor}{RGB}{200,200,200}
\theoremstyle{myThmStyle}
\newtheorem{myTheorem}{Theorem}[section]
\newenvironment{theorem}[1]
{
\begin{mdframed}[style=MDFStyGrayScreen]
\begin{myTheorem}#1
}
{
\end{myTheorem}
\end{mdframed}
}
\theoremstyle{myThmStyle}
\newtheorem{myDefinition}[myTheorem]{Definition}
\newenvironment{definition}[1]
{
\begin{mdframed}[style=MDFStyGrayScreen]
\begin{myDefinition}#1
}
{
\end{myDefinition}
\end{mdframed}
}
\theoremstyle{myThmStyle}
\newtheorem{myProposition}[myTheorem]{Proposition}
\newenvironment{proposition}[1]
{
\begin{mdframed}[style=MDFStyGrayScreen]
\begin{myProposition}#1
}
{
\end{myProposition}
\end{mdframed}
}
\theoremstyle{myThmStyle}
\newtheorem{myLemma}[myTheorem]{Lemma}
\newenvironment{lemma}[1]
{
\begin{mdframed}[style=MDFStyGrayScreen]
\begin{myLemma}#1
}
{
\end{myLemma}
\end{mdframed}
}
\theoremstyle{myThmStyle}
\newtheorem{myCorollary}[myTheorem]{Corollary}
\newenvironment{corollary}[1]
{
\begin{mdframed}[style=MDFStyGrayScreen]
\begin{myCorollary}#1
}
{
\end{myCorollary}
\end{mdframed}
}
\begin{document}
%
\title[Multisymplectic geometry with symmetry]{Multisymplectic geometry and covariant formalism for mechanical systems with a Lie group as configuration space: application to the Reissner beam}

\author{Joël Bensoam}
\email{bensoam@ircam.fr}
\homepage[]{http://recherche.ircam.fr/equipes/instruments/bensoam/perso/?page=1}
\author{Florie-Anne Baugé}
\email{florie-anne.bauge@ircam.fr}
\affiliation{IRCAM, UMR 9912 STMS (IRCAM/CNRS/UPMC),\\
1 place I. Stravinsky 75004 Paris, France}

\date{\today}

\begin{abstract}
Many physically important mechanical systems may be described with a Lie group $G$ as configuration space. According to the well-known Noether's theorem, underlying symmetries of the Lie group may be used to considerably reduce the complexity of the problems. However, these reduction techniques, used without care for general problems (waves, field theory), may lead to uncomfortable infinite dimensional spaces. As an alternative, the \emph{covariant} formulation allows to consider a finite dimensional configuration space by increasing the number of independent variables. But the geometric elements needed for reduction, adapted to the specificity of covariant problems which admit Lie groups as configuration space, are difficult to apprehend in the literature (some are even missing to our knowledge). To fill this gap, this article reconsiders the historical geometric construction made by E. Cartan in this particular "covariant Lie group" context. Thus, and it is the main interest of this work, the Poincaré-Cartan and multi-symplectic forms are obtained for a principal $G$ bundle. It allows to formulate the \emph{Euler-Poincaré equations} of motion and leads to a Noether's current form defined in the dual Lie algebra.
\end{abstract}

\pacs{Valid PACS appear here}
\keywords{Suggested keywords}
\maketitle

\setcounter{tocdepth}{3}
\tableofcontents


\section{Introduction}

C.-M. Marle~\cite{Marle2003} gives several physically important mechanical systems
 which configuration space may be identified with a Lie group.
 These systems share remarkable properties which derive from the underlying symmetries of the Lie group. According to the well-known Noether's theorem, symmetries may be related to conserved quantities, named in the modern concept of Lie Groups and Lie algebras: 
\emph{momentum maps}. The property of invariance of these maps allows for considerable 
reduction of the dimension of the phase-space and divides the computational complexity 
by several orders of magnitude, according to the ``symmetry dimension''. Since their 
discovery by Euler \cite{Euler1765} around 1765, momenta have been well illustrated in the 
scientific literature (see Arnold~\cite{Arnold:1966} for example) and are now well settled
 in the symplectic geometry, Hamiltonian theory or Poisson formalism. 

Even if these reduction techniques have been studied thoroughly in the literature (see for example~\cite{marsden:1999}),
their usage for more general problems (waves, field theory) may lead to infinite dimensional manifolds as configuration space.
In this \emph{dynamical} approach,  geodesic curves have to be considered in an infinite dimensional function space.

As an alternative, the \emph{covariant} formulation allows to consider a finite dimensional configuration space (the dimension of the symmetry group itself in our case). This can be achieved by increasing the number of independent variables since the validity of the calculus of variations and of the Noether's theorem is not limited to the previous one-variable setting. Although its roots go back to De Donder~\cite{de1930theorie}, Weyl~\cite{weyl1935geodesic}, Carathéodory~\cite{caratheodory1999calculus}, after J.-M. Souriau in the seventies~\cite{Souriau:1970}, the classical field theory has  been only well understood in the late 20th century (see for example~\cite{kanatchikov1998canonical} for an extension from symplectic to multi-symplectic theory). It is therefore not surprising that, in this covariant (or jet formulation) setting, the geometric constructions needed for reduction have been presented even more recently.

In this context, the derivation of the conserved quantities from the symmetries of the Lie group is not so easy to establish. It can be summarized as follow: the momentum map is no longer a function but must be defined, more generally, as  a \emph{Noether's current}. This form is the interior product of the Poincaré-Cartan form by the fundamental vector field of the Lie group. This definition (difficult to apprehend for a non-specialist audience) can be found in the work of M. Castrill\'on~L\'opez in~\cite{Lopez:2001} for example. Noether 's theorem is then provided by the invariance of this current form along the critical sections (solutions of the problem).

Nevertheless, this definition must be adapted to the specificity of covariant problems which admit Lie groups as configuration space. To our knowledge, it is difficult to find this derivation in the literature - the specific Poincaré-Cartan form is only given, as is, by F. Demoures~\cite{demoures2013multisymplectic}, p. 16, without mentioning the Maurer-Cartan form which however appears in the formula. The specific Noether's current form, obtained by interior product, is also missing in this reference. So the goal of this article is to fill this gap in the literature by reconsidering the historical construction of the Poincaré-Cartan form in this particular "covariant Lie group" context, leading to an adapted current form to express the Noether's theorem.

\subsection{Article organization}
In order to state the subject, the study made by Elie Cartan on variational problems is related in a brief subsection. He has explained, in his "Leçon sur les invariants intégraux"~\cite{Cartan:1922}, the way the Poincaré-Cartan form is obtained and what are its properties. The differential of the Poincaré-Cartan form (called pre-symplectic form after Souriau~\cite{Souriau:1970}) gives rise to the Hamilton's equations of motion and can be related to the Poisson formalism.

After this historical introduction, section \ref{sec:LF} is dedicated to extend the discussion to a more general jet-bundle and leads to what is called now \emph{multi-symplectic geometry}. By the way, it is worth mentioning that Cartan in 1933 in~\cite{Cartan:1933} thought about doing a geometry where geodesics would be replaced by (hyper)surfaces. In this covariant context, the proofs are more laborious but follow the main ideas of Cartan. This leads to a general formalism where theories on reduction by Lie group action can be handled with confidence in the second part of this article (section \ref{sec:III}). The covariant Hamiltonian formalism, successively without and with Lie group considerations, finishes this paper sketching the way to the Lie-Poisson bracket adapted to the \emph{multi-symplectic geometry}.

\section{Historical background}
\subsection{Cartan's lesson}
Let's summarize Elie Cartan's lesson where Hamiltonian formalism is obtained by introducing in a natural way the Poincaré-Cartan form. This form comes from considering variations of the action functional along real trajectories with variable boundary conditions. The action functional is written $  \mathcal{A}  = \int_{t_0(\Eps)}^{t_1(\Eps)}  \Ld ( q, \dot q, t) \, \dd t$, (where $ \Ld( q, \dot q, t)$ is the Lagrangian density of the system and $\dot q$ denotes $\derp{q}{t}$). 

Cartan expresses the action variation $\delta \mathcal{A} =\dd   \mathcal{A} ( Z )$ using a variation vector field (say $ Z(\Eps, t)$ as in appendix~\ref{app:00}). With his "magic rule" for the Lie derivative of any differential form $\alpha$, that is $\Lie{Z} \alpha=Z\contr \dd \alpha+\dd(Z\contr \alpha)$, he may compute (integrating by part)
\begin{eqnarray}\label{eq:dA1}
\delta \mathcal{A} &=&
\left[ \dfrac{\partial  \Ld }{\partial \dot{ q}} \left. \dfrac{\partial  q_\Eps}{\partial \Eps}\right|_{\Eps=0}+  \Ld \, \dd t ( Z)\right]_{t_0(\Eps)}^{t_1(\Eps)} 
- \int_{t_0(\Eps)}^{t_1(\Eps)} \left( \dfrac{\dd }{\dd t}\left(\dfrac{\partial  \Ld}{\partial \dot { q}}\right)  -  \dfrac{\partial  \Ld}{\partial q} \right)\;\left. \dfrac{\partial  q_\Eps}{\partial \Eps}\right|_{\Eps=0} \,  \dd t,
 \end{eqnarray}
where $q_\Eps = q(\Eps, t)$ is the varied curve. By choosing a vector field of variation, $Z$, vanishing on the boundaries $t_0$, $t_1$, the first term disappears. The principle of least action then leads to the well known Euler-Lagrange equation of motion 
 \begin{equation}\label{eq:ELc}
\dfrac{\dd }{\dd t}\left(\dfrac{\partial  \Ld}{\partial \dot { q}}\right)  -  \dfrac{\partial  \Ld}{\partial q}=0.
\end{equation}

Furthermore, Elie Cartan continues his discussion by drawing the consequences of the variable boundary conditions (the boundaries $t_i(\Eps)$ depend on the parameter of variation $\Eps$). With this assumption and for real trajectories verifying the Euler-Lagrange equation~(\ref{eq:ELc}), the variation of action reduces to the first term of~(\ref{eq:dA1}) since the second integral vanishes. On the variable boundaries, $t_i(\Eps)$, he then uses the relation (see appendix~\ref{app:00})
\begin{equation}\label{eq:ContactCartan}
\left. \dfrac{\partial  q_\Eps}{\partial \Eps}\right|_{\Eps=0} = \dd  q( Z) -  \dot{ q} \dd t( Z),
\end{equation}
in~(\ref{eq:dA1}) to obtain 
\begin{eqnarray*}
\delta \mathcal{A}&=&\left[ \dfrac{\partial  \Ld }{\partial \dot{ q}} \left(\dd  q( Z) -  \dot{ q} \dd t( Z)\right)+  \Ld \, \dd t ( Z)\right]_{t_0(\Eps)}^{t_1(\Eps)} 
=\left[ \dfrac{\partial  \Ld }{\partial \dot{ q}}\dd q- \left(\dfrac{\partial  \Ld }{\partial \dot{ q}}\dot{ q}-\Ld  \right)\dd t \right]_{t_0(\Eps)}^{t_1(\Eps)}( Z).
\end{eqnarray*}
 This can be written $\dd \mathcal{A}=\left[\PC\right]_{t_0}^{t_1}$ upon introducing the Poincaré-Cartan form
\begin{equation}\label{eq:PCHc}
\PC= p\, \dd  q -\Ha dt,
\end{equation}
with new variables $ p = \derp{ \Ld}{\dot q}$ and $ \Ha=\derp{\Ld}{\dot q}\dot q - \Ld $. Thus, the Legendre transform appears very naturally.

Then, Cartan shows that this computation leads to an integral invariant along critical sections (solutions) of the variational problem. To do so, he considers a collection of real trajectories labeled by a parameter $\Eps $ (see fig.~\ref{fig:tube}). 

\emph{"Finally, suppose that we consider a tube of trajectories, \ie a closed continuous linear
collection of trajectories, each of which is limited to a time interval $[t_0, t_1]$ that varies also with $\Eps $. The formula which gives the variation of the action along these variable trajectories reduces to
\begin{equation}\label{eq:VA}
\delta \mathcal{A}=\left[\PC\right]_{t_0}^{t_1}=(\PC)_{1}-(\PC)_{0}.
\end{equation}
When one returns to the initial trajectory the total variation of the action is obviously zero, in such a way that, if one integrates with respect to $\Eps $ then one will have
\begin{equation*}
\oint\delta\mathcal{A}=0\Leftrightarrow
\oint(\PC)_{1}=\oint(\PC)_{0}
\end{equation*}
$[...]$ given an arbitrary tube of trajectories, if the integral $\oint\PC$ is taken along a closed curve around the tube then that integral will be independent of that curve and will depend only upon the tube..."
}
\begin{figure}[h!]
\centering
            \includegraphics[scale=.6]{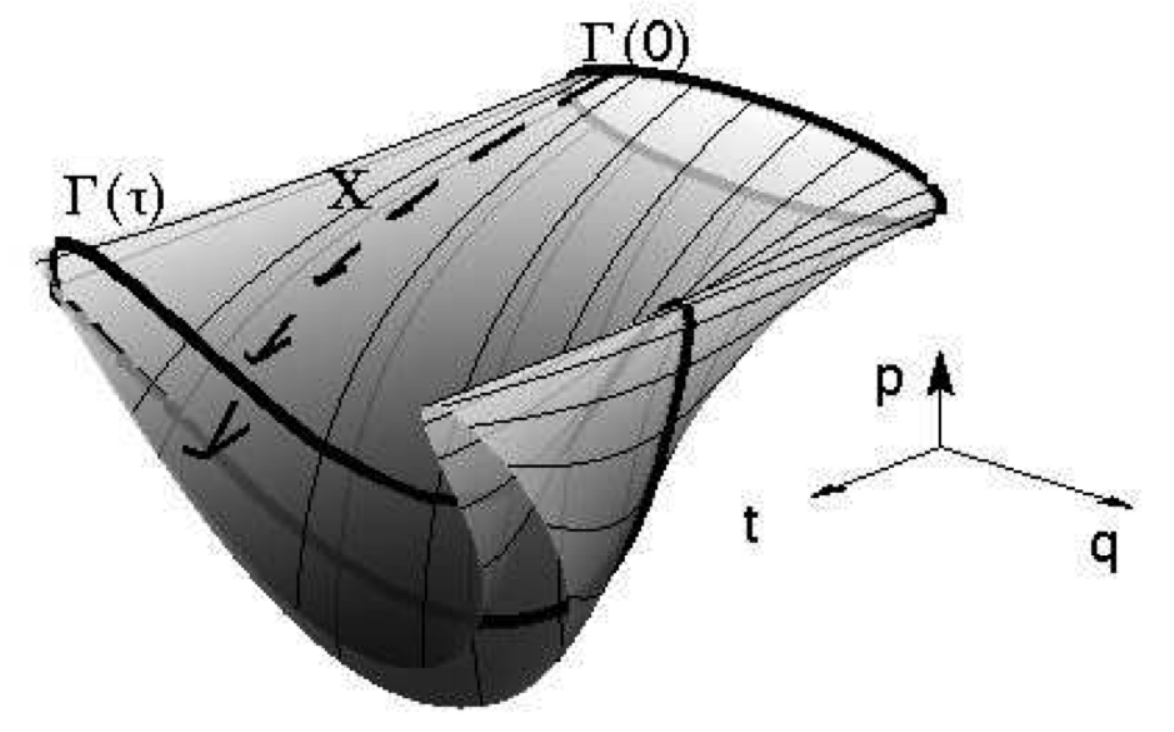}
            \caption{Tube of real trajectories: the integral $I=\oint_{\Gamma}\PC$ does not depend on the choice of the closed loop ${\Gamma}$ around the tube. This quantity is thus an integral invariant that depends only on the choice of the tube.}
            \label{fig:tube}
\end{figure}

So, the quantity $I=\oint\PC$ is invariant \ie does not depend on the closed curve $\Gamma$ along the tube of trajectories. In a modern language, $I$ is invariant along the vector field $X$ tangent to the critical sections, that is
\begin{equation*}
0 = \dd I(X) = \oint_{\Gamma}  \Lie{X}\, \PC 
= \oint_{\Gamma}  X \contr \dd \PC + \oint_{\partial\Gamma} X \contr \PC.
\end{equation*}
The last integral taken over a closed loop is obviously null. Since this property is true for any choice of the tube, one must have $X \contr \dd \PC=0$. So, differentiating~(\ref{eq:PCHc}), Cartan introduces the Poincaré-Cartan form
\begin{equation}\label{eq:PS1}
\PS =-\dd\PC=\dd q\wedge \dd p+\dd\Ha\wedge \dd t
\end{equation}
that verifies ${X}\contr  \PS=0$, for all vector field $X$ tangent to real trajectories. In other words, since $\PS$ is also a closed form, the Lie derivative of the (pre)-symplectic form vanishes
  \begin{equation}\label{symplecticX}
\Lie{X} \PS=0,\quad \text{ $\forall X$ tangent to real trajectories}.
 \end{equation}
He then notices that each coefficient of \mbox{$\delta t=\dd t(X)$},  \mbox{$\delta  q=\dd q(X)$} and $\delta  p=\dd p(X)$ in ${X}\contr  \PS=0$, \ie
\begin{equation*}
(\dd\Ha -\derp{\Ha}{t}\dd t)\delta t-(\dd p +\derp{\Ha}{ q}\dd t)\delta  q+(\dd q-\derp{\Ha}{ p}\dd t)\delta  p=0,
\end{equation*}
must necessarily be canceled. Doing so, he recovers the Hamilton canonical equations of motion
\begin{equation}
\begin{cases}
&\dd\Ha -\derp{\Ha}{t}\dd t=0\\
&\dd p +\derp{\Ha}{ q}\dd t=0\\
&\dd q-\derp{\Ha}{ p}\dd t=0,
\end{cases}
 \text{more commonly written}\quad
 \begin{cases}
 	\dd \Ha( X) = \derp{\Ha}{t}\\
	\dot{p} = -\derp{  \Ha}{ q}\\	
	\dot { q} = \derp{\Ha}{p}.
\end{cases}
\end{equation}
This last formulation is obtained evaluating these forms along a "normalized" vector field $X=\bpm1 &\dot{ q} & \dot { p}\epm^T$$=\bpm \dd t( X) & \dd  q ( X) & \dd  p ( X) \epm^T$, tangent to the trajectories $\bpm t &  q(t) &  p(t) \epm^T$. The two last equations are the well-known Hamilton's equations of motion. In the first equation, the quantity $\dd \Ha(X)$ expresses the variation of the Hamiltonian function along the trajectories. This quantity is often written as $\der{\Ha}{t} = \dot \Ha$ in the literature. If the Hamiltonian function does not depend explicitly on time, \ie $\derp{\Ha}{t} = 0$, then $\Ha$ is invariant along the trajectories. This is known as the conservation of energy (conservative systems).

\subsection{Poisson formalism}

More generally, the invariance~(\ref{symplecticX}) of the pre-symplectic form $\PS$ may be written in another way introducing the canonical symplectic structure 
\begin{equation}\label{eq:canonical}
\Ocanon=\dd q\wedge \dd p.
\end{equation}
Without loss of generality, the vector field $X$ along real trajectories can be normalized to verify $dt(X)=1$. Then for conservative systems with $\dd \Ha(X)=0$ (\scalebox{0.75}{or non-conservative ones} \scalebox{0.75}{$\dd \Ha(X)=\derp{\Ha}{t}$}), the invariance~(\ref{symplecticX}) yields 
\begin{equation}\label{eq:XHam}
X\contr \Ocanon=\dd \Ha\quad\scalebox{0.75}{(or $X\contr \Ocanon=\dd \Ha-\derp{\Ha}{t}\dd t$),}
\quad dt(X)=1.
\end{equation}
The real trajectories obtained from the Hamiltonian $\Ha$ materialize a canonical transformation (an infinitesimal transformation that preserves the pre-symplectic structure $\PS$). Thus, using the canonical symplectic structure $\Ocanon$, the notion of Hamiltonian vector field $Y_F$ associated to a canonical transformation arising from a function $F$ is defined by
\begin{equation*}
Y_F\contr \Ocanon= \dd F\quad\scalebox{0.75}{(or $Y_F\contr \Ocanon= \dd F-\derp{F}{t}\dd t$)},\quad \dd t(Y_F)=1.
\end{equation*}
This last statement can also be written $\Ocanon(Y_F,.)=\dd F(.)$. Thus, considering the Hamiltonian vector field $X_\Ha$ associated to a physical problem, the dynamic of any observable $F$ may be computed by $\Ocanon(Y_F,X_\Ha)=\dd F(X_\Ha)$ where the last term is related to the variation of $F$ along real trajectories: that is $dF(X_\Ha)=\frac{\dd F}{\dd t}=\dot{F}$. This leads, by introducing the Poisson bracket $\{F,\Ha\}=\Ocanon(Y_F,X_\Ha)$, to the Poisson equation
\begin{equation}\label{Poisson}
\dot{F}=\{F,\Ha\}\quad\scalebox{0.75}{(or $\dot{F}=\{F,\Ha\}+\derp{F}{t}$)}.
\end{equation}
In particular in the conservative case, the Poisson bracket of any conserved quantity $J$ with the Hamiltonian $\Ha$ must vanish: $\{J,\Ha\}=0$.

\section{Lagrangian formalism of first-order field theories}\label{sec:LF}

This discussion has started in 1922 in a context where only one independent variable "$t$" was taken into account in variational problems. In modern language, one would speak about \emph{symplectic geometry} where the considered space is a jet-bundle with unidimensional base (\ie endowed with a volume form $\vol=dt$).

This section extends this formalism to a more general jet-bundle. In the context of field theory, the state space $(t, q, v)$ (time, position, velocity) originally conceived by Cartan to study the geometry of the trajectories (curves) according to the optimization principle, is extended to fiber bundles. To be more precise, let $M$ be an orientable manifold and $\pi:E\rightarrow M$ a differentiable fiber bundle with typical fibre $F$ ($\dim F = N$). In the fiber bundle context, more than one independent variables ($\dim M=n+1$) are allowed for (space-time) parametrization and then, the concept of curves is generalized to the concept of sections ($\sect:M\rightarrow E$ in the sequel).

To take into account the velocities, one of the possibilities is to introduce the one-jet fiber bundle $\JE$ (velocity space, for short) characterized by its projection $\jpi:\JE\rightarrow M$. In this way, sections are lifted to the extended jet-bundle $\JE$ to holonomic sections $\prolonge{\sect}:M\rightarrow \JE$ that encode the velocity (see fig.~\ref{fig:HS}). The holonomic concept is used to that purpose. This concept is materialized with the help of a \emph{contact form} $\fcv$ which might be related to the form~(\ref{eq:ContactCartan}) in Cartan's lesson. 

Furthermore, an optimization process seeks the critical section among a family of varied sections generated by an infinitesimal transformation. The vector field tangent to this transformation (say $Z$) needs also to be lifted by a holonomic process ($j^1Z$ in the sequel).
\begin{figure}[htbp]
	\begin{center}
		\includegraphics[width=\textwidth]{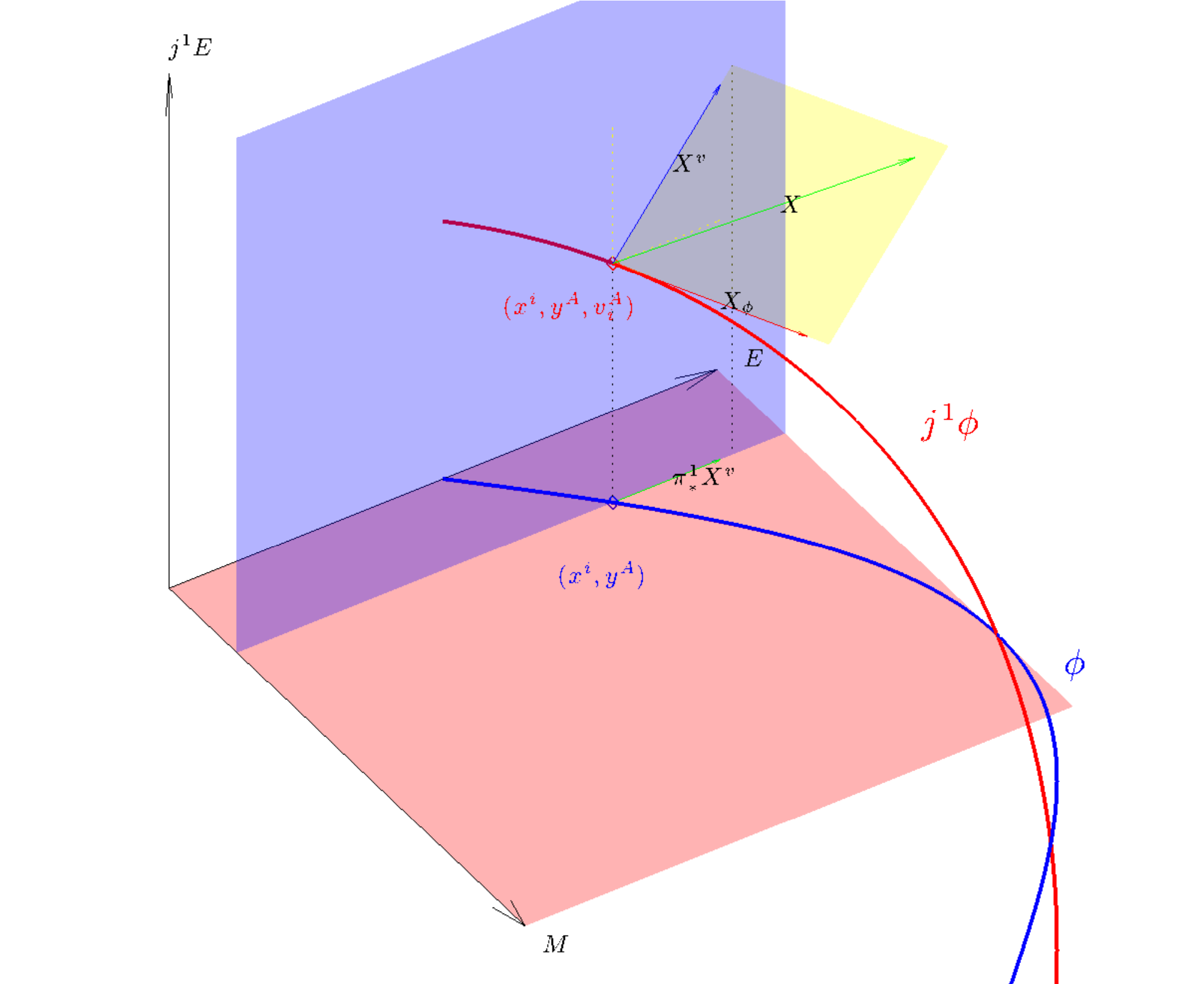}
	\end{center}
	\caption{Sketch of a one-jet fiber-bundle $\JE$: the section $\prolonge{\sect}$ is called the canonical lifting or the canonical prolongation of $\sect$ to $\JE$. A section of $\jpi $ which is the canonical extension of a section of $\pi$ is called a \textbf{holonomic section}. Any vector is a sum of a tangent vector to the section $\prolonge{\sect}$ and a vertical vector $X=X_\sect+X^{v}$.}
\label{fig:HS}
\end{figure}

\subsection{Geometrical structures of first-order jet bundles}\label{app:01}
For all the article, the framework of Arturo Echeverria-Enriquez \& al~\cite{Echeverria00} is adopted as a system of notation. It is quickly related here for convenience of the reader.
Let $M$ be an orientable manifold and $\pi:E\rightarrow M$ a differentiable fiber bundle with typical fibre F.  Let  $\dim M = n + 1$,  $\dim F = N$. The bundle of 1-jets of sections of $\pi$, or 1-jet bundle, is denoted by $\JE$, which is endowed with the natural projection $\pi^1: \JE \rightarrow E$. For every $\ptE \in E$, the fiber of $\JE$ is denoted $\JEpt{\ptE}$ and its elements by $\ptJE$. If $\sect :\mathcal{U}\subset M \rightarrow E$ is a representative of $\ptJE\in \JEpt{\ptE}$, we write  $\ptJE = T_{\pi(\ptE)}\sect$.

In addition, the map $\jpi  = \pi \circ \pi^1: \JE \rightarrow M$ defines another structure of differentiable bundle. The vertical bundle associated with $\pi$ is denoted by $V(\pi)$, that is $V(\pi) = \ker T\pi$, and the vertical bundle associated with $\pi^1$ is written $V(\pi^1) $, that is $V(\pi^1) = \ker T\pi^1$. The corresponding vertical vector fields will be denoted by $\ChpV{\pi}{E}$ and $\ChpV{\pi^1}{\JE}$.

Let $\x{\mu}$, $\mu = 1, \ldots , n+1$, be a local system in $M$ and $\y{A}$, $A = 1, . . . ,N$ a local system in the fibers; that is, $\{\x{\mu}, \y{A}\}$ is a coordinate system adapted to the bundle. In these coordinates, a local section $\sect :\mathcal{U} \subset M\rightarrow E$ is written as $\sect(x) = (\x{\mu}, \sect^A(x))$, that is, $\sect(x)$ is given by functions $\y{A} = \sect^A(x)$. This local system $(\x{\mu},\, \y{A})$ allows to construct a local system $(\x{\mu}, \y{A}, \v{\mu}{A} )$ in $\JE$, where $\v{\mu}{A}$ are defined as follows: if $\ptJE \in \JE$, with $\pi^1(\ptJE) = \ptE$ and $\pi(\ptE) = \b x$, let $\sect :\mathcal{U} \subset M \rightarrow E$, $\y{A} = \sect^A$, be a representative of $\ptJE$, then
\begin{equation*}
\v{\mu}{A} (\ptJE) =\left.\left(\derp{\sect^A}{\x{\mu}}\right)\right|_{\b x}.
\end{equation*}
These systems are called natural local systems in $\JE$. In one of them, we have
\begin{equation*}
\prolonge{\sect} (\b x)=(\x\mu(\b x),\sect^A(\b x),\derp{\sect^A}{\x{\mu}}(\b x)).
\end{equation*}

Regarding to the tangent spaces basis,  $\basex{\mu}$ denotes the basis on $TM$, and $\basey{A}$ the one on the tangent space of the fiber, $TF$. That is $\{\basex{\mu}, \basey{A}\}$ is a basis of $TE$. The dual basis are written $\dx{\mu}$ and $\dy{A}$ and are defined by $\dx{\mu}(\basex{\nu}) = \delta_\nu^\mu$ and $\dy{A}(\basey{B})=\delta_{B}^A$.
If the point $\ptJE=(x^{\mu},\y{A},\v{\mu}{A})$ of the 1-jet bundle $\JE$ is over $\ptE=(x^{\mu},\y{A})$ then it exists a section $\sect$ representative of that point such that $\v{\mu}{A}=\dy{A}(\derp{\sect}{x^{\mu}})$.  The basis associated with the velocity coordinates, $\v{\mu}{A}$, will be denoted $\basev{A}{\mu}$ and its dual basis $\dv{\mu}{A}$.

Given a section $\prolonge\sect$ in the one-jet fiber bundle $\JE$,
any vector $X$ of the tangent space $T_{\ptJE} \JE$ at point  $\ptJE$  may be expressed as a sum of a tangent vector to the section $\prolonge\sect$ and a vertical vector
\begin{equation*}
X=X_\sect+X^{v}, \text{ where } X_\sect=T_\ptE(\prolonge\sect \circ \jpi ) (X) \text{ and }
 X^{v}=X-X_\sect.
\end{equation*}
The vector $X_\sect$ is obtained by the tangent map $X_\sect=T_\ptE \prolonge\sect (X_M)$ where the vector $X_M$  is the projection on the base $M$ of  $X$ given by $X_M=T_\ptE  \jpi ( X)$ (see fig. \ref{fig:HS}).

\subsection{Canonical form and holonomy}\label{subsect:holonomy}
The bundle $\JE$ is endowed with a canonical geometric structure, $\fcv$. To be defined, this canonical form needs the concept of vertical differentiation:
\begin{definition}[Vertical differential]
Let $\sect:M \rightarrow E$ be a section of $\pi$, $ \b x \in M$ and $ \b y= \sect (\b x)$. The vertical differential of
the section $\sect$ at the point $\ptE \in E$ is the map
\begin{eqnarray*}
\dd^V_\ptE \sect : T_\ptE E &\rightarrow & V_\ptE(\pi)\\
u  &\mapsto  &u - T_\ptE(\sect \circ \pi ) u.
\end{eqnarray*}
\end{definition}
If $(\x{\mu}, \y{A})$ is a natural local system of $E$ and $\sect = (\x{\mu}, \sect^A(\x{\mu}))$, then
\begin{align*}	
\dd^V_\ptE \sect(\basex{\mu})=-\derp{\sect^A}{\x\mu}\bigg|_\ptE \basey{A},\quad \quad
\dd^V_\ptE \sect(\basey{A})=\basey{A}.
\end{align*}
As $\dd^V_\ptE \sect$ depends only on $\prolonge{\sect}(\pi(\ptE))$, the vertical differential can be lifted to $\JE$ in the following way:
\begin{definition}[Canonical form (or contact form)]\label{def:canForm}
Consider $\ptJE \in \JE$ with $\ptJE \xrightarrow{\pi^1} \ptE \xrightarrow{\pi} \b x$ and $\bar{u} \in T_{\ptJE}\JE$. The structure canonical form of $\JE$ is a vectorial 1-form $\fcv$ in $\JE$ with values on $V(\pi)$ defined by
\begin{align*}
\fcv(\ptJE; \bar{u})= \dd^V_\ptE \sect (T_{\ptJE}\pi^1(\bar{u}))
\end{align*}
where the section $\sect$ is a representative of $\ptJE$.
\end{definition}
This expression is well defined and does not depend on the representative $\sect$ of $\ptJE$.
On a natural local coordinate system, the contact form $\fcv(\ptJE; .)$ at point $\ptJE=(x^{\mu},\y{A},\v{\mu}{A})$ may be evaluated on the basis vectors $\basex{\mu}$, $\basey{A}$ and $\basev{A}{\mu}$ to give
\begin{align*}
\fcv(\ptJE; \basex{\mu})=-\v{\mu}{A}\basey{A},\quad \quad
\fcv(\ptJE; \basey{A})=\basey{A},\quad \quad
\fcv(\ptJE; \basev{A}{\mu})=0.
\end{align*}
Then, for an arbitrary vector $\bar{u}=\alpha^{\mu}\basex{\mu}+\beta^A\basey{A}+\gamma^A_{\mu}\basev{A}{\mu}$ of $\Chp{\JE}$
\begin{equation*}
\fcv(\ptJE; \bar{u})=
\alpha^{\mu}\fcv(\ptJE; \basex{\mu})+\beta^A\fcv(\ptJE; \basey{A})+\gamma^A_{\mu}\fcv(\ptJE; \basev{A}{\mu})
=\left(\beta^A-\v{\mu}{A}\alpha^{\mu}\right)\basey{A}.
\end{equation*}
From the above calculations, it is clear that $\fcv$ is differentiable and is given by
\begin{align}\label{anx:eqFC}
\fcv_{\ptJE}= (\dy{A}-\v{\mu}{A} \dx{\mu}) \otimes \basey{A}
\end{align}
in a natural local system.

Holonomic sections can be characterized using this canonical form as follows (see~\cite{Echeverria00} p. 8)
\begin{proposition}[Holonomic section]\label{prop:holsec}
Let $\psi:M \rightarrow \JE$ be a section of $\jpi $. The necessary and sufficient condition for $\psi$
to be a holonomic section is that $\psi^*\fcv = 0.$
\end{proposition}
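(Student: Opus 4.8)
The plan is to reduce the statement to a local computation in a natural chart and read off the equivalence from the coordinate expression~\refp{anx:eqFC} of the contact form. First I would fix a natural local system $(\x{\mu},\y{A},\v{\mu}{A})$ on $\JE$ and write the section $\psi$ of $\jpi$ as $\psi(\b x)=(\x\mu,\psi^A(\b x),\psi^A_\mu(\b x))$. Since $\psi$ is a section of $\jpi$, its projection $\sect:=\pi^1\circ\psi=(\x\mu,\psi^A(\b x))$ is automatically a section of $\pi$, and by the very notion of canonical prolongation $\psi$ is holonomic precisely when $\psi=\prolonge{\sect}$, i.e.\ when in these coordinates $\psi^A_\mu=\derp{\psi^A}{\x\mu}$ for all $A,\mu$. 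I would stress at the outset that both the vanishing of $\psi^*\fcv$ and the holonomy property are chart-independent, so it suffices to verify the equivalence in one natural system covering an arbitrary point of $M$.

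The core step is the pullback computation. Starting from $\fcv=(\dy{A}-\v{\mu}{A}\dx{\mu})\otimes\basey{A}$, I would pull back the scalar one-form factor along $\psi$ while transporting the $V(\pi)$-valued factor $\basey{A}$ as the value along $\sect$. Because $\psi$ is a section of $\jpi$ one has $\psi^*\dx{\mu}=\dx{\mu}$ and $\v{\mu}{A}\circ\psi=\psi^A_\mu$, whereas $\psi^*\dy{A}=\dd(\y{A}\circ\psi)=\derp{\psi^A}{\x\mu}\dx{\mu}$. Substituting yields $\psi^*\fcv=\big(\derp{\psi^A}{\x\mu}-\psi^A_\mu\big)\dx{\mu}\otimes\basey{A}$. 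Since the $\dx{\mu}\otimes\basey{A}$ are pointwise linearly independent, this one-form vanishes if and only if $\psi^A_\mu=\derp{\psi^A}{\x\mu}$, which is exactly the holonomy condition identified above; this establishes both implications simultaneously.

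A more intrinsic route, which I would include as a coordinate-free check, proceeds directly from Definition~\ref{def:canForm}: for $\b x\in M$ and $v\in T_{\b x}M$ one has $(\psi^*\fcv)(\b x;v)=\dd^V_{\ptE}\sigma\big(T\pi^1(T\psi\,v)\big)$, where $\ptE=\sect(\b x)$ and $\sigma$ is any representative of the jet $\psi(\b x)$ (so $\sigma(\b x)=\sect(\b x)$). Using $T\pi^1\circ T\psi=T\sect$ together with the defining formula $\dd^V_\ptE\sigma(u)=u-T_\ptE(\sigma\circ\pi)u$ and $\pi\circ\sect=\mathrm{id}_M$, this collapses to $T_{\b x}\sect(v)-T_{\b x}\sigma(v)$. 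Hence $\psi^*\fcv=0$ iff $T_{\b x}\sect=T_{\b x}\sigma$ for every $\b x$, i.e.\ iff the representative of each jet $\psi(\b x)$ shares the $1$-jet of $\sect$, which is precisely $\psi=\prolonge{\sect}$. The only genuine subtlety — and the point I would handle with care — is the vector-valued nature of $\fcv$: one must pull back only the covariant (one-form) leg while keeping the $V(\pi)$-valued leg, and then invoke the pointwise linear independence of the $\dx{\mu}\otimes\basey{A}$ to obtain the converse direction, rather than merely the easy ``holonomic $\Rightarrow$ $\psi^*\fcv=0$'' implication.
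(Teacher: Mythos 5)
Your proof is correct. The paper itself gives no proof of this proposition --- it simply states it with a citation to Echeverria-Enriquez et al. --- and the standard argument in that reference is precisely your coordinate computation $\psi^*\fcv=\bigl(\derp{\psi^A}{\x{\mu}}-\psi^A_\mu\bigr)\dx{\mu}\otimes\basey{A}$ together with the linear independence of the $\dx{\mu}\otimes\basey{A}$; your intrinsic check via $\dd^V_\ptE\sigma$ and $T\pi^1\circ T\psi=T\sect$ is a sound bonus that matches the paper's Definition~\ref{def:canForm} directly.
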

This proposition can also be interpreted in another way
\begin{proposition}[Tangent space to holonomic section]\label{prop:holsec2}
Any vector $ \bar X$ tangent to a holonomic section $\psi$ belongs to the kernel of the contact form
\begin{equation*}
\fcv\big|_{\ptJE}(\bar X) = 0,\quad \forall \bar X\in T_{\ptJE} \psi
\end{equation*}
\end{proposition}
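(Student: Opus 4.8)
The plan is to deduce this statement directly from Proposition~\ref{prop:holsec}, using only the elementary relation between the pullback of a form by a section and the evaluation of that form on the vectors tangent to the section's image. The starting observation is that a section $\psi$ of $\jpi$ is an embedding, so its image is a submanifold of $\JE$ and, at each point $\ptJE=\psi(\b x)$, the tangent space $T_{\ptJE}\psi$ coincides exactly with the image of the tangent map $T_{\b x}\psi\colon T_{\b x}M\to T_{\ptJE}\JE$. Hence every $\bar X\in T_{\ptJE}\psi$ can be written as $\bar X=T_{\b x}\psi(w)$ for some $w\in T_{\b x}M$.

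Next I would invoke the defining property of the pullback of the (vector-valued) contact form: for $w\in T_{\b x}M$ one has $(\psi^*\fcv)(w)=\fcv(T_{\b x}\psi(w))$. Combining this with the previous observation gives
\[
\fcv\big|_{\ptJE}(\bar X)=\fcv\big(T_{\b x}\psi(w)\big)=(\psi^*\fcv)(w).
\]
Since $\psi$ is assumed holonomic, Proposition~\ref{prop:holsec} yields $\psi^*\fcv=0$, and therefore $\fcv|_{\ptJE}(\bar X)=0$ for every $\bar X\in T_{\ptJE}\psi$, which is the assertion. Because $T_{\b x}\psi$ is onto $T_{\ptJE}\psi$, the implication can be read in both directions, so this proposition and Proposition~\ref{prop:holsec} in fact encode the same condition; the present statement is merely its pointwise (rather than pullback) formulation.

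The only point requiring care — and the step I would treat as the main, albeit minor, obstacle — is making precise the identification $T_{\ptJE}\psi=\mathrm{Im}(T_{\b x}\psi)$ and checking that the pullback of a $V(\pi)$-valued form behaves in the usual way; here the values stay in $V(\pi)$ because $\psi$ covers the identity on $M$, so no transport of the target is needed. As an independent cross-check I would run the purely local computation: writing a holonomic section as $\psi=(\x{\mu},\sect^A,\derp{\sect^A}{\x{\mu}})$ and a generic tangent vector as $\bar X=T_{\b x}\psi(w)$ with $w=w^\mu\basex{\mu}$, the expression~\eqref{anx:eqFC} for $\fcv$ (which has no $\dv{\mu}{A}$ component, so the velocity part of $\bar X$ is irrelevant) gives
\[
\fcv\big|_{\ptJE}(\bar X)=\left(w^\mu\,\derp{\sect^A}{\x{\mu}}-\v{\mu}{A}\,w^\mu\right)\basey{A},
\]
which vanishes identically on the section, since there $\v{\mu}{A}=\derp{\sect^A}{\x{\mu}}$.
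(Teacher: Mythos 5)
Your argument is correct and coincides with the paper's own proof: both write $\bar X = T\psi(w)$ for $w\in T_{\b x}M$, invoke the pullback identity $(\psi^*\fcv)(w)=\fcv(T\psi(w))$, and conclude from Proposition~\ref{prop:holsec} that $\psi^*\fcv=0$ forces $\fcv\big|_{\ptJE}(\bar X)=0$. The coordinate cross-check you add is consistent with the paper's expression~(\ref{anx:eqFC}) but is not needed beyond the pullback argument.
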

This can be proved by using the pullback definition of the contact form: for any $ u\in T_xM$, $(\psi^*\fcv)\Big|_x (u) = \fcv\Big|_{\ptJE} (T\psi (u))=0$. This shows the result since the tangential map, $\bar X = T\psi(u)$, maps to a tangent vector to the section $\psi$. In particular, if $u$ is one of the basis vector $\basex{\mu}$ of the base $M$, we have $\fc{A}(X_\mu)=0$ with $X_\mu=T\psi(\basex{\mu})$.

\subsection{Principle of least action}\label{sec:PLA}

From now on, $M$ is an oriented manifold and $\vol \in \Lambda^{n+1} (M)$
 is a fixed volume $(n + 1)$-form on $M$
 \begin{equation}\label{eq:volume}
\vol=dx^1\wedge dx^2\wedge\ldots\wedge dx^{n+1}.
\end{equation}
 With these elements well-defined (see Arturo Echeverria-Enriquez \& al~\cite{Echeverria00}) the Lagrangian formalism is used. The  Lagrangian form is written as
 \begin{equation*}
\Lag=\Ld (x^{\mu}, y^A, v^A_{\mu} )\, \vol,\quad \Ld \in \C^\infty (\JE),
\end{equation*}
in a natural local system $(x^{\mu}, y^A, v^A_{\mu} )$ on $\JE$.
So, we can define
\begin{definition}[The Hamilton principle]\label{def:HP}
Let $((E,M; \pi),\Lag)$ be a Lagrangian system. Let $\Gamma_c(M,E)$ be the set of compactly
supported sections of $\pi$ and consider the (action) map 
\begin{eqnarray*}
\mathcal{A}  : \Gamma_c(M,E) &\rightarrow & \R\\
\sect  &\mapsto&\int_{\mathcal{U}}\pullback{\sect}\Lag,\quad \mathcal{U}\subset M
\end{eqnarray*}
The variational problem posed by the Lagrangian form $\Lag$ is the problem of searching for the critical (or stationary) sections of the functional $\mathcal{A}$.
\end{definition}

Following Cartan’s idea, the Poincaré-Cartan $(n+1)$-form is obtained from the Hamilton principle. To do so, we look for “stationary” action with respect to a diffeomorphism that preserves the contact module (see next paragraph) - that is to say, with respect to variations given by a one-parameter transformation group associated to an arbitrary vector field $Z\in \Chp{E}$ (see appendix \ref{anx:deltaA} for details). So, we compute
\begin{equation*}
\delta \mathcal{A} =\left.\der{}{\Eps}\right|_{\Eps = 0}\mathcal{A}(\sect_\Eps)
= \lim_{\Eps\rightarrow 0}
\frac{\mathcal{A}(\sect_\Eps)- \mathcal{A}(\sect)}{\Eps}.
\end{equation*}
Indeed, even in this more complicated circumstance, a similar expression to the Cartan's lesson equation~(\ref{eq:dA1}) may be obtained as 
\begin{equation}\label{eq:dA2}
\delta\mathcal{A}=\int_{\partial\mathcal{U}}\pullback{\sect}(j^1Z\contr \Theta_{\Lag})
-
\int_{\mathcal{U}}\pullback{\sect}\left[ \z{A} \left(
\dd \left(\derp{\Ld}{\v{\mu}{A}}\right)-
\frac{1}{n+1}\derp{\Ld}{\y{A}}\dx{\mu}\right)\wedge\dnx{n}{\mu}\right]
\end{equation}
if the (Lagrangian) Poincaré-Cartan $(n+1)$-form
\begin{equation}\label{eq:PC}
\Theta_{\Lag}=\derp{\Ld}{v^A_{\mu}}\dy{A}\wedge d^{n}x_{\mu}-
\left(\derp{\Ld}{v^A_{\mu}}v^A_{\mu}-
\Ld\right)\vol,
\end{equation}
is introduced with the n-form $d^{n}x_{\mu}=\basex{\mu}\contr \vol$ (see appendix \ref{app:OUF}). The Legendre transformation, with new variables 
\begin{equation}\label{eq:legendre}
p^{\mu}_A=\derp{\Ld}{v^A_{\mu}},\quad \Ha=\derp{\Ld}{v^A_{\mu}}v^A_{\mu}-\Ld,
\end{equation}
follows in a natural way, as it appears clearly in~(\ref{eq:PC}). 

Looking at the preceding calculus, $Z=\alpha^{\mu}\basex{\mu}+\beta^A\basey{A}$ represents the (arbitrary) vector field tangent to a variation. It reflects a local one-parameter transformation group, ${\tau_{\Eps}^Z}$, used to search for the critical (or stationary) sections of the functional $\mathcal{A}$. As it can be seen, the vector field $Z$ needs to be extended to $j^1Z$ in the whole jet-bundle $\JE$. 

The definition generally used to compute the lifted vector field $j^1Z$ (see appendix \ref{sect:jet-pContactInv}) is to say that its flow leaves invariant the contact module $\fcv_{\ptJE}=\fc{A}\otimes \basey{A}$ given in coordinates at point $\ptJE=(x^{\mu},\y{A},\v{\mu}{A})$ by 
\begin{align}\label{eq:contact}
\fc{A}=\dy{A}-\v{\mu}{A} \dx{\mu}.
\end{align}
We rather prefer, in order to study particular configuration spaces, a more geometric definition (see definition \ref{def:JetProl} of appendix \ref{sect:jet-pGeom}) saying that the holonomic lift of a section stay holonomic under the action of a contact transformation. For short, since the Hamilton principle searches an optimum between holonomic sections, the variation process must generate only this type of sections. Both techniques (\ref{sect:jet-pContactInv} and \ref{sect:jet-pGeom}) give the same final result:
\begin{proposition}[One-jet prolongation (or lift) of vector fields]\label{anx:liftVF:geom}
Considering the natural basis $\bpm\basex{\mu}&\basey{A}&\basev{A}{\mu}\epm$ of $T\JE$, let  $Z=\alpha^{\mu}\basex{\mu}+\beta^A\basey{A}$ be a vector field of $\Chp{E}$. Its one-jet prolongation on $T\JE$, at point $\ptJE=(x^{\mu},\y{A},\v{\mu}{A})$, is the vector field
\begin{equation*}
j^1Z = \alpha^{\mu}\basex{\mu}+\beta^A\basey{A}+
\left(\derp{\z{A}}{\x{\mu}}+ \v{\mu}{B} \derp{\z{A}}{\y{B}}\right)
\basev{A}{\mu}
 ,\quad\text{with} \quad\z{A}=j^1Z\contr \fc{A}=\beta^A- \v{\nu}{A}\alpha^{\nu}.
\end{equation*}
\end{proposition}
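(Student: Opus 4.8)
The plan is to follow the contact-invariance route of appendix \ref{sect:jet-pContactInv}: the one-jet prolongation is characterized as the unique vector field on $\JE$ that projects onto $Z$ under $T\pi^1$ and whose flow leaves the contact module invariant. By Proposition \ref{prop:holsec} this invariance is precisely the requirement that the lifted flow carry holonomic sections to holonomic sections, which is the geometric definition \ref{def:JetProl} of appendix \ref{sect:jet-pGeom}; so the two constructions in the referenced appendices necessarily coincide, which is the ``both techniques give the same result'' claim. Since $j^1Z$ must be $\pi^1$-related to $Z$, its $\basex{\mu}$ and $\basey{A}$ components are forced to be $\alpha^{\mu}$ and $\beta^A$, and the only freedom lies in the velocity directions. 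I would therefore write $j^1Z = \alpha^{\mu}\basex{\mu}+\beta^A\basey{A}+\gamma^A_{\mu}\basev{A}{\mu}$ with unknown coefficients $\gamma^A_{\mu}$ and solve for them by imposing the invariance condition.

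The central computation is the Lie derivative of the contact forms $\fc{A}=\dy{A}-\v{\mu}{A}\dx{\mu}$ along $j^1Z$, carried out with Cartan's magic rule $\Lie{j^1Z}\fc{A}=j^1Z\contr\dd\fc{A}+\dd(j^1Z\contr\fc{A})$. First I would record the contraction $j^1Z\contr\fc{A}=\beta^A-\v{\nu}{A}\alpha^{\nu}=\z{A}$, which fixes the notation appearing in the statement, together with $\dd\fc{A}=-\dv{\mu}{A}\wedge\dx{\mu}$. The interior product $j^1Z\contr\dd\fc{A}$ then produces $-\gamma^A_{\mu}\dx{\mu}$ together with a term $+\alpha^{\mu}\dv{\mu}{A}$, while $\dd\z{A}$ produces the equal and opposite $-\alpha^{\mu}\dv{\mu}{A}$. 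The essential point is that these velocity differentials cancel, leaving $\Lie{j^1Z}\fc{A}$ as a one-form spanned only by $\dx{\nu}$ and $\dy{B}$.

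To conclude, I would impose that $\Lie{j^1Z}\fc{A}$ stay in the contact module, i.e. $\Lie{j^1Z}\fc{A}=M^A_B\fc{B}=M^A_B(\dy{B}-\v{\nu}{B}\dx{\nu})$ for some functions $M^A_B$. Matching the coefficient of $\dy{B}$ reads off $M^A_B=\derp{\z{A}}{\y{B}}$, and matching the coefficient of $\dx{\nu}$ then yields $\gamma^A_{\nu}=\derp{\z{A}}{\x{\nu}}+\v{\nu}{B}\derp{\z{A}}{\y{B}}$, exactly the claimed velocity component, namely the ``total derivative'' of $\z{A}$ along $\basex{\nu}$.

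The main obstacle is not any single step but keeping the bookkeeping of the $\dv{\mu}{A}$ terms transparent enough to see their cancellation: that cancellation is precisely what makes the linear system for $\gamma^A_{\mu}$ solvable and the prolongation well defined, and afterwards one must recognize that the two separate pieces $\derp{\z{A}}{\x{\nu}}$ and $\v{\nu}{B}\derp{\z{A}}{\y{B}}$ reassemble into the compact expression stated. As an independent cross-check I would recover the same formula through the flow picture of appendix \ref{sect:jet-pGeom}, differentiating at $\Eps=0$ the way the velocities $\v{\mu}{A}$ transform under the lifted flow $j^1\tau_{\Eps}^Z$; the chain rule reproduces $\gamma^A_{\mu}$ and confirms that the contact-invariance and holonomy-preserving definitions agree.
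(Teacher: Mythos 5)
Your argument is correct, and the computation goes through: the $\dv{\mu}{A}$ terms do cancel between $j^1Z\contr\dd\fc{A}$ and $\dd\bigl(j^1Z\contr\fc{A}\bigr)$, and matching coefficients against $M^A_B\fc{B}$ forces $M^A_B=\derp{\z{A}}{\y{B}}$ and then $\gamma^A_{\mu}=\derp{\z{A}}{\x{\mu}}+\v{\mu}{B}\derp{\z{A}}{\y{B}}$. However, this is not the route the paper actually details. The paper mentions your contact-module argument in appendix~\ref{sect:jet-pContactInv} only in passing (calling it ``straightforward'' but declining to write it out, on the grounds that the component condition $\Lie{j^1Z}\fc{A}=\zeta^A_B\fc{B}$ is not transparent to the authors), and instead proves Proposition~\ref{anx:liftVF:geom} in appendix~\ref{sect:jet-pGeom} by the flow picture you relegate to a cross-check: it introduces the holonomic map, pushes the normalized tangent vectors $X_\mu=\basex{\mu}+\v{\mu}{B}\basey{B}$ forward along the flow $\tau^Z_\Eps$, expands to first order via the Lie bracket, and lands on the intrinsic formula $\gamma^A_{\mu}=\fc{A}\big|_{\ptJE}\bigl(\crochetL{X_\mu}{Z}\bigr)$ before evaluating the bracket in coordinates. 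Your version is shorter and purely algebraic; what the paper's version buys is precisely what matters for the rest of the article: the geometric definition transfers verbatim to the principal $G$-bundle case (Proposition~\ref{prop:liftG}, appendix~\ref{app:LVFG}), where the contact form involves the Maurer--Cartan form and the Lie derivative of $\mcfL^A$ produces the extra bracket term $\crochetL{\boldsymbol{\xiL}_{\mu}}{\boldsymbol{\beta}}^A$ — a generalization that is much less obvious to run through the contact-ideal bookkeeping. One small caveat on your framing: preservation of the contact module alone does not single out $j^1Z$; you need, as you do state, the requirement that the lift be $\pi^1$-projectable onto $Z$, which is what pins the $\basex{\mu}$ and $\basey{A}$ components before the invariance condition determines the $\basev{A}{\mu}$ components uniquely.
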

In section \ref{sec:III}, the geometric definition \ref{def:JetProl} of appendix \ref{sect:jet-pGeom} will be more convenient to study the case where the configuration space is identified with a Lie group.

\subsection{Euler-Lagrange equations}\label{parag:EL}

On one hand, choosing  in~(\ref{eq:dA2}) a vector field $Z$ that vanishes on the boundary $\partial {\mathcal{U}}$ gives the Euler-Lagrange field equations $\pullback{\sect}\left[ 
\left(\dd \left(\derp{\Ld}{\v{\mu}{A}}\right)-
\frac{1}{n+1}\derp{\Ld}{\y{A}}\dx{\mu}\right)\wedge\dnx{n}{\mu}\right]=0$, in other words
\begin{equation}\label{eq:EL2}
\derp{}{x_\mu}\derp{\Ld}{v^A_{\mu}}\bigg|_{\prolonge{\sect}}-
\derp{\Ld}{y^A}\bigg|_{\prolonge{\sect}}=0,\quad A=1,\ldots,N.
\end{equation}
\paragraph*{Remark} The Euler-Lagrange 1-form 
\begin{equation}\label{eq:ELF}
\mathcal{T}_{A}^{\mu}=\dd \left(\derp{\Ld}{\v{\mu}{A}}\right)-
\frac{1}{n+1}\derp{\Ld}{\y{A}}\dx{\mu}
\end{equation}
may be related to its Hamiltonian version given later by~\refp{eq:1forms}(b) in the section~\ref{sec:CHF} dedicated to the Hamiltonian formalism: $\mathcal{T}^{A}_{\mu}=dp^{\mu}_A + \frac{1}{n+1}\derp{\Hd}{y^A}dx^{\mu}$.$\diamond$
\subsection{The variation theorem}\label{app:VT}
On the other hand, the Cartan's discussion, leading to the integral invariant, similar to the expression \refp{eq:VA}, may also be obtained by choosing sections $\sect$ verifying the Euler-Lagrange equations~(\ref{eq:EL2}) in order to cancel the second integral of~(\ref{eq:dA2}).
The Cartan's lesson is traduced in modern language by the variation theorem using the Lagrangian pre-multisymplectic $(n+2)$-form on $\JE$, $\PSL =-\dd \PCL $. Starting from the
 convenient expression $\PCL=\derp{\Ld}{\v{\mu}{A}}\fc{A}\wedge\dnx{n}{\mu}+\Lag$, it may be written (see end of appendix \ref{anx:deltaA})
 \begin{equation}\label{PSL}
 \PSL =\fc{A}\wedge\mathcal{T}_{A}^{\mu}\wedge\dnx{n}{\mu},\quad\scalebox{0.75}{or in coordinates $
  \PSL =\dy{A}\wedge \dd\left(\derp{\Ld}{\v{\mu}{A}}\right)\wedge d^{n}x_{\mu}+
 \left(\v{\mu}{A}\dd\left(\derp{\Ld}{\v{\mu}{A}}\right)-\derp{\Ld}{\y{A}}\dy{A}\right)\wedge\vol
 $}.
 \end{equation}
 The variation theorem is as follows
\begin{theorem}[Variation theorem]\label{th:var}
The following assertions regarding a section $\sect$ of the bundle $\pi: E\rightarrow M$ are equivalent
\begin{enumerate}[(i)]
\item $\sect$ is a stationary point of $\mathcal{A}=\int_{\mathcal{U}}\pullback{\sect}\Lag$;
\item the Euler–Lagrange equations~(\ref{eq:EL2}) hold in coordinates;
\item for any vector field $W$ on $\JE$ 
\begin{equation}\label{eq:C}
\pullback{\sect}(W\contr \Omega_{\Lag}) =0;
\end{equation}
\end{enumerate}
where $\prolonge{\sect}$ is a holonomic section.
\end{theorem}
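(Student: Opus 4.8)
The plan is to establish the stated equivalences by proving $(i)\Leftrightarrow(ii)$ with the classical variational argument already prepared in~\refp{eq:dA2}, and $(ii)\Leftrightarrow(iii)$ by a direct computation of $\pullback{\sect}(W\contr\PSL)$ that exploits the holonomy of $\prolonge{\sect}$.

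For $(i)\Leftrightarrow(ii)$ I would restrict to variations $Z$ with compact support in $\mathcal{U}$, so that $j^1Z$ vanishes on $\partial\mathcal{U}$ and the boundary integral of~\refp{eq:dA2} drops out. Stationarity of $\mathcal{A}$ then reads $\int_{\mathcal{U}}\pullback{\sect}\big[\z{A}\,\mathcal{T}_{A}^{\mu}\wedge\dnx{n}{\mu}\big]=0$ for every such $Z$, with $\z{A}=\beta^A-\v{\nu}{A}\alpha^{\nu}$ ranging over all compactly supported functions (take $\alpha^{\nu}=0$ to realize an arbitrary $\beta^A$). Writing $\pullback{\sect}(\mathcal{T}_{A}^{\mu}\wedge\dnx{n}{\mu})$ as a multiple of $\vol$, the fundamental lemma of the calculus of variations forces that multiple to vanish for each $A$, which is precisely~\refp{eq:EL2}; conversely, if~\refp{eq:EL2} holds the integrand vanishes identically and $\sect$ is stationary.

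For $(ii)\Leftrightarrow(iii)$ I would start from the factored expression $\PSL=\fc{A}\wedge\mathcal{T}_{A}^{\mu}\wedge\dnx{n}{\mu}$ of~\refp{PSL} and expand the interior product by the graded Leibniz rule,
\begin{equation*}
W\contr\PSL=(W\contr\fc{A})\,\mathcal{T}_{A}^{\mu}\wedge\dnx{n}{\mu}
-(W\contr\mathcal{T}_{A}^{\mu})\,\fc{A}\wedge\dnx{n}{\mu}
+\fc{A}\wedge\mathcal{T}_{A}^{\mu}\wedge(W\contr\dnx{n}{\mu}).
\end{equation*}
The decisive point is that $\prolonge{\sect}$ is holonomic, hence $\pullback{\sect}\fc{A}=0$ by Proposition~\ref{prop:holsec}; since $\pullback{\sect}$ is an algebra homomorphism, the last two terms—each carrying a factor $\fc{A}$—die on pullback, leaving
\begin{equation*}
\pullback{\sect}(W\contr\PSL)=\pullback{\sect}(W\contr\fc{A})\;\pullback{\sect}(\mathcal{T}_{A}^{\mu}\wedge\dnx{n}{\mu}).
\end{equation*}
A short coordinate computation using $\dx{\nu}\wedge\dnx{n}{\mu}=\delta^{\nu}_{\mu}\vol$ (and its trace $\dx{\mu}\wedge\dnx{n}{\mu}=(n+1)\vol$) identifies $\pullback{\sect}(\mathcal{T}_{A}^{\mu}\wedge\dnx{n}{\mu})$ with the Euler–Lagrange expression of~\refp{eq:EL2} multiplied by $\vol$. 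Thus $(ii)$ makes the right-hand side vanish for every $W$, giving $(iii)$; conversely, choosing $W=\basey{B}$ in $(iii)$, for which $W\contr\fc{A}=\delta^{A}_{B}$, isolates each Euler–Lagrange component and returns $(ii)$.

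The genuinely routine parts are the graded Leibniz expansion and the contraction identities for $\dnx{n}{\mu}$. The step carrying the real content—and the reason the statement insists that $\prolonge{\sect}$ be holonomic—is the vanishing $\pullback{\sect}\fc{A}=0$, which collapses the three-term interior product to the single term proportional to the Euler–Lagrange operator. The only place where care is truly needed is bookkeeping the signs in that expansion together with the summation over the repeated index $\mu$.
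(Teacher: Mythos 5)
Your proof is correct, and it arrives at the same pivotal identity as the paper, namely $\pullback{\sect}(W\contr \PSL)=\pullback{\sect}(W\contr\fc{A})\,\pullback{\sect}(\mathcal{T}_{A}^{\mu}\wedge\dnx{n}{\mu})$, but by a somewhat different route and with a more complete logical structure. The paper explicitly proves only the two implications $(i)\Rightarrow(ii)$ (via the appendix computation of $\delta\mathcal{A}$) and $(i)\Rightarrow(iii)$ (via Lemma~\ref{lemmeXOmega}); for the latter it evaluates the $(n+2)$-form $\PSL$ on the pair $(W,\mvec{X})$, where $\mvec{X}$ is the multivector tangent to $\prolonge{\sect}$, and invokes the corollary of Lemma~\ref{lemma:M} together with $\fc{A}(X_\alpha)=0$ to kill the unwanted term. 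You instead expand $W\contr\PSL$ by the graded Leibniz rule and let $\pullback{\sect}\fc{A}=0$ annihilate the two terms still carrying a free factor $\fc{A}$ — an equivalent computation in different clothing, and arguably cleaner since it never needs the sign bookkeeping of the multivector lemma. What your version buys beyond the paper's is the genuine equivalence: by proving $(ii)\Leftrightarrow(iii)$ directly (the converse obtained by testing with $W=\basey{B}$, for which $W\contr\fc{A}=\delta^{A}_{B}$), and $(i)\Leftrightarrow(ii)$ via the fundamental lemma of the calculus of variations, you close the cycle of implications that the paper leaves open. The one point worth stating explicitly in your write-up is that the collapse $\pullback{\sect}\fc{A}=0$ uses only that $\prolonge{\sect}$ is the holonomic lift of $\sect$ (Proposition~\ref{prop:holsec}), not any stationarity hypothesis, so there is no circularity in using it for both directions of $(ii)\Leftrightarrow(iii)$.
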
 

The followings aim at partially proving the theorem \ref{th:var}. Actually only the needed implications of the theorem are discussed here, that is $(i)\Rightarrow (ii)$ and $(i) \Rightarrow (iii)$.

Let $\sect$ be a section of the bundle $\pi : E\rightarrow M$.

\begin{itemize}
\item $(i) \Rightarrow (ii)$: If $\sect$ is a stationary point of $\mathcal A = \int_{\mathcal{U}} \pullback{\sect} \mathcal L$, then the Euler-Lagrange equations hold in coordinates.
\begin{proof}
The computation of the variation of the action can be found in appendix \ref{anx:deltaA} and leads to the Euler-Lagrange equations~\refp{eq:EL2}.
\end{proof}

\item $(i) \Rightarrow (iii)$: If $\sect$ is a stationary point of $\mathcal A = \int_{\mathcal{U}} \pullback{\sect} \mathcal L$, then $\pullback{\sect} (W\contr \PSL) = 0$ for any vector field $W$ on $\JE$.\\

To prove this implication, one needs the following lemma
\begin{lemma}[]\label{lemmeXOmega}
If $\sect$ is a section of $\pi$ and if either $W$ is tangent to the image of $\prolonge{\sect}$ or $W$ is $\pi^1$-vertical, then
\begin{equation*}
\pullback{\sect}(W\contr \PSL) =0.
\end{equation*}
\end{lemma}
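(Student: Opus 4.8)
The plan is to exploit the factored expression $\PSL = \fc{A}\wedge\mathcal{T}_{A}^{\mu}\wedge\dnx{n}{\mu}$ from~\refp{PSL}, in which the contact forms $\fc{A}$ appear explicitly as wedge factors. The entire argument rests on the two characterisations of the contact structure already established: Proposition~\ref{prop:holsec} gives $\pullback{\sect}\fc{A}=0$ because $\prolonge{\sect}$ is holonomic, and Proposition~\ref{prop:holsec2} gives $\fc{A}(\bar X)=0$ for every $\bar X$ tangent to the image of $\prolonge{\sect}$. The idea is that whichever of the two hypotheses on $W$ holds, the contraction $W\contr\PSL$ either loses its $\fc{A}$ factor against a vanishing scalar or retains it, and in the latter situation the pullback kills it.

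First I would apply the antiderivation (Leibniz) rule for the interior product to the $(n+2)$-form $\PSL$, contracting with $W$ and isolating the degree-one factor $\fc{A}$:
\begin{equation*}
W\contr\PSL = (W\contr\fc{A})\,\mathcal{T}_{A}^{\mu}\wedge\dnx{n}{\mu} - \fc{A}\wedge\big(W\contr(\mathcal{T}_{A}^{\mu}\wedge\dnx{n}{\mu})\big).
\end{equation*}
This writes $W\contr\PSL$ as a first term carrying the scalar $W\contr\fc{A}=\fc{A}(W)$ and a second term in which $\fc{A}$ survives as a wedge factor.

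Then I would dispatch the second term uniformly: since $\fc{A}$ is a factor, pulling back by $\pullback{\sect}$ and invoking $\pullback{\sect}\fc{A}=0$ (Proposition~\ref{prop:holsec}) annihilates it in both cases. It remains to kill the first term, and this is where the two hypotheses diverge. If $W$ is tangent to the image of $\prolonge{\sect}$, then $\fc{A}(W)=0$ along that image by Proposition~\ref{prop:holsec2}, so the scalar coefficient vanishes at exactly the points where $\pullback{\sect}$ evaluates it. If instead $W$ is $\pi^1$-vertical, then in a natural chart $W$ has only $\basev{A}{\mu}$-components whereas $\fc{A}=\dy{A}-\v{\nu}{A}\dx{\nu}$ contains no $\dd\v{\mu}{A}$; hence $\fc{A}(W)=0$ \emph{identically} and the first term vanishes before any pullback. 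Either way $\pullback{\sect}(W\contr\PSL)=0$.

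The computation is otherwise routine; the only point requiring care is the bookkeeping of the two distinct roles played by the contact form — as the scalar $\fc{A}(W)$, which must be shown to vanish (pointwise along the section in the tangent case, identically in the vertical case), and as the residual wedge factor killed by $\pullback{\sect}\fc{A}=0$ — together with getting the antiderivation sign correct. I expect no genuine obstacle here: the factored form of $\PSL$ does essentially all of the work, and both cases collapse onto the same mechanism.
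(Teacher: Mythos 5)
Your proof is correct and follows essentially the same route as the paper's: both isolate the $\fc{A}(W)$ coefficient from a residual term still carrying the contact form, and kill the residual term by holonomy and the coefficient by the tangency/verticality hypothesis on $W$. The only cosmetic difference is that the paper evaluates $W\contr \PSL$ against the tangent multivector $\mvec{X}=(T\prolonge{\sect}(\basex{1}),\ldots,T\prolonge{\sect}(\basex{n+1}))$ and invokes $\fc{A}(X_\alpha)=0$ (Proposition~\ref{prop:holsec2}), whereas you keep the computation at the level of forms via the antiderivation rule and invoke $\pullback{\sect}\fc{A}=0$ (Proposition~\ref{prop:holsec}) --- two equivalent formulations of the same holonomy fact.
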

\begin{proof}

Let us consider a $(n+1)$-vector field $\mvec{X} = (X_1, \cdots, X_{n+1})$ tangent to the section $\prolonge{\sect}$ where each $X_{\alpha}$ is given by the tangential map $X_{\alpha}=T{\prolonge{\sect}}(\basex{\alpha})$ of the basis vector $\basex{\alpha}$. The pulling back of $W\contr \PSL$ is by definition
\begin{equation*}
\pullback{\sect}(W\contr \PSL)(\basex{1},\ldots,\basex{n+1}) =(W\contr \PSL)(\mvec{X})=\PSL(W, \mvec{X}).
\end{equation*}
A calculation using the multi-symplectic $(n+2)$-form $\PSL$ given by~(\ref{PSL}), lemma~\ref{lemma:M}  and its corollary (appendix~\ref{app:HF}) shows that
\begin{eqnarray*}
\PSL(W, \mvec{X})=\fc{A}(W)\left(\mathcal{T}^{\mu}_{A}\wedge\dnx{n}{\mu}\right)(\mvec{X})
+
(-1)^\alpha\fc{A}(X_\alpha)\left(\mathcal{T}^{\mu}_{A}\wedge\dnx{n}{\mu}\right)(W,\hat{\mvec{X}}_{\alpha})
\end{eqnarray*}
where $\mathcal{T}^{\mu}_{A}$ is the Euler-Lagrange 1-form given by~\refp{eq:ELF}. According to the holomonic criteria (proposition \ref{prop:holsec2}), the term $\fc{A}(X_{\alpha})$ obviously vanishes, so we have finally
\begin{equation}\label{eq:PSLproof}
\PSL(W, \mvec{X})=\fc{A}(W)\left(\mathcal{T}^{\mu}_{A}\wedge\dnx{n}{\mu}\right)(\mvec{X})
,\quad \forall W\in \Chp{\JE}.
\end{equation} 
\begin{itemize}
\item[$\diamond$] First, assume that $W$ is tangent to the image of $\prolonge{\sect}$ in $\JE$, that is, $ W=W_{\sect} = T{(\prolonge{\sect})} w$, for some vector $w$ on $TM$. For this choice, $\fc{A}(W)=0$ and then $ \PSL(W, \mvec{X})=0$.
\item[$\diamond$]Second, the contact form $\fcv$ vanishes on $\pi^1$-vertical vectors: $\fc{A}(W)=0$, for $W\in\chi^{V(\pi^1)} (\JE)$ (no component in $\dx{\mu}$ nor in $\dy{A}$).
\end{itemize}
\end{proof}
\begin{proof}
Taking into account lemma~\ref{lemmeXOmega}, the $(i) \Rightarrow (iii)$ implication is partially proved. Finally, for the weaker assumption where $W\in \Chp{\JE}$, \ie  for any vector $W\in TJE$, the section $\sect$ has to be a stationary point of $\mathcal{A}$ in order to cancel, in equation~\refp{eq:PSLproof}, the term $\left(\mathcal{T}^{\mu}_{A}\wedge\dnx{n}{\mu}\right)(\mvec{X})
=\pullback{\sect}( \mathcal{T}^{\mu}_{A}\wedge \dnx{n}{\mu})(\basex{1},\ldots,\basex{n+1})$, that is to say, to satisfy the Euler-Lagrange equations~\refp{eq:EL2}.
\end{proof}
\end{itemize}

The variation theorem is really useful especially in presence of symmetry. It allows to demonstrate the first Noether's theorem and to obtain a conserved quantity named \emph{current}.

\subsection{Lagrangian symmetries and Noether’s theorem}

In Mechanics, a symmetry of a Lagrangian dynamical system is a diffeomorphism in the phase
space of the system (the tangent bundle) which leaves the Lagrangian invariant. 
It can be thought of being generated by a vector field (say $S$). This leads to the following definition
\begin{definition}[An infinitesimal natural symmetry]
Let $((E,M; \pi),\Lag)$ be a Lagrangian system. 
An infinitesimal natural symmetry of the Lagrangian system is a vector field $S \in \chi(E)$ such
that its canonical prolongation leaves $\Lag$ invariant (vanishing Lie derivative)
\begin{equation*}
\Lie{(j^1S)}\Lag = 0.
\end{equation*}
\end{definition}
If the vector field $S \in \chi(E)$  is an infinitesimal natural symmetry of the Lagrangian system $((E,M; \pi),\Lag)$ then the Poincaré-Cartan form is also invariant, i.e $\Lie{j^1S} \Theta_{\Lag}=0$. According to the first Noether's theorem, the presence of symmetries leads to conserved quantities. The main
result is
\begin{theorem}[The first Noether's theorem]\label{th:Noether2}
Let $S \in \chi(E)$  be an infinitesimal natural symmetry of the Lagrangian system $((E,M; \pi),\Lag)$. Then,
the n-form \mbox{$J(S) := (j^1S)\contr\Theta_{\Lag}$} is a constant (closed) form on the critical sections of the variational problem posed by $\Lag$.
\end{theorem}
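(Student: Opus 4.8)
The plan is to derive the conservation law from Cartan's magic rule together with the variation theorem (Theorem~\ref{th:var}), mirroring the mechanism by which~\refp{symplecticX} was obtained in Cartan's lesson. I would start from the invariance of the Poincaré-Cartan form noted just above the statement, namely $\Lie{j^1S}\PCL = 0$, which is the infinitesimal counterpart of the hypothesis $\Lie{j^1S}\Lag = 0$, and treat it as the input to the argument.

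First I would apply the magic rule to $\PCL$ along the prolonged field $j^1S$,
\begin{equation*}
\Lie{j^1S}\PCL = (j^1S)\contr\dd\PCL + \dd\big((j^1S)\contr\PCL\big),
\end{equation*}
and substitute the defining relation $\PSL = -\dd\PCL$ together with the definition $J(S) = (j^1S)\contr\PCL$. Since the left-hand side vanishes, this collapses to the pointwise identity $\dd J(S) = (j^1S)\contr\PSL$ on the whole jet bundle $\JE$.

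Next I would pull this identity back along the holonomic lift $\prolonge{\sect}$ of a critical section $\sect$. Because pullback commutes with $\dd$, the left side becomes $\dd\big(\pullback{\sect}J(S)\big)$. For the right side I would invoke the variation theorem with the admissible choice $W = j^1S$ (a genuine vector field on $\JE$): as $\sect$ is stationary, assertion~(iii) gives $\pullback{\sect}\big((j^1S)\contr\PSL\big) = 0$. Combining the two yields $\dd\big(\pullback{\sect}J(S)\big) = 0$, so the pulled-back current is a closed $n$-form on $\mathcal{U}\subset M$ — the sought conservation law (a vanishing divergence in coordinates).

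The algebra of the magic rule is routine; the step deserving the most care is the transmission of invariance from $\Lag$ to $\PCL$, i.e. the implication $\Lie{j^1S}\Lag = 0 \Rightarrow \Lie{j^1S}\PCL = 0$. This is not automatic, since $\PCL = \derp{\Ld}{\v{\mu}{A}}\fc{A}\wedge\dnx{n}{\mu}+\Lag$ carries a contact term in addition to $\Lag$. The essential point is that $j^1S$, being a one-jet prolongation (Proposition~\ref{anx:liftVF:geom}), generates a flow preserving the contact module $\fcv$, so the contact term is sent into a multiple of itself and the remaining invariance reduces to that of $\Lag$; making this precise is where the real content of the statement resides.
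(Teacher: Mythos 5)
Your proposal is correct and follows essentially the same route as the paper: apply Cartan's magic rule to $\Lie{j^1S}\PCL=0$, pull back along the critical section, and kill the term $\pullback{\sect}\left(j^1S\contr \dd\PCL\right)$ via assertion (iii) of the variation theorem with $W=j^1S$. Your closing remark is well taken — the paper simply asserts the implication $\Lie{j^1S}\Lag=0\Rightarrow\Lie{j^1S}\PCL=0$ without proof, and your sketch of why the contact term is preserved (the prolonged flow stabilizes the contact module) is exactly the missing ingredient.
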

\begin{proof}
Let $\sect: M \rightarrow E$  be a critical section of the variational problem, that is, according to the variation theorem~(\ref{eq:C})
\begin{equation*}
\pullback{\sect} \left(W\contr d\Theta_{\Lag}\right)=0,\quad \forall W\in \chi(\JE).
\end{equation*}
So, in one hand, for the choice $W=j^1S$, we have $\pullback{\sect} \left(j^1S\contr d\Theta_{\Lag}\right)=0$.
Since $\Lag$ is invariant under $S$, we have, on the other hand, the invariance of the Poincaré-Cartan form
\begin{equation*}
0=\Lie{j^1S}\Theta_{\Lag}=d \left( j^1S\contr\Theta_{\Lag}\right)+ j^1S\contr d\Theta_{\Lag}.
\end{equation*}
Therefore, on a critical section
\begin{eqnarray*}
0&=&\pullback{\sect}\Lie{j^1S}\Theta_{\Lag}\\
&=&\pullback{\sect}\left( d \left( j^1S\contr\Theta_{\Lag}\right)\right)+ \cancel{\pullback{\sect}\left( j^1S\contr d\Theta_{\Lag}\right)}\\
&=&\pullback{\sect}\left( d \left( j^1S\contr\Theta_{\Lag}\right)\right)
=d \left[\pullback{\sect}\left( j^1S\contr\Theta_{\Lag}\right)\right]
\end{eqnarray*}
and the result, $\dd \pullback{\sect} J(S)=0$, follows.
\end{proof}
For critical sections $\sect$ of the variational
problem posed by $\Lag$, the expression $\pullback{\sect}J(S)$ is called \emph{Noether’s current} associated with $S$.


\section{Lagrangian formalism with Lie group}\label{sec:III}
To obtain the preceding results, the vector field of variation (say $Z$) needed to be lifted ($j^1Z$). This was handled by considering  the canonical contact form given in coordinates by $\b\fcv{}= (\dy{A}-\v{\mu}{A} \dx{\mu}) \otimes \basey{A}$ (see subsection~\ref{subsect:holonomy} and appendix~\ref{app:jet-p}).

What happens if we now consider a principal bundle where the fiber is a Lie group? What are the expressions of the contact form $\b\fcv{}$ and the lift of a vector field in this context? What does become of the Euler-Lagrange equations of motion and what about the Poincaré-Cartan form $\Theta_{\Lag}$ ? If all these questions had an answer then the Noether's theorem would give a new expression of an invariant current along solutions.

To every Lie group $G$, a Lie algebra $\mathfrak{g}$ is associated, whose underlying vector space is the tangent space of G at the identity element, which completely captures the local structure of the group. Since each velocity can then be translated to  $\mathfrak{g}$, the \emph{Lie algebra} is used in the 1-jet bundle definition and the specific canonical contact form is expressed using the Maurer-Cartan form - the lift of a vector field is adapted accordingly and allows us to apply the principle of least action for a reduced Lagrangian densities. The reduction is due to  symmetry induced by the Lie group action. 

Thus, in the followings (and it is the main interest of this article), the Poincaré-Cartan and multi-symplectic forms are obtained naturally for a principal $G$ bundle by mimicking, step by step, the construction made in the section~\ref{sec:LF} for standard Lagrangian formalism. It allows to formulate the \emph{Euler-Poincaré equations} of motion and leads to a Noether's current defined in the dual Lie algebra in order to study dynamical systems using field theories with symmetry.


\subsection{Principal G-bundle, left invariant bases and canonical contact form}

Let consider now a principal $G$ bundle $\pi: E\rightarrow M$ with structure group G. An element of the bundle is written $\ptE=(x^{\mu},\y{A})$ where $\gg=(\y{1},\y{2},\ldots,\y{N})$ belongs to the group $G$, ($\mu = 1, \dots,n+1$ and $A=1, \dots, N$). Let $\baseyTG{A}$, be the left-invariant basis on $TG$ obtained by left translation $T_e L_{\gg}$ from the identity $e$ of the vectors $\basey{A}$ 
\begin{equation*}
\baseyTG{A}= T_e L_{\gg}(\basey{A}).
\end{equation*}
Considering this left-invariant basis, 
if a section ${\sect}$ is a representative of a point $\ptJE=(x^{\mu},\y{A},\xiL^A_\mu)$ of the 1-jet bundle $\JE$ over $\ptE=(x^{\mu},\y{A})$, then we have $\xiL^A_\mu=\left.\mcfL^A\right|_{\ptE}(\derp{\boldsymbol{\sect}}{x^{\mu}})$ (to be more precise $\pullback\sect(\xiL^A_\mu)=\left.\mcfL^A\right|_{\ptE}(\derp{\boldsymbol{\sect}}{x^{\mu}})$). Here, the form $\mcfL^A$ is the Maurer-Cartan 1-form: dual basis of the left-invariant basis $\baseyTG{A}$ defined by $\mcfL^A(\baseyTG{B})=\delta_{B}^A$. The computation of the contact form gives in this context $\fcvG_{\ptJE}=\fcG{A} \otimes \baseyTG{A}$ with
\begin{equation}\label{eq:contactG}
\fcG{A} = \mcfL^A-\xiL^A_{\mu} d x_{\mu}
\end{equation}
The basis associated with the velocity coordinates, $\vL{\mu}{A}$, will be denoted $\basevL{\mu}{A}$ and its dual basis $\dd \vL{\mu}{A}$.

\subsection{Jet prolongation of vector fields: the lift}
Using Lie groups and Lie algebras, the lift of vectors fields requires now to consider the variation of the Maurer-Cartan form $\mcfLv=\mcfL^A\otimes\baseyTG{A}$ involved in the contact form~(\ref{eq:contactG}). It should be notice that this difficulty is not encounter without Lie group considerations: instead of $\mcfL^A$, the previous contact form (\ref{eq:contact}) involves only the invariant 1-form $\dy{A}$. Using the same geometric definition as presented in section \ref{sec:PLA} (and detailed in the appendix \ref{sect:jet-pGeom}), the one-jet prolongation mainly differs by a Lie bracket term from the standard formalism~(Prop. \ref{anx:liftVF:geom})
\begin{proposition}[One-jet prolongation (or lift) of vector fields with Lie groups]\label{prop:liftG}
Choosing a left representation, \ie using the basis $\bpm \basex{\mu}&\baseyTG{A}&\basevL{\mu}{A}\epm$ of $T\JE$, let $Z=\alpha^{\mu}\basex{\mu}+\beta^A\baseyTG{A}$ be a vector field of $\Chp{E}$. Its one-jet prolongation on $T\JE$, at point $\ptJE=(x^{\mu},\y{A},\xiL^A_\mu)$, is the vector field 
\begin{equation}\label{eq:Zv2}
j^1Z = \alpha^{\mu}\basex{\mu}+\beta^A\baseyTG{A}+
\left(
\derp{\zG{A}}{x^{\mu}}+\xiL^C_{\mu}T^B_C
\derp{\zG{A}}{y^B}
+\crochetL{\boldsymbol{\xiL}_{\mu}}{\boldsymbol{\beta}}^A\right)
\basevL{\mu}{A}
\quad\text{with}\quad \zG{A}=j^1Z\contr\fcG{A}=\beta^A-\xiL^A_{\nu} \alpha^{\nu},
\end{equation}
where $T^B_{C}= \dy{B}(\baseyTG{C})$ (see app. \ref{app:CB} for the change of basis between $\basey{B}$ and $\baseyTG{C}$).
\end{proposition}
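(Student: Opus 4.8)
The plan is to fix $j^1Z$ from the geometric characterisation of the prolongation used in the rest of the section (Definition \ref{def:JetProl} of appendix \ref{sect:jet-pGeom}): $j^1Z$ is the unique vector field on $\JE$ projecting onto $Z$ whose flow carries holonomic sections to holonomic sections. By Proposition \ref{prop:holsec}, holonomy is the vanishing $\pullback{\sect}\fcvG=0$, so preservation of holonomy under the flow is equivalent, infinitesimally, to the requirement that $\Lie{j^1Z}\fcG{A}$ remain in the contact module, \ie $\Lie{j^1Z}\fcG{A}=f^A_B\,\fcG{B}$ for some functions $f^A_B$. First I would write the unknown lift as $j^1Z=\alpha^{\mu}\basex{\mu}+\beta^A\baseyTG{A}+\Gamma^A_{\mu}\basevL{\mu}{A}$, the horizontal and fibre parts being forced by the projection condition, and treat the velocity components $\Gamma^A_{\mu}$ as the unknowns.

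Then I would compute $\Lie{j^1Z}\fcG{A}=j^1Z\contr\dd\fcG{A}+\dd(j^1Z\contr\fcG{A})$ with $\fcG{A}=\mcfL^A-\xiL^A_{\mu}\dx{\mu}$, using the contraction $j^1Z\contr\fcG{A}=\zG{A}$ already recorded in the statement. The decisive new ingredient, absent from the standard Proposition \ref{anx:liftVF:geom}, is that $\dd\mcfL^A$ no longer vanishes but is governed by the Maurer--Cartan structure equation $\dd\mcfLv=-\frac{1}{2}\crochetF{\mcfLv}{\mcfLv}$, that is $\dd\mcfL^A=-\frac{1}{2}c^A_{BC}\,\mcfL^B\wedge\mcfL^C$ with structure constants fixed by $\crochetL{\baseyTG{B}}{\baseyTG{C}}=c^A_{BC}\baseyTG{A}$. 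Contracting this two-form with $j^1Z$, whose $\mcfL^B$-components are the $\beta^B$, produces $-c^A_{BC}\beta^B\mcfL^C$, which is exactly the germ of the bracket term. A quick check shows that the $\dd\xiL^A_{\mu}$-pieces arising from $j^1Z\contr\dd\fcG{A}$ and from $\dd\zG{A}$ cancel, so only $-\Gamma^A_{\mu}\dx{\mu}$ together with $\dd\beta^A$ and $-\xiL^A_{\nu}\dd\alpha^{\nu}$ survive beside the structure-constant term.

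The last step is to expand $\dd\beta^A$ and $\dd\alpha^{\nu}$ in the left-invariant coframe $(\dx{\mu},\mcfL^B)$ dual to $(\basex{\mu},\baseyTG{B})$ — this is where the change-of-basis matrix $T^B_C=\dy{B}(\baseyTG{C})$ of appendix \ref{app:CB} enters, through $\baseyTG{C}(f)=T^B_C\,\derp{f}{\y{B}}$ — then collect the coefficients of $\dx{\mu}$ and of $\mcfL^B$ and impose membership in the contact module. Since $\fcG{B}=\mcfL^B-\xiL^B_{\mu}\dx{\mu}$, this membership forces the $\dx{\mu}$-coefficient to equal $-\xiL^B_{\mu}$ times the $\mcfL^B$-coefficient; solving that linear relation for $\Gamma^A_{\mu}$, and noting that $\derp{\zG{A}}{\x{\mu}}$ and $\derp{\zG{A}}{\y{B}}$ regroup the $\alpha$- and $\beta$-derivatives, yields the stated formula, the contraction $-c^A_{BC}\xiL^B_{\mu}\beta^C$ reorganising by antisymmetry of $c^A_{BC}$ precisely into $\crochetL{\boldsymbol{\xiL}_{\mu}}{\boldsymbol{\beta}}^A$. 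I expect the main obstacle to be bookkeeping rather than conceptual: one must keep the Maurer--Cartan sign convention coherent with the bracket convention and justify that, although $\fcvG$ is Lie-algebra-valued, working component-wise with the scalar forms $\fcG{A}$ is legitimate. As a cross-check, the contact-invariance definition of appendix \ref{sect:jet-pContactInv} must return the same $\Gamma^A_{\mu}$.
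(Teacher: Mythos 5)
Your proof is correct, but it follows the \emph{other} of the two equivalent definitions of the prolongation that the paper describes, not the one the paper actually uses here. You impose the contact-module invariance condition $\Lie{j^1Z}\fcG{A}=f^A_B\,\fcG{B}$ (the route of appendix \ref{sect:jet-pContactInv}, which the authors explicitly set aside as less transparent for them) and extract $\Gamma^A_\mu$ by matching the $\dx{\mu}$- and $\mcfL^B$-coefficients after the $\dd\xiL^A_\mu$ cancellation. The paper instead runs the geometric algorithm of appendix \ref{sect:jet-pGeom} in appendix \ref{app:LVFG}: it builds the normalized tangent frame $X_\mu=\basex{\mu}+\xiL^B_\mu\baseyL{B}$ via the inverse holonomic map, pushes it forward along the flow $\tau^Z_\Eps$ to first order using the Lie bracket $\crochetL{X_\mu}{Z}$, re-solves the holonomy condition at the displaced point, and reads off $\gamma^A_\mu=\dd\xiL^A_\mu(j^1Z)$ as a derivative at $\Eps=0$; the Maurer--Cartan equation enters there through $\Lie{Z}\mcfL^A$ evaluated on $X_\mu$, rather than through $j^1Z\contr\dd\fcG{A}$ as in your computation. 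The two arguments share their essential engine --- Cartan's magic formula plus the zero-curvature equation producing the $\crochetL{\boldsymbol{\xiL}_\mu}{\boldsymbol{\beta}}$ term, and the change of basis $T^B_C$ to convert $\dy{B}$-derivatives --- but yours is shorter and purely algebraic (one contraction and a coefficient comparison), at the price of taking for granted the equivalence between ``flow preserves holonomic sections'' and ``Lie derivative preserves the contact module'' for the $\mathfrak{g}$-valued form $\fcvG$, a point you rightly flag and which the paper's more laborious construction is precisely designed to avoid. Your proposed cross-check against the geometric definition is, in effect, the paper's proof.
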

All the details of the calculus are given in appendix \ref{app:LVFG}.
\subsection{Reduced Lagrangian, Euler Poincaré equations and Poincaré-Cartan form}
The reduced Lagrangian form is now written as
 \begin{equation}\label{eq:LR}
\lag=\ld (x^{\mu}, \y{A}, \xiL^A_{\mu} )\vol,\quad \ld \in \C^\infty (\JE),\; \vol \in \Lambda^{n+1} (M)
\end{equation}
in a natural local system $(x^{\mu}, \y{A}, \xiL^A_{\mu} )$ on $\JE$. 
This reduced Lagrangian is used in the Hamilton principle~\ref{def:HP} with the action functional $\mathcal{A}=\int_{\mathcal{U}}\pullback{\sect}\lag$.

To obtain Euler's equation and the Poincaré-Cartan form in this new context, the same procedure to that of the preceding sections is used but this time the lift formula (proposition~\ref{prop:liftG}) is employed to compute $\delta\mathcal{A}$ (see appendix \ref{app:dAG}). Introducing the co-adjoint operator $\ade{}{}$ such that 
\begin{equation*}
\dual{\boldsymbol{\pi}}{\crochetL{\boldsymbol{\xiL}_{\mu}}{\boldsymbol{\beta}}}=
\dual{\boldsymbol{\pi}}{\ad{\boldsymbol{\xiL}_{\mu}}{\boldsymbol{\beta}}}=
\dual{\ade{\boldsymbol{\xiL}_{\mu}}{\boldsymbol{\pi}}}{\boldsymbol{\beta}},
\quad \forall\b\pi\in\mathfrak{g}^*
\end{equation*}
 the (Lagrangian) Poincaré-Cartan $(n+1)$-form must be written in coordinates
\begin{equation}\label{eq:G_PC}
\Theta_{\lag}=\derp{\ld}{\xiL^A_{\mu}}\mcfL^A\wedge d^{n}x_{\mu}-
\left(\derp{\ld}{\xiL^A_{\mu}}\xiL^A_{\mu}-
\ld\right)\vol,
\end{equation}
in order to obtain a similar expression as~(\ref{eq:dA1}) or~(\ref{eq:dA2}). That is, $\forall Z\in\Chp{E}$,
\begin{equation}\label{eq:dA3}
\delta\mathcal{A}=\int_{\partial\mathcal{U}}\pullback{\sect}
\left(
j^1Z\contr\PCl
\right)
-\int_{\mathcal{U}}
\pullback{\sect}\left(
j^1Z\contr
\left(
\fcG{A}\wedge\Gamma_{A}
\right)
\right),
\end{equation}
with the Euler-Lagrange form
\begin{equation}\label{eq:tauG}
\Gamma_{A}=
\dd \left(\derp{\ld}{\vL{\mu}{A}}\right)\wedge\dnx{n}{\mu}
-\left(\ade{\vLv_{\nu}}{\derp{\ld}{\vLv_{\nu}}}\right)_A\vol
-T^{B}_{A}\derp{\ld}{y^B}\vol.
\end{equation}
It obviously furnishes the equations of motion $\pullback{\sect}\left(\Gamma_{A}\right)=0$, named Euler-Poincaré equations, in other words, $\forall A=1,\ldots,N$
\begin{equation}\label{eq:G_EL}
\derp{}{x_\mu}\derp{\ld}{\xiL^A_{\mu}}\bigg|_{\prolonge{\sect}}
-\left(\ade{\boldsymbol{\xiL}_{\mu}}{\derp{\ld}{\boldsymbol{\xiL}_{\mu}}}\right)_A
\bigg|_{\prolonge{\sect}}-
T^{B}_{A}\derp{\ld}{y^B}\bigg|_{\prolonge{\sect}}=0.
\end{equation}
The Legendre transformation with new variables follows in a natural way as it appears clearly in~(\ref{eq:G_PC})
\begin{equation}\label{eq:legendreG}
\momL{\mu}{A}=\derp{\ld}{\xiL^A_{\mu}},\quad \hd=\derp{\ld}{\xiL^A_{\mu}}\xiL^A_{\mu}-\ld.
\end{equation}

The Lagrangian Poincaré-Cartan $(n+2)$-form (or pre-multisymplectic) $\PSl$ in $\JE$  is then introduced from the convenient expression $\PCl=\derp{\Ld}{\vL{\mu}{A}}\fcG{A}\wedge\dnx{n}{\mu}+\lag$,  that is (see end of appendix \ref{app:dAG})
\begin{equation}\label{eq:PSl}
\PSl=-\dd \PCl=
\left(
\fcG{A}\wedge\Gamma_{A}
+\derp{\ld}{\vL{\mu}{A}}\crochetL{\fcvG}{\fcvG}^A
\right)
\wedge\dnx{n}{\mu}.
\end{equation}
\paragraph*{Remark} A more intricate expression can also be obtained without using the contact and Euler-Lagrange forms,
\begin{equation*}
\PSl=\left(
\mcfL^A\wedge \dd\left( \derp{\ld}{\vL{\mu}{A}}\right)+ \derp{\ld}{\vL{\mu}{A}}\crochetL{\mcfLv}{\mcfLv}^A\right)
\wedge d^{n}x_{\mu}
+\left(
\xiL^A_{\mu} \dd\left( \derp{\ld}{\vL{\mu}{A}}\right)-T^{B}_{A}\derp{\ld}{y^B}\mcfL^A
\right)
\wedge \vol.\diamond
\end{equation*}
As it can be seen in~\refp{eq:PSl}, the (n+2)-form $\PSl$ vanishes along critical section $\prolonge\sect$ since it is proportional to the contact form $\fcvG$ and the Euler-Lagrange form $\b\Gamma$. So, the variation theorem for reduced problems is essentially the same as \ref{th:var} 
\begin{theorem}[Variation theorem (for reduced problems)]\label{th:varG}
The following assertions regarding a section $\sect$ of the bundle $\pi: E\rightarrow M$ are equivalent
\begin{enumerate}[(i)]
\item $\sect$ is a stationary point of the reduced action functional $\mathcal{A}=\int_{\mathcal{U}}\pullback{\sect}\lag$;
\item the Euler–Poincaré equations~(\ref{eq:G_EL}) hold in coordinates ;
\item for any vector field $W$ on $\JE$ 
\begin{equation}\label{eq:CG}
\pullback{\sect}(W\contr \Omega_{\lag}) =0;
\end{equation}
\end{enumerate}
where $\prolonge{\sect}$ is a holonomic section.
\end{theorem}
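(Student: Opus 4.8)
The plan is to mirror, term by term, the proof of Theorem~\ref{th:var}, the only genuinely new feature being the Maurer--Cartan curvature (Lie-bracket) contribution to $\PSl$ in~\refp{eq:PSl}, which I will show drops out after pull-back along a holonomic section. I would establish $(i)\Leftrightarrow(ii)$ and $(i)\Rightarrow(iii)$, the remaining converse being read off from the same contraction identity; these already cover the applications in the sequel.

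For $(i)\Leftrightarrow(ii)$ I would start from the variation formula~\refp{eq:dA3} and restrict to vector fields $Z\in\Chp{E}$ vanishing on $\partial\mathcal{U}$, so that the boundary term disappears. Expanding the interior product of $j^1Z$ with the $(n+2)$-form $\fcG{A}\wedge\Gamma_A$ gives
\begin{equation*}
j^1Z\contr\left(\fcG{A}\wedge\Gamma_A\right)=\zG{A}\,\Gamma_A-\fcG{A}\wedge\left(j^1Z\contr\Gamma_A\right),
\end{equation*}
with $\zG{A}=j^1Z\contr\fcG{A}$ as in Proposition~\ref{prop:liftG}. Pulling back by $\prolonge{\sect}$ and invoking the holonomy criterion $\pullback{\sect}\fcvG=0$ (Proposition~\ref{prop:holsec}) kills the second summand, leaving $\delta\mathcal{A}=-\int_{\mathcal{U}}\zG{A}\,\pullback{\sect}\Gamma_A$. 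Since $\zG{A}=\beta^A-\xiL^A_{\nu}\alpha^{\nu}$ can be prescribed arbitrarily through the fibre components $\beta^A$, stationarity for all admissible $Z$ is equivalent to $\pullback{\sect}\Gamma_A=0$ for every $A$, which is precisely the Euler--Poincaré system~\refp{eq:G_EL}.

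For $(i)\Rightarrow(iii)$ I would prove the reduced analogue of Lemma~\ref{lemmeXOmega}, namely the single contraction identity
\begin{equation*}
\PSl\big(W,\mvec{X}\big)=\fcG{A}(W)\,\Gamma_A\big(\mvec{X}\big),\qquad W\in\Chp{\JE},
\end{equation*}
where $\mvec{X}=(X_1,\dots,X_{n+1})$, $X_\alpha=T\prolonge{\sect}(\basex{\alpha})$, is the $(n+1)$-vector tangent to the holonomic lift. The computation is the same as for~\refp{eq:PSLproof}: every term of $\fcG{A}\wedge\Gamma_A$ in which a contact factor lands on some $X_\alpha$ vanishes by Proposition~\ref{prop:holsec2}, leaving only the term where $\fcG{A}$ is contracted with $W$. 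From this identity both directions follow: if $W$ is tangent to $\prolonge{\sect}$ or is $\pi^1$-vertical then $\fcG{A}(W)=0$ and hence $\pullback{\sect}(W\contr\PSl)=0$; for a general $W$ the factor $\Gamma_A(\mvec{X})=\pullback{\sect}\Gamma_A(\basex{1},\dots,\basex{n+1})$ vanishes exactly when $\sect$ is stationary, i.e. when the Euler--Poincaré equations hold, which closes $(i)\Rightarrow(iii)$; and reading the identity with a $W$ for which $\fcG{A}(W)=\delta^A_B$ (for instance any lift of $\baseyTG{B}$) isolates each $\pullback{\sect}\Gamma_B=0$ and yields the converse $(iii)\Rightarrow(ii)$.

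The main obstacle is the new quadratic term $\derp{\ld}{\vL{\mu}{A}}\crochetL{\fcvG}{\fcvG}^A\wedge\dnx{n}{\mu}$ appearing in~\refp{eq:PSl}, which has no counterpart in the standard case and must be shown not to spoil the identity above. Since $\crochetL{\fcvG}{\fcvG}^A$ is \emph{quadratic} in the contact one-forms $\fcG{B}$, contracting it with a single vector $W$ still leaves at least one contact factor to be evaluated on a tangent vector $X_\alpha$; by Proposition~\ref{prop:holsec2} this forces the entire bracket contribution to vanish after pull-back along $\prolonge{\sect}$. Verifying this degree-counting argument carefully is what reduces the Lie-group case to the same algebraic skeleton as Theorem~\ref{th:var}; once it is in place, the remainder of the proof is word-for-word the standard Lagrangian argument.
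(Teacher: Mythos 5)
Your proposal is correct and follows essentially the same route as the paper: the paper also derives the theorem from the variation formula~\refp{eq:dA3} together with the expression~\refp{eq:PSl} of $\PSl$ as a combination of $\fcG{A}\wedge\Gamma_A$ and the quadratic contact term, and it disposes of the bracket contribution $\derp{\ld}{\vL{\mu}{A}}\crochetL{\fcvG}{\fcvG}^A\wedge\dnx{n}{\mu}$ by exactly your degree-counting observation (contracting a form quadratic in $\fcvG$ with a single $W$ still leaves a contact factor to be evaluated on the tangent vectors $X_\alpha$, which vanishes by holonomy). The only cosmetic difference is that the paper relegates these computations to appendix~\ref{app:dAG} and states the conclusion as ``essentially the same as Theorem~\ref{th:var}'' rather than writing out the reduced analogue of Lemma~\ref{lemmeXOmega} explicitly as you do.
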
 

\vspace{-1cm}

\subsection{Noether's theorem for reduced Lagrangian systems}
\subsubsection{Infinitesimal symmetries}
For the particular case where the group $G$ acts on itself, by a left action with $\b m \in G$
\begin{eqnarray*}
L_{\b m} : G &\rightarrow & G\\
\gg  &\mapsto &\b m\circ \gg
\end{eqnarray*}
and leaves the reduced Lagrangian $\lag$ \refp{eq:LR} invariant, an infinitesimal generator can be defined. Lets take a curve $\b m(s)=\exp( \b\eta s)$ through the identity at $s = 0$ with tangent vector $\b\eta$. It gives rise to the curve $\gg(s)=\b m(s)\circ \gg$  and, from a section $\sect$, to a family of sections $\sect_s=\b m(s)\circ \sect$. By definition, the infinitesimal generator vector field of the left action, at point $\gg$, is given by $S_{\eta}=\der{}{s}\big|_{s=0} \gg(s)$ which is the definition of the tangent map of the right action $R_g$ since
\begin{equation*}
\der{}{s}\bigg|_{s=0} \gg(s)=\der{}{s}\bigg|_{s=0} (\b m(s)\circ \gg)= \der{}{s}\bigg|_{s=0} R_g\left( \b m(s)\right)=T_g R(\b\eta).
\end{equation*}
In other words, the infinitesimal generator $S_{\eta}$ is the right invariant vector field $X^R_{\eta}$ generated by $\b\eta\in\mathfrak{g}$. It coincides, at any point $\gg$, with a left invariant vector field $X^L_{\psi}$ where $\b\psi$ is related to $\b\eta$ by the adjoint operator
\begin{equation}\label{eq:AdjOp}
\b\eta=\Ad{\b g}{\b\psi},\quad\text{or } \b\psi=\Ad{\b g^{-1}}{\b\eta}.
\end{equation}

\subsubsection{Noether current}
If an infinitesimal natural symmetry, $S_{\eta}$, leaves the reduced Lagrangian invariant (\ie $\Lie{j^1 S_{\eta}}\lag=0$), the Noether's theorem~\ref{th:Noether2} may be applied. That is to say that a Noether current $J_{\eta}=j^1 S_{\eta}\contr \Theta_{\lag}$ can be associated to each $\b\eta\in\mathfrak{g}$ using the Poincaré-Cartan form $\Theta_{\lag}$. The question is then to compute the prolongation $j^1 S_{\eta}$ from $S_{\eta}$. Since the proposition \ref{prop:liftG} is obtained using the left invariant basis $\baseyTG{A}$, the fundamental vector field $S_{\eta}$ of symmetry has to be expressed in this basis. As a right invariant vector field, it has constant coordinates on the right basis 
\begin{equation*}
S_{\eta}=\eta^A\baseyR{A}=X^R_{\eta},
\end{equation*}
(from now on, right (resp. left) invariant basis are denoted by $\baseyR{}$ (resp. $\baseyL{}$)).
So, in the left basis $\baseyL{A}$ we have
\begin{equation*}
S_{\eta}=X^L_{\psi}=\psi^A\baseyL{A}=(\Ad{g^{-1}}{\boldsymbol{\eta}})^A\baseyL{A},
\end{equation*}
according to the fact that $\b\psi$ and $\b\eta$ are related by the adjoint operator~\refp{eq:AdjOp}. The prolongation $j^1 S_{\eta}$ can then be written
\begin{equation*}
j^1 S_{\eta}=(\Ad{g^{-1}}{\boldsymbol{\eta}})^A\baseyL{A}+\gamma^A_\mu \basevL{\mu}{A},
\end{equation*}
for $\gamma^A_\mu$ given in proposition \ref{prop:liftG}. Its contraction with the Poincaré-Cartan  form $\Theta_{\lag}$ given in~\refp{eq:G_PC} is then
\begin{eqnarray*}
J_{\eta}&=&j^1 S_{\eta}\contr \Theta_{\lag}\\
&=&\derp{\ld}{\xiL^A_{\mu}}\mcfL^A(j^1 S_{\eta}) d^{n}x_{\mu}-
\left(\derp{\ld}{\xiL^A_{\mu}}\xiL^A_{\mu}-
\ld\right)\vol(j^1 S_{\eta})\\
&=&\derp{\ld}{\xiL^A_{\mu}}(\Ad{g^{-1}}{\boldsymbol{\eta}})^A d^{n}x_{\mu}
=\dual{\piL^\mu_A}{(\Ad{g^{-1}}{\boldsymbol{\eta}})^A} d^{n}x_{\mu}\\
&=&\dual{(\Ade{g^{-1}}{\boldsymbol{\pi}^\mu})^A}{\eta^A} d^{n}x_{\mu}
=\dual{\boldsymbol{\piR}^\mu}{\boldsymbol{\eta}} d^{n}x_{\mu}\quad \forall \boldsymbol{\eta}.
\end{eqnarray*}
It allows to define a Noether current n-form 
\begin{equation}\label{eq:NoetherCurrent}
J=\boldsymbol{\piR}^\mu d^{n}x_{\mu}
\end{equation}
by stating $J_{\eta}=\dual{J}{\boldsymbol{\eta}}$. Here we recognize the right momentum
 $\boldsymbol{\piR}^\mu=\Ade{g^{-1}}{\boldsymbol{\piL}^\mu}$ expressed from the left momentum \mbox{${\piL}^\mu_A=\derp{\ld}{\xiL^A_{\mu}}$}.  The n-form $J$ is constant (closed) on the critical sections of the variational problem posed by $\lag$ which gives the balance law
\begin{equation}\label{eq:NoetherBL}
0=d[\pullback{\sect}J]=\pullback{\sect}dJ=\derp{\boldsymbol{\piR}^\mu}{x^\mu}\, \vol\quad\Leftrightarrow \quad\sum_\mu\derp{\boldsymbol{\piR}^\mu}{x^\mu}=0.
\end{equation}

\subsection{Example: nonlinear model for Reissner Beam}
The Reissner beam is one of the simplest mechanical (acoustical) system that can be treated in the context of mechanics with symmetry. The Lie group SE(3) (Special Euclidean Group) is used to handle the symmetry of the system and is also a manifold on which the dynamic of the beam can be described.
\subsubsection{Reissner kinematics }\label{sec:Reissner}
A beam of length $L$, with cross-sectional area $A$ and mass per unit volume
$\rho$ is considered.  Following the Reissner kinematics, each section
of the beam is supposed to be a rigid body. The beam configuration can
be described by a position $\pos(\s,t)$ and a rotation $\Rot(\s,t)$ of
each section. The coordinate $\s$ corresponds to the position of the
section in a reference configuration $\Sigma_0$ (see
figure~\ref{fig:1}).
\begin{figure}[!ht]
\centering
\includegraphics[width=65mm,keepaspectratio=true]{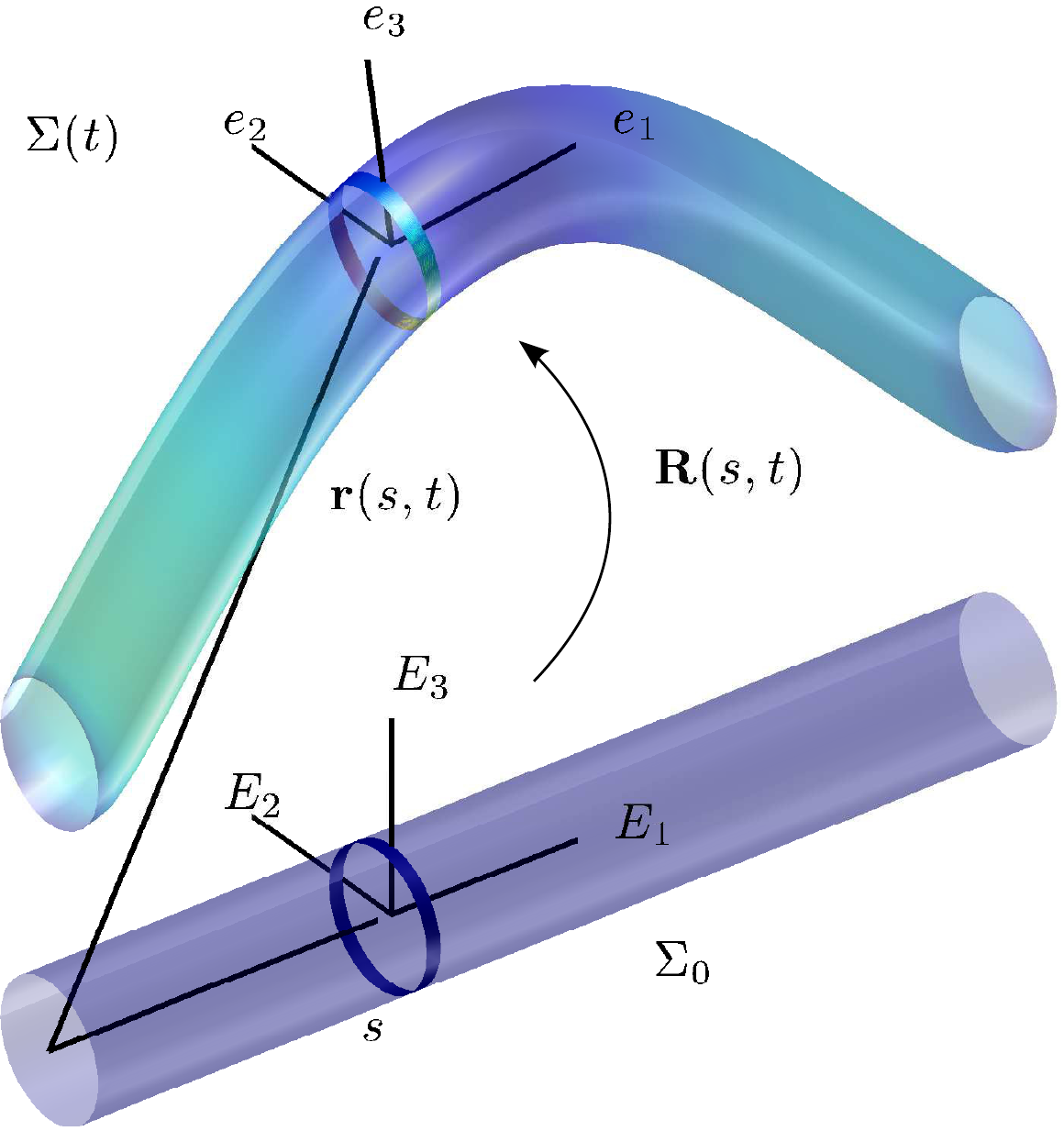}
\caption{Reference and current configuration of a beam. Each section, located at position $\s$ in the reference configuration $\Sigma_0$, is parametrized by a translation $\pos (\s,t)\in\R^3$ and a rotation $\Rot(\s,t)\in SO(3)$ in the current configation $\Sigma_t$.}
   \label{fig:1}
\end{figure}

Any material point of the beam which is located at
$\xp(s,0)=\pos(s,0)+\Yo=\s\Eun+\Yo$ in the reference configuration
($t=0$) has a new position (at time $t$) $\xp(\s,t)=\pos(\s,t)+\Rot(\s,t)\Yo$. In other words, the current configuration of the beam $\Sigma_t$ is completely described by a map 
\begin{equation}\label{eq:H}
\begin{pmatrix}
	\xp(\s,t)\\
	1
\end{pmatrix}
=\underbrace{
\begin{pmatrix}
	\Rot(\s,t)&\pos(\s,t)\\
	0&1
\end{pmatrix}
}_{\gre(s,t)}
\begin{pmatrix}
	\Yo \\
	1
\end{pmatrix},\quad \Rot\in SO(3),\quad \pos\in \R^3,
\end{equation}
where the matrix $\gre(\s,t)$ is an element of the Lie group $SE(3)=SO(3)\times \R^3$, where $SO(3)$ is the group of rotation in $\R^3$. As a consequence, to any motion of the beam corresponds a function (a section of the principal $G$ bundle) $\gre(s,t)$ of the (scalar) independent variables $s$ and $t$. 

Using the reduction procedure, the tangent vectors, $\derp{\gre}{\s}$ and $\derp{\gre}{t}$,  to the group $SE(3)$ at the point $\gre$, are lifted to the Lie-algebra. The general definition, through the Maurer-Cartan form $\mcfLv$, is somewhat technical, but in the case of matrix groups this process is simply a multiplication by the inverse matrix $\gre^{-1}$. Using the left invariant representation, this operation gives rise to the definition of two vectors in the Lie algebra $\mathfrak{g}=\mathfrak{se}(3)\simeq \R^6$
\begin{eqnarray}\label{eq:Ep}
\hat{\Ep}_{L}(\s,t)&=&\gre^{-1}(\s,t)\derp{\gre}{\s}(\s,t)\quad\mapsto \Ep_{L}\\
\label{eq:X}
\hat{\X}_{L}(\s,t)&=&\gre^{-1}(\s,t)\derp{\gre}{t}(\s,t)\quad\mapsto \X_{L},
\end{eqnarray}
which describe the deformations and the velocities of the beam. Assuming a linear stress-strain relation, those definitions allow to define a reduced Langrangian by the difference of kinetic and potential energy $E_{c}-E_{p}$, with
\begin{equation*}
 E_{c}(\X_{L})=\int_{0}^{L}\frac{1}{2} \X_{L}^T\mat{J}\X_{L} d\s,\quad\text{and}\quad
  E_{p}(\Ep_{L})=\int_{0}^{L}\frac{1}{2} (\Ep_{L}-\Ep_{0})^T\mat{C}(\Ep_{L}-\Ep_{0}) d\s,
\end{equation*}
where $\mat{J}$ and $\mat{C}$ are matrix of inertia and Hooke tensor respectively. The deformation of the initial configuration corresponds to $\hat{\Ep}_{0}=\gre^{-1}(\s,0)\derp{\gre}{\s}(\s,0)$. The reduced Lagrangian form yields
\begin{equation*}
\lag=\ld(\X_{L},\Ep_{L})\omega=
\frac{1}{2} \left(
\X_{L}^T\mat{J}\X_{L}-(\Ep_{L}-\Ep_{0})^T\mat{C}(\Ep_{L}-\Ep_{0})
\right)
ds \wedge dt.
\end{equation*}

\subsubsection{Equations of motion}
Applying the Hamilton principle to the left invariant Lagrangian $\lag$ leads to the 
Euler-Poincar\'e equation
\begin{equation}\label{eq:EP}
\partial_{\s}{\Sc}-ad^{*}_{\Ep_{L}}\Sc+\partial_{t}{\Pc}-ad^{*}_{\X_{L}}\Pc=0,
\end{equation}
where the momenta are given by $\Sc=\derp{\ld}{\Ep_{L}}=-\mat{C}(\Ep_{L}-{\Ep}_{0})$ and $\Pc=\derp{\ld}{\X_{L}}=\mat{J}\X_{L}$.  The compatibility condition
\begin{equation}\label{eq:comp}
\partial_{\s}{\X_{L}}-\partial_{t}{\Ep_{L}}=ad_{\X_{L}}\Ep_{L},
\end{equation}
is obtained by differentiating~(\ref{eq:Ep}) and~(\ref{eq:X}) and gives a well-posed problem. It is a particular case of the Maurer-Cartan equation $d\mcfLv+\crochetL{\mcfLv}{\mcfLv}=0$ (see also \refp{eq:MCeq }). Furthermore, it should be noted that the operators $ad$ and $ad^{*}$ in eq.~(\ref{eq:EP})
\begin{eqnarray}
ad^{*}_{(\om,\vt)}(\m,\n)&=&(\m\times\om+\n\times\vt,\n\times\om),\quad
(\om,\vt)\in\mathfrak{g},\quad (\m,\n)\in\mathfrak{g}^*\\
ad_{(\om_{1},\vt_{1})}(\om_{2},\vt_{2})&=&(\om_{1}\times\om_{2},\om_{1}\times\vt_{2}-\om_{2}\times\vt_{1}),
\end{eqnarray}
depend only on the group $SE(3)$ and not on the choice of the particular "metric" $\lag$ that has been chosen to describe the physical problem. 

Equations~(\ref{eq:EP}) and~(\ref{eq:comp}) are written in material or left invariant form ($_L$ subscript). Spatial or right invariant version exists also ($_R$ subscript). In this case, right variables  are introduced by
\begin{eqnarray}\label{eq:Eps}
\hat{\Ep}_{R}(\s,t)&=&\derp{\gre}{\s}(\s,t)\gre^{-1}(\s,t)\quad\mapsto \Ep_{R}\\
\label{eq:Xs}
\hat{\X}_{R}(\s,t)&=&\derp{\gre}{t}(\s,t)\gre^{-1}(\s,t)\quad\mapsto \X_{R},
\end{eqnarray}
and~(\ref{eq:EP}) leads to the 
conservation law
\begin{equation}\label{eq:momenta}
\partial_{\s}\Ss+\partial_{t}\Ps=0
\end{equation}
where $\Ss=\Ade{\ig}\Sc$ and $\Ps=\Ade{\ig}\Pc$. The $\Ade{}$ map for $SE(3)$ is
\begin{equation}
\Ade{\ig}(\m,\n)=(\Rot\m+\pos \times \Rot\n,\Rot\n).
\end{equation}
Compatibility condition~(\ref{eq:comp}) becomes
\begin{equation}\label{eq:comps}
\partial_{\s}\X_{R}-\partial_{t}\Ep_{R}=ad_{\Ep_{R}}\X_{R}.
\end{equation}
Equations~\refp{eq:EP} and~\refp{eq:comp} (or alternatively~\refp{eq:momenta} and~\refp{eq:comps}) provide the exact non linear Reissner beam model.
\subsubsection{Noether current for the Reissner beam}
In that case, the dimension of the base space $M$ is $n+1=2$ and the volume form is given by $\vol=ds\wedge dt$. So, 
\begin{eqnarray*}
d^{1}x_{s}&=&\dpv{s}\contr (ds\wedge dt)=dt\\
d^{1}x_{t}&=&\dpv{t}\contr (ds\wedge dt)=-ds,
\end{eqnarray*}
which give, according to~\refp{eq:NoetherCurrent}, the Noether current (1-form)
\begin{equation*}
J= \Ss dt-\Ps ds.
\end{equation*}
The balance law, already given by~\refp{eq:momenta}, is also obtained using~\refp{eq:NoetherBL} since the pullback $\pullback{\sect}dJ=
\left(\partial_{\s}{\Ss} +\partial_{t}{\Ps}\right)\vol$ vanishes.


\section{Covariant or multi-symplectic Hamiltonian formalism}\label{sec:CHF}

In this section, an intrinsic covariant (also named multi-symplectic) Hamiltonian formalism is developped using the covariant Legendre transformation for $\Lag$. This is a general discussion that will be adapted in the next section to principal G-bundles.

\subsection{Elements for the covariant Hamiltonian formalism}
The Legendre transformation is a fiber-preserving map between the Jet-bundle and its dual
$
\FLe: \JE\rightarrow \JE^*
$ which has the coordinate expressions (see also~\refp{eq:legendre})
\begin{equation}\label{eq:legendre2}
p^{\mu}_A=\derp{\Ld}{v^A_{\mu}},\quad \Hd=\derp{\Ld}{v^A_{\mu}}v^A_{\mu}-\Ld
\end{equation}
for the multimomenta $p^{\mu}_A$ and the covariant Hamiltonian $\Hd$. In that circunstancies, the Cartan form $\Theta_{\Lag}$~\refp{eq:PC} (resp. the multi-symplectic form $\Omega_{\Lag}=-d\Theta_{\Lag}$~\refp{PSL}) appears to be the pulling back of a corresponding form $\Theta_\Ha$ (resp. $\Omega_\Ha$) on $\JE^*$
\begin{equation*}
\Theta_{\Lag}=\FLe ^* \Theta_\Ha\quad \text{ (resp. }
\Omega_{\Lag}=\FLe ^* \Omega_\Ha).
\end{equation*}
In coordinates, it yields
\begin{eqnarray}\label{eq:PCH}
\Theta_\Ha&=&p^{\mu}_A dy^A\wedge d^{n}x_{\mu}-\Hd\vol\\
\label{eq:SFH}
\Omega_\Ha&=&dy^A\wedge dp^{\mu}_A \wedge d^{n}x_{\mu}+d\Hd\wedge \vol.
\end{eqnarray}
Following Marsden~\cite{Marsden99}, sections in the dual space $\JE^*$ are introduced by the definition
\begin{definition}[Conjugate (dual) section]
Let $\sect$ be a section of the fiber bundle $\pi:E\rightarrow M$ and  $\prolonge{\sect}$ its first jet. A section $\prld{\sect}$ of $\JE^*$ is called conjugate to 
$\prolonge{\sect}$ if 
\begin{equation*}
\prld{\sect}=\FLe \circ \prolonge{\sect} .
\end{equation*}
In this case, we say that $\prld{\sect}$ is holonomic.
\end{definition}
With this definition, the variation theorem~\ref{th:var} is modified to:
\begin{theorem}[Dual variation theorem]\label{th:vard}
If the Legendre transformation $\FLe: \JE\rightarrow \JE^*$ is a fiber diffeomorphism over $E$,
the following assertions regarding a section $\sect$ of the bundle $\pi: E\rightarrow M$ are equivalent
\begin{enumerate}
\item $\sect$ is a stationary point of $\int_{\mathcal{U}}\pullbackd{\sect}\Lag$;
\item the De Donder-Weyl equations \refp{eq:DDWforms} hold in coordinates; 
\item  $\prld{\sect}$ is a Hamiltonian section for $\Ha$, that is to say that for any vector field $W$ in $\Chp{\JE^*}$
\begin{equation}\label{eq:CH}
\pullbackd{\sect}(W\contr \Omega_\Ha) =0
\end{equation}
\end{enumerate}
\end{theorem}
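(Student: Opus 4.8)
The plan is to transport the Lagrangian variation theorem~\ref{th:var} to the dual bundle $\JE^*$ by means of the Legendre diffeomorphism $\FLe$, rather than rerunning the whole variational computation from scratch. The hypothesis that $\FLe$ is a fiber diffeomorphism over $E$ is exactly what is needed: it guarantees that the three dual assertions are the $\FLe$-images of the three assertions of Theorem~\ref{th:var}, and equivalence of assertions is preserved under any diffeomorphism. The three structural links I would record first are $\prld{\sect}=\FLe\circ\prolonge{\sect}$ (the definition of a conjugate section), together with the two already-established pullback relations $\Theta_{\Lag}=\FLe^*\Theta_{\Ha}$ and $\Omega_{\Lag}=\FLe^*\Omega_{\Ha}$.

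First I would settle the equivalence $(1)\Leftrightarrow(2)$. Since $\FLe$ is invertible, $\prolonge{\sect}=\FLe^{-1}\circ\prld{\sect}$, so in the hyperregular (fiber-diffeomorphism) case the Hamiltonian action and the Lagrangian action are identified by the change of variables induced by $\FLe$ and therefore share the same critical sections; by Theorem~\ref{th:var} these are equivalent to the Euler--Lagrange equations~\refp{eq:EL2}. It then remains to check that, under the coordinate Legendre relations~\refp{eq:legendre2}, the Euler--Lagrange equations are carried exactly onto the De Donder--Weyl equations~\refp{eq:DDWforms}; this is a direct substitution $p^{\mu}_A=\partial\Ld/\partial v^A_{\mu}$, $\Hd=p^{\mu}_A v^A_{\mu}-\Ld$ and is the only genuinely computational ingredient of this part.

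For $(1)\Leftrightarrow(3)$ the key tool is the naturality of the interior product under a diffeomorphism. I would write, for any vector field $W\in\Chp{\JE^*}$, $\pullbackd{\sect}(W\contr\Omega_{\Ha})=\pullback{\sect}\,\FLe^*(W\contr\Omega_{\Ha})=\pullback{\sect}\left((\FLe^*W)\contr\Omega_{\Lag}\right)$, using $\prld{\sect}=\FLe\circ\prolonge{\sect}$ and $\Omega_{\Lag}=\FLe^*\Omega_{\Ha}$. Because $\FLe$ is a diffeomorphism, the assignment $W\mapsto\FLe^*W$ is a bijection between the vector fields on $\JE^*$ and those on $\JE$, so the quantifier ``$\forall W\in\Chp{\JE^*}$'' becomes ``$\forall\,\widetilde W\in\Chp{\JE}$''. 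Condition~\refp{eq:CH} is therefore equivalent to condition~\refp{eq:C} of Theorem~\ref{th:var}, which that theorem already equates with stationarity, closing the loop.

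The main obstacle I anticipate is not the naturality identity itself but the matching of the holonomy hypotheses on the two sides. Theorem~\ref{th:var} was established (via Lemma~\ref{lemmeXOmega}) only after splitting an arbitrary $W$ into a part tangent to the image of $\prolonge{\sect}$ and a $\pi^1$-vertical part, the vanishing of $\pullback{\sect}(W\contr\Omega_{\Lag})$ relying on the contact form annihilating both. I must verify that $\FLe$ carries ``tangent to $\prolonge{\sect}$'' onto ``tangent to $\prld{\sect}$'' and ``$\pi^1$-vertical'' onto vertical for the dual projection over $E$ --- which holds precisely because $\FLe$ is fiber-preserving over $E$ --- so that the conjugate (hence holonomic) character of $\prld{\sect}$ is the exact image of the holonomic character of $\prolonge{\sect}$. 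Granting this, no new analysis is required and the theorem follows by transport of structure.
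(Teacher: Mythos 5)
Your transport-of-structure argument for $(1)\Leftrightarrow(3)$ is sound and is essentially what the paper leaves implicit: the theorem is introduced there simply as the image of Theorem~\ref{th:var} under the Legendre map, using $\prld{\sect}=\FLe\circ\prolonge{\sect}$ and $\Omega_{\Lag}=\FLe^*\Omega_\Ha$, which is exactly the identity $\pullbackd{\sect}(W\contr\Omega_\Ha)=\pullback{\sect}\bigl((\FLe^*W)\contr\Omega_{\Lag}\bigr)$ you write down, together with the bijection $W\mapsto\FLe^*W$ on vector fields. (Be aware, however, that the paper proves only the implications $(i)\Rightarrow(ii)$ and $(i)\Rightarrow(iii)$ of Theorem~\ref{th:var}, so the ``equivalence'' you are transporting is itself only partially established there; your argument inherits that limitation.)

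The genuine gap is in your treatment of assertion $(2)$. You claim that the Euler--Lagrange equations~\refp{eq:EL2} are carried onto the De~Donder--Weyl system~\refp{eq:DDWforms} by ``direct substitution'' of the Legendre relations. Substituting $p^{\mu}_A=\derp{\Ld}{v^A_{\mu}}$ and $\derp{\Hd}{y^A}=-\derp{\Ld}{y^A}$ into~\refp{eq:EL2} produces only equation~\refp{eq:DDWforms}(b). Equation~(c) is the inverse Legendre transformation --- equivalently the statement that $\prld{\sect}$ is conjugate to a holonomic section --- and must be supplied by the conjugacy hypothesis, not by the Euler--Lagrange equations; and equation~(a), the energy-balance equation that the authors explicitly insist on retaining (they note it ``surprisingly does not appear in the literature''), does not arise from~\refp{eq:EL2} by substitution at all. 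The paper obtains all three equations simultaneously by a different route: it expands $W\contr\Omega_\Ha=W\contr(\Ocan+d\Hd\wedge\vol)$ in coordinates and factorizes the result over the components $dx^{\nu}(W)$, $dy^A(W)$ and $dp^{\mu}_A(W)$ of an arbitrary $W$, so that $\pullbackd{\sect}(W\contr\Omega_\Ha)=0$ for all $W$ forces each of the three coefficient forms to vanish; this gives $(3)\Rightarrow(2)$ directly. To close your proof you should either adopt that factorization, or separately account for~(a) and~(c) rather than asserting that the whole system follows from the Euler--Lagrange equations by substitution.
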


\subsection{De~Donder-Weyl equations}

Let us compute~\refp{eq:CH} according to the multi-symplectic form $\Omega_\Ha$ given by formula~(\ref{eq:SFH})
\begin{equation*}
W\contr \Omega_\Ha
=W\contr 
\left(
\Ocan+d\Hd\wedge \vol 
\right),\quad \forall W \in \Chp{\JE^*}
\end{equation*}
upon introducing the canonical multi-symplectic $(n+2)$-form 
\begin{equation}\label{eq:CM1}
\Ocan=\Psymp{\mu}\wedge d^{n}x_{\mu}=dy^A\wedge dp^{\mu}_A\wedge d^{n}x_{\mu}.
\end{equation}
To this end, the test vector $W$ is written $W=W_{\nu}\basex{\nu}+\ldots$ which gives $\vol(W)=W_{\nu}d^{n}x_{\nu}=dx^{\nu}(W)d^{n}x_{\nu}$ and $d\Hd$ is expressed using its partial derivatives: $d\Hd= \derp{\Hd}{x_\mu}dx^\mu + \derp{\Hd}{y^A}dy^A + \derp{\Hd}{p^{\mu}_A}dp^{\mu}_A$. Futhermore, as we define $d^nx_\mu = \partial_{x_\mu}\contr \vol$ and $d^{n-1}x_{\mu\nu} = \partial_{x_\nu}\contr d^nx_\mu$ (see appendix \ref{app:OUF}),
the expression $W\contr \Omega_\Ha$, may then be factorized to
\begin{eqnarray*}
W\contr \Omega_\Ha
  &=&- \left(d\Hd \wedge d^{n}x_{\nu}- 
  \derp{\Hd}{x_\nu}\vol-\Psymp{\mu} \wedge d^{n-1}x_{\mu\nu}
   \right) dx^\nu(W)\\
&&+\left(dp^{\mu}_A  \wedge d^{n}x_{\mu} +\derp{\Hd}{y^A}\vol \right)\;dy^A(W)\\
&& - \left(dy^A\wedge d^{n}x_{\mu}-\derp{\Hd}{p^{\mu}_A}\vol \right) \; dp^{\mu}_A(W)
\end{eqnarray*}
for any test vector $W$ on $\Chp{\JE^*}$.
The pull-back by $\prld{\sect}$ yields the De~Donder-Weyl equations
\begin{equation}\label{eq:DDWforms}
\begin{cases}
\pullbackd{\sect}\left(d\Hd \wedge d^{n}x_{\nu}- 
  \derp{\Hd}{x_\nu}\vol-\Psymp{\mu} \wedge d^{n-1}x_{\mu\nu}\right)=0& (a)\\
\pullbackd{\sect}\left(dp^{\mu}_A  \wedge d^{n}x_{\mu} +\derp{\Hd}{y^A}\vol \right)=0&(b)\\
\pullbackd{\sect}\left(  dy^A\wedge d^{n}x_{\mu} - \derp{\Hd}{p^{\mu}_A}\vol \right)=0&(c)
\end{cases}
\end{equation}
The first equation surprisingly does not appear in the literature, whether in~\cite{Echeverria2007}, or in~\cite{marsden_shkoller} or even in~\cite{Helein2004c}, where it seems that the same computation has been done. Following E. Cartan, we are convinced not to neglect this equation as it describes the energy balance.

\subsection{The way to the bracket}
The dual variation theorem~\ref{th:vard} may be written in another way. Let us consider a $(n+1)$-vector field \mbox{$\mvec{X} = (X_1, \cdots, X_{n+1})$} tangent to the optimal section $\prld{\sect}$. That is, for each $\mu$, $X_{\mu}=T{\prld{\sect}}(\basex{\mu})$. The pullback~\refp{eq:CH} is by definition
\begin{equation*}
0=\pullbackd{\sect}(W\contr \Omega_\Ha)(\basex{1},\ldots,\basex{n+1}) =(W\contr \Omega_\Ha)(\mvec{X})=\Omega_\Ha(W,\mvec{X}),\quad \forall W\in\Chp{\JE^*}.
\end{equation*}
So, we have $\mvec{X}\contr \Omega_\Ha=0$, that is also (since $\Ocan(W,\mvec{X})=(-1)^{n+1} \Ocan(\mvec{X},W)$),
\begin{equation}\label{CH2}
\mvec{X}\contr \Ocan=(-1)^{n}\left(d\Hd \wedge \vol\right)(.\,,\mvec{X}).
\end{equation}

Introducing the 1-forms ${D\Hd}$ (see app.\ref{app:HF}), $\mathcal{T}^{\mu}_A$,  and $\theta_\Ha^A$, such that 
\begin{equation}\label{eq:1forms}
\begin{cases}
{D\Hd}=d\Hd-\derp{\Hd}{x_\alpha}dx^{\alpha}-\Psymp{\mu}(X_{\mu},X_{\alpha})dx^{\alpha}&(a)\\
\mathcal{T}^{\mu}_A = dp^{\mu}_A +\frac{1}{n+1} \derp{\Hd}{y^A}dx^{\mu}&(b)\\
\theta_\Ha^A=dy^A - \derp{\Hd}{p^{\mu}_A}dx^\mu&(c)
\end{cases}
\end{equation}
where $\theta_\Ha^A$ is the Hamilton version of the contact form~(\ref{eq:contact}), 
the de Donder-Weyl equations \refp{eq:DDWforms} can be written
\begin{equation}\label{eq:1formsDDW}
\begin{cases}
{D\Hd}(X_{\mu})=0\quad\forall \mu&(a)\\
\mathcal{T}^{\mu}_A (X_{\mu})=0 \quad\text{sum over } \mu, \forall A&(b)\\
\theta_\Ha^A(X_{\mu})=0\quad \forall \mu, \forall A&(c)
\end{cases}
\end{equation}

Using the lemma~\ref{lemma:M}, the right hand size of~\refp{CH2} may be evaluted using the 1-form $D\Hd$ to give
\begin{equation*}
\left(d\Hd \wedge \vol\right)(.,\mvec{X})=\left(D\Hd\wedge \vol\right)(.,\mvec{X})\\
=D\Hd+(-1)^{\alpha}\underbrace{D\Hd(X_{\alpha})}_{=0}\vol(.,\hat{\mvec{X}}_{\alpha})\\
=D\Hd.
\end{equation*}
And thus equation~\refp{CH2} yields
\begin{equation}\label{CH3}
\mvec{X}\contr\Ocan=
(-1)^{n} D\Hd,
\end{equation}
where a calculus (see app. \ref{app:HF}) shows that
\begin{equation*}\label{eq:DH}
D\Hd(W)=X_{\mu}\contr\Psymp{\mu}(W^{v})=\left(
\derp{\Hd}{y^A}\theta_\Ha^A+\derp{\Hd}{p^{\mu}_A}\mathcal{T}^{\mu}_A 
\right)(W^{v})=\dd \Hd(W^{v}),\quad \forall W=w_{\alpha}X_{\alpha}+W^{v}\in\Chp{\JE^*}.
\end{equation*}

The formulation~(\ref{CH3}) suggests to use the canonical multi-symplectic form $\Ocan$ given in~\refp{eq:CM1} as a replacement for the standard symplectic form as it essentially differs from classical mechanics by the 1-form 
\begin{equation*}
\Psymp{\mu}(X_{\mu},X_{\alpha})dx^{\alpha}
\end{equation*}
which has no component in $dy^A$ nor in $d p^\mu_A$. For Hamiltonians that do not depend explicitly on base variables $x_\alpha$ ($\derp{\Hd}{x_\alpha}\equiv0$), F. Hélein in~\cite{Helein2004c} rewrites~\refp{CH3} as
\begin{equation}\label{CH3b}
\mvec{X}\contr\Ocan=(-1)^{n}d\Hd\quad \text{mod}\quad dx^\alpha,
\end{equation}
where “mod $dx^\alpha$” means that the equality holds between the coefficients of $dy^A$ and $d p^\mu_A$ in both sides. F. Hélein mentions that "this relatively naive description deserves some critics since equation~(\ref{CH3b}) holds only “mod $dx^\alpha$”, which is not very aesthetic: this reflects a disymmetry between the space-time variables and the field component variables.
This critic can be cured by adding to the set of variable $(x,y,p)$ a further
variable $e\in \R$, canonically conjugate to the space-time volume form $\vol$"~\cite{Helein2004c}. Once again, following E. Cartan, we would prefer to consider the energy balance~\refp{eq:1formsDDW} (a), which specifies that tangent vectors to the critical section $\sect$ belong to the kernel of the 1-form $D\Hd$.


%
  %
    %
      %
         %

\section{Hamiltonian formalism for principal G-bundles}
\subsection{Legendre transformation}
The covariant Legendre transformation for $\lag$ is now constructed as it appears clearly in the Poincaré-Cartan form \refp{eq:G_PC}.
It is a fiber-preserving map between the Jet-bundle and its dual $\FLe: \JE\rightarrow \JE^*$ which has the coordinate expressions
\begin{equation}\label{eq:G_legendre}
\momL{\mu}{A}=\derp{\ld}{\xiL^A_{\mu}},\quad \hd=\derp{\ld}{\xiL^A_{\mu}}\xiL^A_{\mu}-\ld
\end{equation}
for the multimomenta $\momL{\mu}{A}$ and the covariant reduced Hamiltonian $\hd$. In this circumstance, the Cartan form $\Theta_{\lag}$ (resp. $\Omega_{\lag}$) appears to be the pulling back of a corresponding form $\Theta_\ha$ (resp. $\Omega_\ha$)  on $\JE^*$
\begin{equation*}
\Theta_{\lag}=\FLe ^* \Theta_\ha\quad \text{ (resp. }
\Omega_{\lag}=\FLe ^* \Omega_\ha),
\end{equation*}
that is
\begin{eqnarray}\label{eq:G_PCH}
\Theta_\ha&=&\momL{\mu}{A}\mcfL^A\wedge d^{n}x_{\mu}-\hd\vol\\
\label{eq:G_SFH}
\Omega_\ha&=&\left(
\mcfL^A\wedge d\momL{\mu}{A}+\momL{\mu}{A} \crochetL{\mcfLv}{\mcfLv}^A\right)
\wedge d^{n}x_{\mu}
+d\hd\wedge \vol,
\end{eqnarray}
using the Maurer-Cartan equation $d\mcfLv+\crochetL{\mcfLv}{\mcfLv}=0$ (also named zero curvature equation in the literature).

\subsection{De~Donder-Weyl equations}
Using the same computation as in the previous section, equation~(\ref{eq:CH}), \ie $\pullbackd{\sect}(W\contr \Omega_\ha) =0$, is expressed  according to the multi-symplectic form $\Omega_\ha$ used for reduced problems given by formula~(\ref{eq:G_SFH}). To this purpose, the canonical multi-symplectic $(n+2)$-form 
\begin{equation}\label{eq:CMG}
\ps=\psymp{\mu}\wedge d^{n}x_{\mu}=\left(\mcfL^A\wedge d\momL{\mu}{A}+\momL{\mu}{A} \crochetL{\mcfLv}{\mcfLv}^A\right)\wedge d^{n}x_{\mu},
\end{equation}
is introduced to obtain the reduced De~Donder-Weyl equations (let us say De~Donder-Weyl-Poincaré equations)
\begin{equation}\label{eq:RDDWLforms}
\begin{cases}
\prld{\sect}^*
\left[
 d\hd \wedge d^{n}x_{\nu}
-\derp{\hd}{x_\nu}\vol
 -\psymp{\mu}\wedge d^{n-1}x_{\mu\nu}
 \right]=0&(a)\\
\prld{\sect}^*
\left[
\left(
d\momL{\mu}{A}  - 
(\ade{\mcfLv}{\boldsymbol{\pi}^\mu})_A 
 +\frac{1}{n+1}T^B_A\derp{\hd}{y^B}\dx{\mu}
 \right)
 \wedge d^{n}x_{\mu}
\right]=0&(b)\\
\prld{\sect}^*
\left[
\left(
\mcfL^A  -\derp{\hd}{\momL{\nu}{A}}\dx{\nu}
\right)
\wedge d^{n}x_{\mu}
 \right]=0&(c).
\end{cases}
\end{equation}
Note that the last equation is the inverse Legendre transformation. The second equation is the Hamiltonian form of Euler-Poincaré equation~(\ref{eq:G_EL}) with $T^{A}_{B}=\dy{A}(\baseyTG{B})$. It may be written in a more convenient form as
\begin{equation}\label{DDWL}
\derp{\momL{\mu}{A}}{x_{\mu}}\bigg|_{\prld{\sect}}
-(\ade{\derp{\hd}{\boldsymbol{\pi}^\mu}}{\boldsymbol{\pi}^\mu})_A\bigg|_{\prld{\sect}}
+T^{B}_{A}\derp{\hd}{y^B}\bigg|_{\prld{\sect}}=0.
\end{equation}
A right invariant Halmitonian version can also be obtained as (change of sign in the co-adjoint term)
\begin{equation}\label{eq:DDWR}
\derp{\momR{\mu}{A}}{x_{\mu}}\bigg|_{\prolonge{\sect}}
+(\ade{\derp{{\hdR}}{\boldsymbol{\Pi}^\mu}}{\boldsymbol{\Pi}^\mu})_A\bigg|_{\prolonge{\sect}}
+\tilde{T}^{B}_{A}\derp{{\hdR}}{y^B}\bigg|_{\prolonge{\sect}}=0,
\end{equation}
where the Hamiltonian $\hdR$ is expressed with right multi-momentum $\b{\Pi}$ and right velocity $\b\chi\in\mathfrak{g}$ (see also appendix \ref{app:CB} for details about the change of basis given by operators ${T}^{A}_{B}$ and $\tilde{T}^{A}_{B}$).

The De~Donder-Weyl-Poincaré equations~\refp{eq:RDDWLforms} suggest to introduce the reduced 1-forms $D\hd$, $\tau^{\mu}_A$,  and $\theta_\ha^A$, such that
\begin{equation}\label{eq:dDWForms}
\begin{cases}
D\hd=d\hd
-\derp{\hd}{x_\alpha}dx^{\alpha}
 -\psymp{\mu}(X_{\nu},X_{\alpha})dx^{\alpha}&(a)\\
\tau^{\mu}_A = 
d\momL{\mu}{A}  - 
(\ade{\mcfLv}{\boldsymbol{\pi}^\mu})_A 
 +\frac{1}{n+1}T^B_A\derp{\hd}{y^B}\dx{\mu}&(b)\\
\fcG{A}_\ha=\mcfL^A  -\derp{\hd}{\momL{\nu}{A}}\dx{\nu}&(c)
\end{cases}
\end{equation}
where $\fcG{A}_\ha$ is the Hamilton version of the reduced contact form~(\ref{eq:contactG}). These forms vanish on vectors $X_\mu$ tangent to the crital section $\prolonge{\sect}$
\begin{equation}\label{eq:1formsDDWG}
\begin{cases}
{D\hd}(X_{\mu})=0\quad\forall \mu&(a)\\
\tau^{\mu}_A  (X_{\mu})=0\quad \text{sum over } \mu, \forall A&(b)\\
\fcG{A}_\ha(X_{\mu})=0\quad \forall \mu, \forall A&(c)
\end{cases}
\end{equation}
Again equation \refp{eq:CH} of the variation theorem may be formulated as
\begin{equation}\label{CH4}
\mvec{X}\contr\ps=
(-1)^{n} D\hd,
\end{equation}
using the canonical multi-symplectic form $\ps$. For any test vector $W=w_{\alpha}X_{\alpha}+W^{v}\in\Chp{\JE^*}$
\begin{equation*}
D \hd(W)=T^B_A\derp{\hd}{y^B}\mcfL^A(W^{v})+\derp{\hd}{\momL{\alpha}{A}} d\momL{\alpha}{A} (W^{v})=\dd \hd(W^{v}) .
\end{equation*}


\subsection{Hamiltonian form of Noether conservation law}
In this section, it will be established that the right Euler-Poincaré equation~\refp{eq:DDWR} is the conservation law for left invariant Lagrangian problems (and conversely: left EP equation~\refp{DDWL} for right invariant Lagrangian). In the sequel, it is convenient to introduce the right invariant Maurer-Cartan  form $\mcfR$ dual to the right-invariant basis $\baseyR{A}$: $\mcfR^A(\baseyR{B})=\delta^A_B$. That is, if the point $\ptJE=(x^{\mu},\y{A},\xiR^A_\mu)$ of the 1-jet bundle $\JE$ is over the point $\ptE=(x^{\mu},\y{A})\in E$ then it exists a section $\boldsymbol{\sect}$ representative of that point such that $\xiR^A_\mu=\mcfR^A(\derp{\boldsymbol{\sect}}{x^{\mu}})$.
\subsubsection{Symmetric vector field}
If now, we consider a vector field of symmetry. To be more precise, if the Lagrangian density is invariant under the left action of a Lie group (left invariant Lagrangian), we consider a right invariant vector field \mbox{$S_{\eta}=\eta^A  \baseyR{A}$} with constant vector $\boldsymbol{\eta}$ in the Lie-algebra $\mathfrak{g}$. This vector field has, according to~(\ref{eq:Zv2}), an extension at the point $(x^\mu,y^A,\chi_\mu^A)\in{\JE}$
\begin{equation}\label{eq:J1Ze}
j^1S_{\eta}=\eta^A \baseyR{A}
-\crochetL{\boldsymbol{\chi}_\mu}{\boldsymbol{\eta}}^A
\basevR{\mu}{A}
\end{equation}
in the basis $\bpm \basex{\mu}&\baseyR{A}&\basevR{\mu}{A}\epm$ (note that the minus sign is due to the right invariant vector field and the formula~(\ref{eq:Zv2}) is for left invariance).
The reduced Lagrangian is expressed using the right invariant basis as
\begin{equation}\label{eq:lagR}
\lag'=\ldR (x^{\mu}, g^A, \chi^A_\mu)\vol,\quad \ldR \in \C^\infty (\JE),\; \vol \in \Lambda^{n+1} (M).
\end{equation}
By symmetry its Lie derivative  vanishes: $\Lie{j^1S_{\eta}}\lag'=0$, that is $j^1S_{\eta}\contr \dd \lag' +\dd (j^1S_{\eta}\contr \ldR\vol)=0$. This means $\dd \lag' (j^1S_{\eta})=0$, since $j^1S$ is a $\pi$-vertical vector field (no component on $\basex{\mu}$). 
\subsubsection{Conservation law}
So, using the symmetry $\dd \lag' (j^1S_{\eta})=0$ which is also $d\ldR (j^1S_{\eta})=0$ and with change of basis, we have
\begin{eqnarray*}
0&=&d\ldR (j^1S_{\eta})=\left(
\derp{\ldR}{x_\mu}dx^\mu+
\derp{\ldR}{y^A}\dy{A}+
\derp{\ldR}{\chi^A_\mu}d\chi^A_\mu
\right)(j^1S_{\eta})\\
&=&
\derp{\ldR}{y^A}\dy{A}(j^1S_{\eta})+
\derp{\ldR}{\chi^A_\mu}d\chi^A_\mu(j^1S_{\eta})\\
&=&
\derp{\ldR}{y^A}\tilde{T}^{A}_{B} \mcfR^B(j^1S_{\eta})-
\derp{\ldR}{\chi^A_\mu}\crochetL{\boldsymbol{\chi}_\mu}{\boldsymbol{\eta}}^A
,\quad\text{by lift~(\ref{eq:J1Ze})}\\
&=&
\derp{\ldR}{y^A}\tilde{T}^{A}_{B}\eta^B -
\piR_{\mu}^A\crochetL{\boldsymbol{\chi}_\mu}{\boldsymbol{\eta}}^A,\quad\text{by Legendre transf.}\\
&=&
\left(
-\derp{\hdR}{y^B}\tilde{T}^{B}_{A}-(\ade{\boldsymbol{\chi}_\mu}{\boldsymbol{\piR}^\mu})^A
\right)\eta^A,\quad
\text{since $\derp{\ldR}{y^B}=-\derp{\hdR}{y^B}$}\\
\end{eqnarray*}
for all $\boldsymbol{\eta}$, that is $(\ade{\boldsymbol{\chi}_\mu}{\boldsymbol{\piR}^\mu})^A=- \derp{\hdR}{y^B}\tilde{T}^{B}_{A}$. So, these two terms annihilate each other in the de Donder equation~(\ref{eq:DDWR}). That gives the conservation law $\derp{\piR_{\mu}^A}{x_{\mu}}\bigg|_{\prolonge{\sect}}=0$ which was already obtained before in~(\ref{eq:NoetherBL}). It appears that, with left-invariant Lagrangians, the first Noether theorem~\ref{th:Noether2} can be formulated by the right formulation of the Hamilton-Poincaré equation of motion~(\ref{eq:DDWR}).
\vspace{-0.5cm}
\subsubsection{Noether's current}
Two possibilities are available using the left or right representation of the dual Lie-algebra $\mathfrak{g}^*$.
\paragraph*{Left}
The Noether's current is computed by contracting the Hamiltonian version~(\ref{eq:G_PCH}) of the Poincaré-Cartan form with a left expression of the symmetric vector field
\begin{equation*}
j^1S_{\eta}=\ad{g^{-1}} {{\eta}}^A\baseyL{A}
+\gamma^A_{\mu}\basevL{\mu}{A}
\end{equation*}
for some $\b\gamma$. It gives
\begin{eqnarray*}
j^1S_{\eta}\contr \Theta_\ha&=&\left(\momL{\mu}{A}\mcfL^A\wedge d^{n}x_{\mu}-\hd\vol\right)(j^1S_{\eta})\\
&=&\momL{\mu}{A}(\ad{\b g^{-1}} {\boldsymbol{\eta}})^A d^{n}x_{\mu}
=\dual{\boldsymbol{\momL{\mu}{}}}{\ad{\b g^{-1}} {\boldsymbol{\eta}}} d^{n}x_{\mu}\\
J_L^*\,_{\eta}&=&\dual{\ade{\b g^{-1}} {\boldsymbol{\pi}^\mu}}{\boldsymbol{\eta}} d^{n}x_{\mu}=
\dual{J_L^*}{\boldsymbol{\eta}}.
\end{eqnarray*}
That is
\begin{equation}\label{eq:G_NCL}
J_L^*=\ade{\b g^{-1}} {\boldsymbol{\pi}^\mu} d^{n}x_{\mu}.
\end{equation}
\paragraph*{Right}
The right version is obtained by contracting the Hamiltonian Poincaré-Cartan form (right version $\Theta_{\haR}=\momR{\mu}{A}\mcfR^A\wedge d^{n}x_{\mu}-\hdR\vol$) with a right expression of the symmetric vector field~(\ref{eq:J1Ze}). It gives
\begin{eqnarray*}
j^1S_{\eta}\contr \Theta_{\haR}
&=&\left(\momR{\mu}{A}\mcfR^A\wedge d^{n}x_{\mu}-\hdR\vol\right)(j^1S_{\eta})
\\
&=&\piR_{\mu}^A{\eta}^A d^{n}x_{\mu}\\
J_R^*\,_{\eta}&=&\dual{\boldsymbol{\piR}^\mu}{\boldsymbol{\eta}} d^{n}x_{\mu}=
\dual{J_R^*}{\boldsymbol{\eta}}.
\end{eqnarray*}
Note that this result was already obtained in~(\ref{eq:NoetherCurrent}), \ie
\begin{equation}\label{eq:G_NCR}
J_R^*=\boldsymbol{\piR}^\mu d^{n}x_{\mu}.
\end{equation}

\begin{table}[!h]
\begin{tabular}{lll}
\hline \hline 
\multirow{4}{*}{\bf Lagrangian} 
&Euler-Lagrange&{ $\dfrac{\dd }{\dd t}\left(\dfrac{\partial  \Ld}{\partial \dot { q}}\right)  -  \dfrac{\partial  \Ld}{\partial q}=0$} \\
&Contact form&{ $\theta=\dd  q -  \dot{ q} \dd t$} \\
&Poincaré-Cartan&{ $\Theta_{\Ld}=\dfrac{\partial  \Ld }{\partial \dot{ q}}\dd q- \left(\dfrac{\partial  \Ld }{\partial \dot{ q}}\dot{ q}-\Ld  \right)\dd t$} \\
&Symplectic&{ $\Omega=-\dd \Theta_{\Ld}$}\\
\hline
{Legendre}&
$
\downarrow
\begin{cases}
p = \derp{ \Ld}{\dot q}\\
\Ha=\derp{\Ld}{\dot q}\dot q - \Ld 
\end{cases}
$
&
$
\uparrow
\begin{cases}
\dot q = \derp{ \Ha}{p}\\
 \Ld=\derp{ \Ha}{p}p - \Ha  
\end{cases}
$
\\
\hline
\multirow{4}{*}{\bf Hamiltonian} 
&Hamilt. eqs&{$ 
\begin{cases}
	\dd \Ha( X) = \derp{\Ha}{t}\\
	\dot{p} = -\derp{  \Ha}{ q}\\
	\dot { q} = \derp{\Ha}{p} 
\end{cases}$} \\
&Contact form&{ $\theta=\dd  q -  \derp{\Ha}{t} \dd t$} \\
&Poincaré-Cartan&{ $\Theta_{\Ha}=p\dd q- \Ha \dd t$} \\
&Symplectic&{ $\Omega=\dd q\wedge \dd p +\dd \Ha\wedge \dd t$}  \\
\hline
Variation Th.&&{ $\Lie{X} \PS=0$, $\forall X$ tangent to trajectories}  \\
\hline
\multirow{2}{*}{\bf Poisson} 
&Canonical form&{ $\Ocanon=\dd q\wedge \dd p$}\\
&Hamilt. vector fields&{ $X\contr\Ocanon=\dd \Ha$} \\
\hline\hline
\end{tabular}
\caption{\label{tab:S1}Principal results of the Cartan's lesson: symplectic case}
\end{table}

%
\begin{table}[!h]
\begin{tabular}{lll}
\hline \hline 
\multirow{4}{*}{\bf Lagrangian} 
&Euler-Lagrange&{ 
$
\derp{}{x_\mu}\derp{\Ld}{v^A_{\mu}}\bigg|_{\prolonge{\sect}}-
\derp{\Ld}{y^A}\bigg|_{\prolonge{\sect}}=0,\quad A=1,\ldots,N.
$} \\
&Contact form&{ $\fcv_{\ptJE}= (\dy{A}-\v{\mu}{A} \dx{\mu}) \otimes \basey{A}$} \\
&Poincaré-Cartan&{ $\Theta_{\Lag}=\derp{\Ld}{v^A_{\mu}}\dy{A}\wedge d^{n}x_{\mu}-\left(\derp{\Ld}{v^A_{\mu}}v^A_{\mu}-\Ld\right)\vol$} \\
&Multi-symplectic&{ $\PSL=-\dd \Theta_{\Ld}$}\\
\hline
{Legendre}&
$
\downarrow
\begin{cases}
p^{\mu}_A=\derp{\Ld}{v^A_{\mu}}\\
\Hd=\derp{\Ld}{v^A_{\mu}}v^A_{\mu}-\Ld
\end{cases}
$
&
$
\uparrow
\begin{cases}
v^A_{\mu} = \derp{ \Hd}{p^{\mu}_A}\\
\Ld=\derp{ \Hd}{p^{\mu}_A}p^{\mu}_A - \Hd  
\end{cases}
$
\\
\hline
\multirow{4}{*}{\bf Hamiltonian} 
&de Donder Weyl&{$ 
\begin{cases}
\pullbackd{\sect}\left(d\Hd \wedge d^{n}x_{\nu}- 
  \derp{\Hd}{x_\nu}\vol-\Psymp{\mu} \wedge d^{n-1}x_{\mu\nu}\right)=0& (a)\\
\pullbackd{\sect}\left(dp^{\mu}_A  \wedge d^{n}x_{\mu} +\derp{\Hd}{y^A}\vol \right)=0&(b)\\
\pullbackd{\sect}\left(  dy^A\wedge d^{n}x_{\mu} - \derp{\Hd}{p^{\mu}_A}\vol \right)=0&(c)
\end{cases}
$} \\
&de DW forms&{
$
\begin{cases}
{D\Hd}=d\Hd-\derp{\Hd}{x_\alpha}dx^{\alpha}-\Psymp{\mu}(X_{\mu},X_{\alpha})dx^{\alpha}&(a)\\
\mathcal{T}^{\mu}_A = dp^{\mu}_A +\frac{1}{n+1} \derp{\Hd}{y^A}dx^{\mu}&(b)\\
\theta_\Ha^A=dy^A - \derp{\Hd}{p^{\mu}_A}dx^\mu&(c)
\end{cases}
$} \\
&Contact form&{ $\fcv_{\ptJE}=(dy^A - \derp{\Hd}{p^{\mu}_A}dx_\mu)\otimes \basey{A}$} \\
&Poincaré-Cartan&{ $\Theta_\Ha=p^{\mu}_A dy^A\wedge d^{n}x_{\mu}-\Hd\vol$} \\
&Multi-symplectic&{ $\Omega_\Ha=dy^A\wedge dp^{\mu}_A \wedge d^{n}x_{\mu}+d\Hd\wedge \vol$}  \\
\hline
Variation Th.&&{ $\pullback{\sect}(W\contr \Omega_{\Lag}) =0$, $\forall W\in \Chp{\JE}$}  \\
\hline
\multirow{2}{*}{\bf Poisson} 
&Canonical form&{ $\Ocan=\Psymp{\mu}\wedge d^{n}x_{\mu}=dy^A\wedge dp^{\mu}_A\wedge d^{n}x_{\mu}$}\\
&Ham. multi vector&{ $\mvec{X}\contr\Ocan=(-1)^{n} D\Hd$} \\
\hline\hline
\end{tabular}
\caption{\label{tab:S2}Principal results in the Multi-symplectic case with jet bundle $\JE$}
\end{table}

\begin{table}[!h]
\begin{tabular}{lll}
\hline \hline 
\multirow{4}{*}{\bf Lagrangian} 
&Euler-Poincaré&{ $\derp{}{x_\mu}\derp{\ld}{\xiL^A_{\mu}}\bigg|_{\prolonge{\sect}}
-\left(\ade{\boldsymbol{\xiL}_{\mu}}{\derp{\ld}{\boldsymbol{\xiL}_{\mu}}}\right)^A
\bigg|_{\prolonge{\sect}}-
T^{B}_{A}\derp{\ld}{y^B}\bigg|_{\prolonge{\sect}}=0$} \\
&Contact form&{ $\left.\b\fcvG\right|_{\ptJE}= (\mcfL^A-\xiL^A_{\mu} d x_{\mu}) \otimes \baseyTG{A}$} \\
&Poincaré-Cartan&{ $\Theta_{\lag}=\derp{\ld}{\xiL^A_{\mu}}\mcfL^A\wedge d^{n}x_{\mu}-
\left(\derp{\ld}{\xiL^A_{\mu}}\xiL^A_{\mu}-
\ld\right)\vol$} \\
&Multi-symplectic&{ $ \PSl=-d \PCl$}\\
\hline
{Legendre}&
$
\downarrow
\begin{cases}
\momL{\mu}{A}=\derp{\ld}{\xiL^A_{\mu}}\\
\hd=\derp{\ld}{\xiL^A_{\mu}}\xiL^A_{\mu}-\ld
\end{cases}
$
&
$
\uparrow
\begin{cases}
\xiL^A_{\mu} = \derp{ \hd}{\momL{\mu}{A}}\\
\ld=\derp{ \hd}{\momL{\mu}{A}}\momL{\mu}{A} - \hd  
\end{cases}
$
\\
\hline
\multirow{4}{*}{\bf Hamiltonian} 
&de Donder Weyl&{$ 
\begin{cases}
\prld{\sect}^*
\left[
 d\hd \wedge d^{n}x_{\nu}
-\derp{\hd}{x_\nu}\vol
 -\psymp{\mu}\wedge d^{n-1}x_{\mu\nu}
 \right]=0&(a)\\
\prld{\sect}^*
\left[
\left(
d\momL{\mu}{A}  - 
(\ade{\mcfLv}{\boldsymbol{\pi}^\mu})_A 
 +\frac{1}{n+1}T^B_A\derp{\hd}{y^B}\dx{\mu}
 \right)
 \wedge d^{n}x_{\mu}
\right]=0&(b)\\
\prld{\sect}^*
\left[
\left(
\mcfL^A  -\derp{\hd}{\momL{\nu}{A}}\dx{\nu}
\right)
\wedge d^{n}x_{\mu}
 \right]=0&(c)
\end{cases}
$} \\
&de DW forms&{
$
\begin{cases}
D\hd=d\hd
-\derp{\hd}{x_\alpha}dx^{\alpha}
 -\psymp{\mu}(X_{\nu},X_{\alpha})dx^{\alpha}&(a)\\
\tau^{\mu}_A = 
d\momL{\mu}{A}  - 
(\ade{\mcfLv}{\boldsymbol{\pi}^\mu})_A 
 +\frac{1}{n+1}T^B_A\derp{\hd}{y^B}\dx{\mu}&(b)\\
\fcG{A}_\ha=\mcfL^A  -\derp{\hd}{\momL{\nu}{A}}\dx{\nu}&(c)
\end{cases}
$} \\
&Contact form&{ $\fcvG_{\ptJE}=(\mcfL^A  -\derp{\hd}{\momL{\nu}{A}}\dx{\nu})\otimes \basey{A}$} \\
&Poincaré-Cartan&{ $\Theta_\ha=\momL{\mu}{A}\mcfL^A\wedge d^{n}x_{\mu}-\hd\vol$} \\
&Multi-symplectic&{ $\Omega_\ha=\left(
\mcfL^A\wedge d\momL{\mu}{A}+\momL{\mu}{A} \crochetL{\mcfLv}{\mcfLv}^A\right)
\wedge d^{n}x_{\mu}
+d\hd\wedge \vol$}  \\
\hline
Variation Th.&&{ $\pullbackd{\sect}(W\contr \Omega_\ha) =0$, $\forall W\in \Chp{\JE}$}  \\
\hline
\multirow{2}{*}{\bf Poisson} 
&Canonical form&{ $\ps=\psymp{\mu}\wedge d^{n}x_{\mu}=\left(\mcfL^A\wedge d\momL{\mu}{A}+\momL{\mu}{A} \crochetL{\mcfLv}{\mcfLv}^A\right)\wedge d^{n}x_{\mu}$}\\
&Ham. multi vector&{ $\mvec{X}\contr\ps=(-1)^{n} D\hd$} \\
\hline\hline
\end{tabular}
\caption{\label{tab:S3}Principal results in the Multi-symplectic case with principal G bundle}
\end{table}

\clearpage
\section{Conclusion}\label{sec:conclusion}

The main interest of this study is to obtain a way to formulate a field theory adapted to wave propagation including nonlinearities and symmetries. The covariant formulation is an alternative to avoid infinite dimensional configuration spaces giving to time and space the same signification. In modern geometry, this is characterized by bundles with multi-dimensional bases (time and space variables). The symmetric aspect is taken into account using Lie groups as configuration spaces.

Those two considerations may bring up the level of difficulties of the theory, thus Cartan's ideas are used as a guide to obtained all the needed elements from the contact form to the conserved quantities through Hamiltonian vector fields and equations of motion.

This process is done in two steps: firstly, the Cartan's lesson is extended from one independent variable to several ones - from symplectic to multi-symplectic; secondly, a Lie group action is considered leading to a reduction procedure. Three tables (\ref{tab:S1}, \ref{tab:S2} and \ref{tab:S3}) summarize the main results in order to make it possible a comparison between these apparently different formalisms, emphasizing the conceptual connections between them.
%

\begin{acknowledgments}
We cannot express enough thanks to Jean-Pierre Marco for his continued support and encouragement and we are very grateful to Laurent Lazzarini for his comments and corrections on an earlier version of the manuscript. We also thank renowned persons Frédéric Hélein and Dominique Chevallier for sharing their pearls of wisdom with us during the course of this research.
\end{acknowledgments}

\appendix 

\addcontentsline{toc}{part}{Appendices}

\section{Geometric elements for the Cartan's lesson}\label{app:00}

Let consider a variation of a curve $q(t)$ given by a mapping $q(\Eps,t)$ as
illustrated in~fig.(\ref{fig:VarHam}). Let $Z$ be the vector field along the
direction of variation. Its pushforward $\tilde{Z}=(q^{-1})_*Z$ may be written as
any vector field $\tilde{Z}=z_1\derp{}{\Eps}+z_2 \derp{}{t}$ in the basis
$(\derp{}{\Eps}, \derp{}{t})$ for some components. Upon using the duality property of the base vector 
$\dd\Eps\left(\derp{}{\Eps}\right) = 1, \; \dd t\left(\derp{}{t}\right) = 1$ and $\dd
\Eps\left(\derp{}{t}\right) = \dd t\left( \derp{}{\Eps}\right) = 0$, these components, written
as
\begin{eqnarray*}
\dd \Eps (\tilde{Z})&=&\dd\Eps\left(z_1\derp{}{\Eps}+z_2 \derp{}{t}\right) = z_1\\
\dd t (\tilde{Z})&=&\dd t\left(z_1\derp{}{\Eps}+z_2 \derp{}{t}\right) = z_2,
\end{eqnarray*}
rely on an arbitrarily time parametrization.
\begin{figure}[!ht]
\begin{center}
\begin{tikzpicture}
\node[inner sep=0pt] (A) at (-5,0)
    {\includegraphics[width=.2\textwidth]{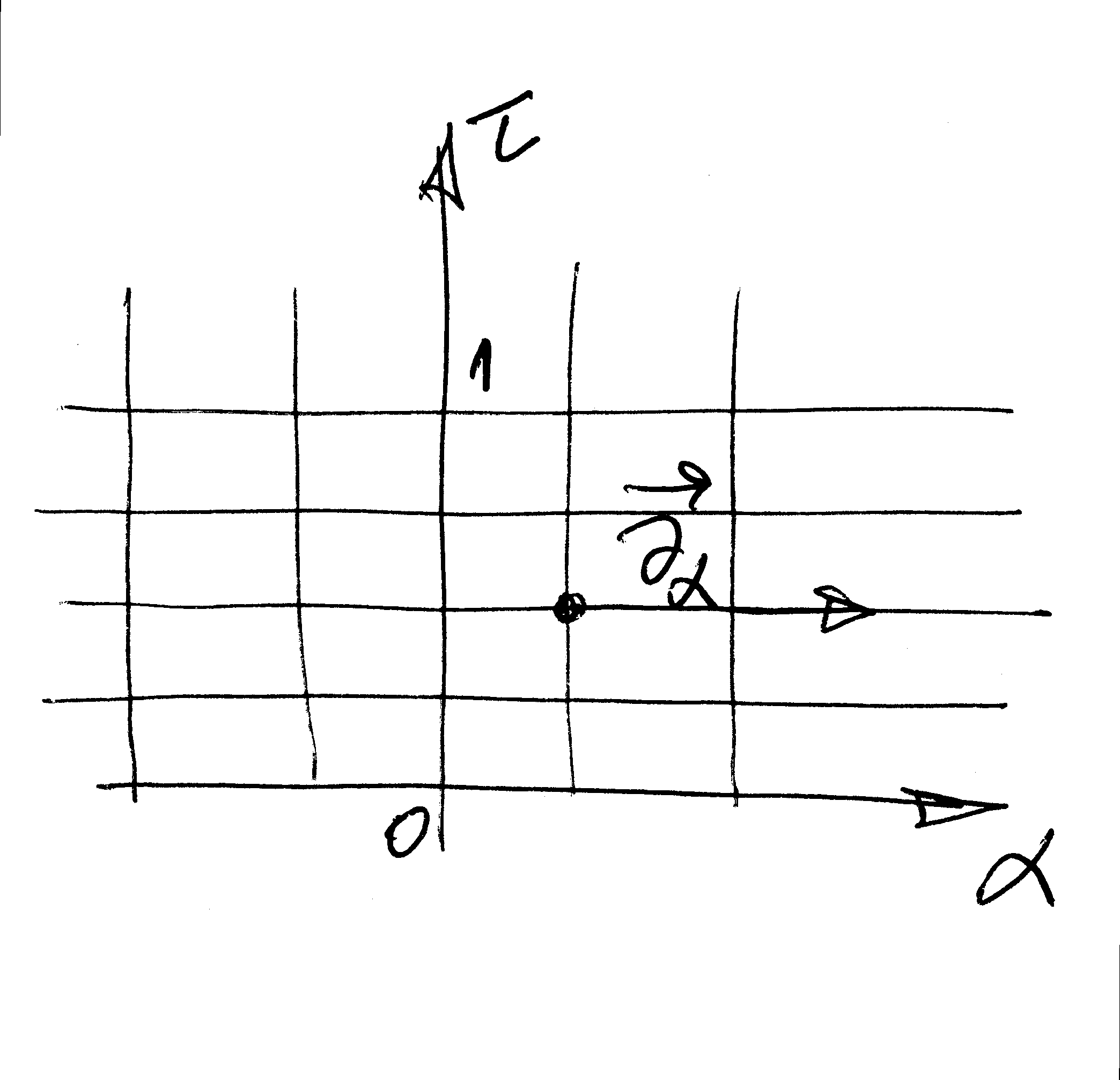}}; 
\node[inner sep=0pt] (B) at (5,0)
    {\includegraphics[width=.3\textwidth]{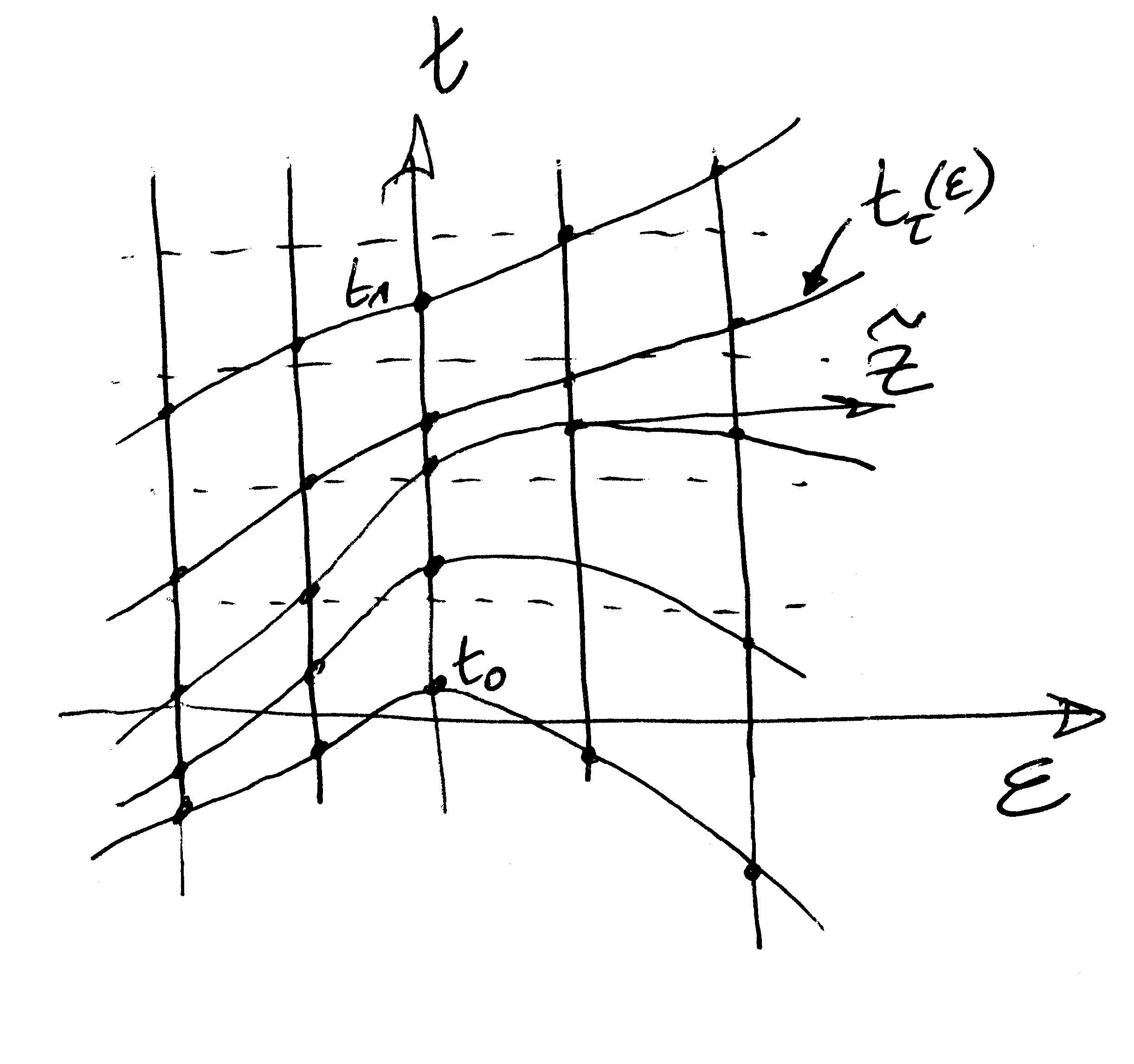}};
\node[inner sep=0pt] (C) at (-2,-6)
    {\includegraphics[width=.6\textwidth]{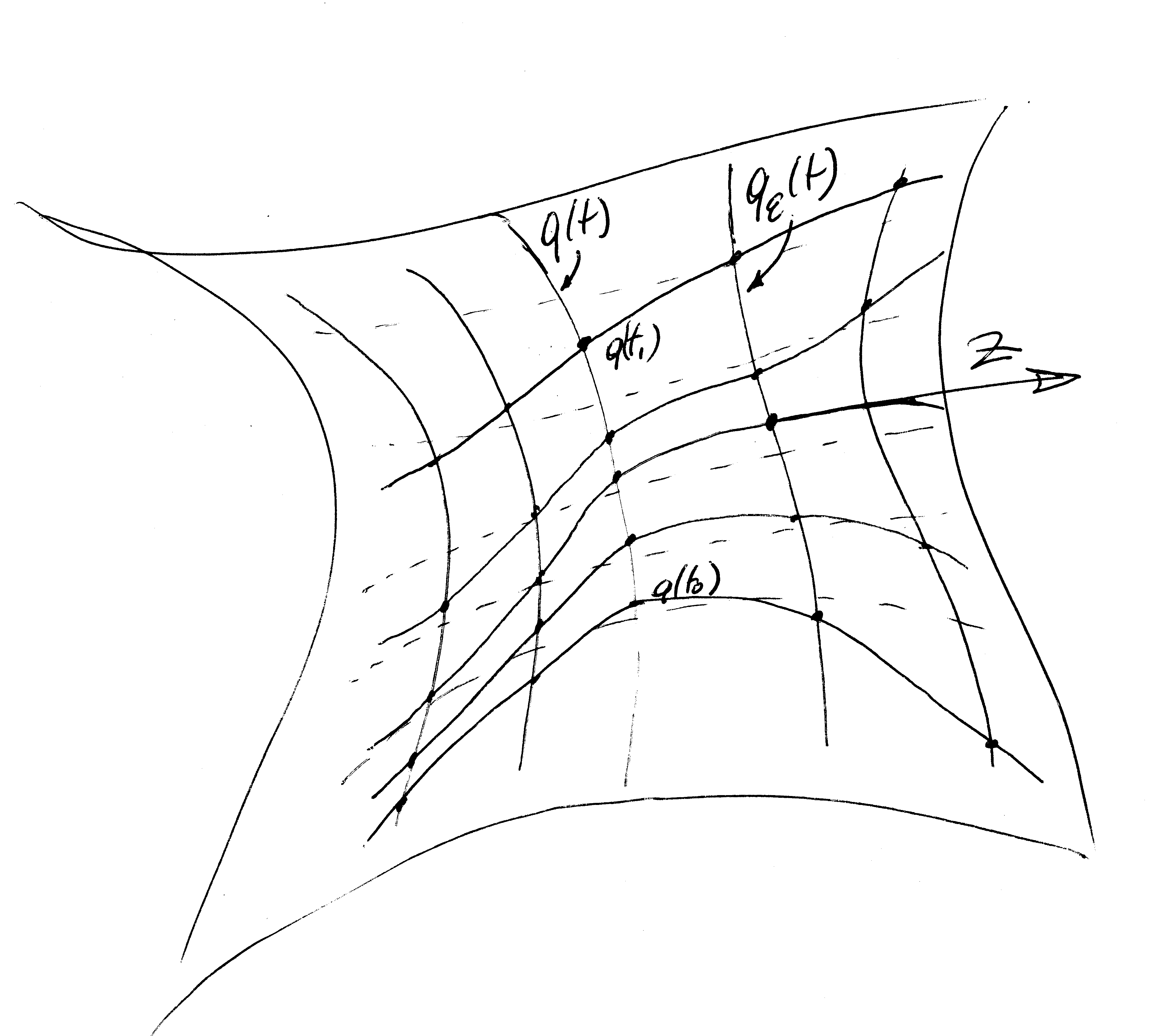}};
\node[inner sep=0pt] (E) at (4,-3.5){$q(\epsilon,t)$};
\node[inner sep=0pt] (E) at (-4.5,-9){$\mathcal{Q}$};
\draw[->,thick] (A) to[bend left] (B)
node[midway]{$\mathcal{T}(\alpha,\tau)$} ;
\draw[->,thick] (B.south) to[bend left] (C.east);

\end{tikzpicture}%
\caption{On top figures: Each virtual trajectory is parametrized by a vertical
 line in the "paramater spaces" $(\tau,\alpha)$ or $(t, \Eps)$. On bottom: Varied trajectories on the configuration space $\mathcal{Q}$.
 If an arbitrary time parametrization $t=t(\alpha,\tau)=t_\tau(\epsilon)$ is allowed, the vector field $Z$ associated to variations is no longer aligned along the line of coordinates $\Eps$ and we have $Z=\derp{q}{\Eps} + \dot{ q}\der{t_\tau}{\Eps}$.
 }
\label{fig:VarHam}
\end{center}
\end{figure}

To see that, let us consider, without loss of generality, that a normalized vector
field verifying $\dd \Eps(\tilde{Z})=1$ may be obtained by choosing a mapping
\begin{eqnarray*}
        \mathcal{T}:U_1&\mapsto& U_2\\
        (\alpha,\tau)&\mapsto&(\epsilon,t)
\end{eqnarray*}
such that
 \begin{equation*}
                \mathcal{T}(\alpha,\tau)=
        \begin{cases}
                \epsilon=\alpha\\
                t=t(\alpha,\tau)=t_\tau(\epsilon)
        \end{cases}
\end{equation*}
for any time parametrization $t=t(\alpha,\tau)=t_\tau(\epsilon)$ (see
figure~fig.(\ref{fig:VarHam})). Since $\tilde{Z}$ is generated by
$\vec{\partial}_\alpha$ throught the differential $\dd \mathcal{T}$ by
 \begin{equation*}
      \tilde{Z}=          \dd \mathcal{T}(\vec{\partial}_\alpha)=
        \begin{cases}
                \dd \epsilon(\vec{\partial}_\alpha)=\dd \alpha(\vec{\partial}_\alpha)=1\\
                \dd t(\vec{\partial}_\alpha)= \left(\derp{t}{\alpha} \dd{\alpha} + \derp{
t}{\tau}\dd{\tau}\right)(\vec{\partial}_\alpha)=\der{t_\tau(\epsilon)}{\epsilon},
        \end{cases}
\end{equation*}
it appears clearly that the second component $\dd t(\tilde{Z})=\der{t_\tau}{\Eps}$ is related to the time parametrization $t_\tau(\epsilon)$. So finally, we have $\tilde{Z}= \derp{}{\Eps} +
\der{t_\tau}{\Eps}\derp{}{t}$. Evaluate the differential $\dd q =  \derp{q}{\Eps}
\dd{\Eps} + \derp{ q}{t}\dd{t}$ along this direction leads then to a general
expression of the vector field in the configuration space
 \begin{eqnarray*}
Z=\dd q (\tilde{Z}) &=&  \derp{q}{\Eps} \dd{\Eps}\left( \derp{}{\Eps} +
\der{t_\tau}{\Eps}\derp{}{t}\right)
 + \derp{ q}{t}\dd{t}\left( \derp{}{\Eps} + \der{t_\tau}{\Eps}\derp{}{t}\right)\\
 &=&  \derp{q}{\Eps} + \derp{ q}{t}\der{t_\tau}{\Eps}=    \derp{q}{\Eps} + \dot{ q}
\dd t( \tilde{Z})\\
& \Rightarrow&\quad\derp{q}{\Eps}=\dd q (\tilde{Z})- \dot{ q}\dd t( \tilde{Z})
\end{eqnarray*}
In the Cartan's lesson, this is written, by abuse of notation
\begin{equation}\label{eq:ContactCartan_app}
\left. \dfrac{\partial  q_\Eps}{\partial \Eps}\right|_{\Eps=0} = \dd  q( Z) -  \dot{
q} \dd t( Z),
\end{equation}


\section{Jet prolongation of vector fields }\label{app:jet-p}
There are two ways to obtain the one-jet prolongation of vector fields given by \refp{anx:liftVF:geom}
\begin{itemize}
\item using the invariance of the contact module,
\item saying that a contact transformation preserves the holonomicity.
\end{itemize}

\subsection{Preservation of the contact module}\label{sect:jet-pContactInv}

\begin{definition}[Contact transformation]
A local diffeomorphism $\mathcal{T} : \JE \rightarrow \JE$ defines a contact transformation if it preserves the contact ideal, meaning that if $\sigma$ is any contact form on $\JE$, then $\mathcal{T}^*\sigma$ is also a contact form.
\end{definition}
\begin{proposition}[]
 The flow generated by a vector field $j^1Z$ on the jet space $\JE$ forms a one-parameter group of contact transformations if and only if the Lie derivative $\Lie{J^1Z}(\sigma)$ of any contact form $\sigma$ preserves the contact ideal or module.
\end{proposition}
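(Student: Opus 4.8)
The plan is to reduce the statement to the standard correspondence between Lie derivatives and flows, and then argue both implications from a single identity. Write $X:=j^1Z$, let $\{\Phi_t\}$ denote the (local) flow it generates on $\JE$, and let $\mathcal{C}$ be the contact module, i.e. the $\C^\infty(\JE)$-submodule of $1$-forms generated by the contact forms $\fc{A}$ of~\refp{eq:contact}; by the definition above, $\Phi_t$ is a contact transformation exactly when $\Phi_t^{*}\mathcal{C}\subseteq\mathcal{C}$. The one tool I would use throughout is the standard pair of identities
\[
\frac{d}{dt}\Phi_t^{*}\sigma = \Phi_t^{*}(\Lie{X}\sigma) = \Lie{X}(\Phi_t^{*}\sigma),
\]
the second equality holding because $\Phi_t^{*}$ commutes with the Lie derivative along its own generator.

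For the direct implication ($\Rightarrow$) I would assume every $\Phi_t$ is a contact transformation, so that for a fixed contact form $\sigma$ the curve $t\mapsto\Phi_t^{*}\sigma$ takes its values in $\mathcal{C}$. Since $\mathcal{C}$ is at each point a linear subspace of the cotangent space, the derivative at the origin of a curve lying in $\mathcal{C}$ again lies in $\mathcal{C}$; evaluating the identity at $t=0$ gives $\Lie{X}\sigma=\left.\tfrac{d}{dt}\right|_{0}\Phi_t^{*}\sigma\in\mathcal{C}$. Hence $\Lie{X}$ maps contact forms into the contact module.

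For the converse ($\Leftarrow$) I would fix $\sigma\in\mathcal{C}$, set $\sigma_t:=\Phi_t^{*}\sigma$, and read the identity as the linear evolution equation $\dot\sigma_t=\Lie{X}\sigma_t$ with $\sigma_0=\sigma$. The hypothesis is precisely that $\mathcal{C}$ is invariant under the linear operator $\Lie{X}$, so $\mathcal{C}$ should be preserved by the associated family $\Phi_t^{*}$. To make this rigorous and sidestep the convergence of the formal exponential $\sum_k t^k\Lie{X}^k\sigma/k!$, I would decompose $1$-forms in the adapted coframe $\{\dx{\mu},\fc{A},\dv{\mu}{A}\}$, in which $\mathcal{C}=\mathrm{span}\{\fc{A}\}$ is characterised by the vanishing of the $\dx{\mu}$- and $\dv{\mu}{A}$-components. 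Writing $\sigma_t=\sigma_t^{\mathcal{C}}+\sigma_t^{\perp}$ and projecting the evolution equation onto the complement, the invariance $\Lie{X}\mathcal{C}\subseteq\mathcal{C}$ makes the transverse part obey a closed homogeneous system $\dot\sigma_t^{\perp}=(\Lie{X}\sigma_t^{\perp})^{\perp}$ with zero initial datum, whence $\sigma_t^{\perp}\equiv0$ and $\Phi_t^{*}\sigma\in\mathcal{C}$ for all $t$.

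The pullback identities and the coframe bookkeeping are routine. The main obstacle I anticipate is the final uniqueness step: because $\Lie{X}$ is a first-order differential operator, $\dot\sigma_t^{\perp}=(\Lie{X}\sigma_t^{\perp})^{\perp}$ is a transport equation rather than a genuine ODE, so "uniqueness with zero initial data" must be justified by propagating along the integral curves of $X$. I would handle this by restricting to a fixed integral curve $\gamma(t)=\Phi_t(p)$, where the equation collapses to a finite-dimensional linear ODE for the transverse components of $\sigma_t$ at $\gamma(t)$; the classical uniqueness theorem then yields their vanishing, completing the equivalence.
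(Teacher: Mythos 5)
Your argument is sound in outline, but be aware that the paper offers no proof of this proposition at all: it is stated as a bridge between the definition of a contact transformation and the component condition $\Lie{j^1Z}(\fc{A})=\zeta^A_B \fc{B}$ of~\refp{preservContactModEq}, attributed to Echeverria et al., and the authors explicitly say they find that formulation unclear and switch to a different, geometric definition of the prolongation. So there is nothing in the paper to compare against, and what follows assesses your proof on its own terms. The forward implication is correct as written. For the converse, your diagnosis of the difficulty (the evolution equation $\dot\sigma_t=\Lie{X}\sigma_t$ is a transport equation, not an ODE) is accurate, and the method of characteristics does repair it, but note two points. First, the characteristic curves are the \emph{backward} flow lines: setting $w(t):=\sigma_t^{\perp}\big|_{\Phi_{-t}(q)}$ makes the $\partial_t$ and $X$ contributions cancel, so $w$ obeys a genuine finite-dimensional linear ODE with $w(0)=0$ and hence vanishes; evaluating along $\gamma(t)=\Phi_t(p)$ as you propose produces a term $2X\sigma_t^{\perp}$ and the system does not close. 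Second, you can sidestep the transport equation entirely by applying the other form of your identity to the generators themselves: $\tfrac{d}{dt}\Phi_t^*\fc{A}=\Phi_t^*(\Lie{X}\fc{A})=(\zeta^A_B\circ\Phi_t)\,\Phi_t^*\fc{B}$, which at each fixed point $p$ is a linear ODE for the $N$ covectors $(\Phi_t^*\fc{A})\big|_p$ with initial data $\fc{A}\big|_p$; its solution is a pointwise linear combination of the $\fc{B}\big|_p$, so $\Phi_t^*\fc{A}$ lies in the contact module immediately, and the general case follows from $\C^\infty(\JE)$-linearity of the module.
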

So, starting from a general vector field $Z=\alpha^{\mu}\basex{\mu}+\beta^A\basey{A}$ where $\alpha^{\mu}$ and $\beta^A$depend on $(\x{\mu}$, $\y{A})$, and writing the jet prolongation $j^1Z$ on the jet space $\JE$ as
\begin{equation*}
j^1Z=\alpha^{\mu}\basex{\mu}+\beta^A\basey{A}+\gamma^A_{\mu}\basev{A}{\mu},
\end{equation*}
the only problem is to calculate the coefficients $\gamma^A_{\mu}$. 

Echeverria \& al~\cite{Echeverria00} traduce the preservation of the contact module using only the components $\fc{A}= \dy{A}-\v{\nu}{A} \dx{\nu}$ of the contact form \refp{anx:eqFC} by the fact that they must have 
\begin{equation}\label{preservContactModEq}
\Lie{J^1Z}(\fc{A})=\zeta^A_B \fc{B}, \quad \zeta^A_B\in \C^\infty(\JE).
\end{equation}
It is then straightforward, using the Cartan formula in the Lie derivative $\Lie{J^1Z}$ and identifying the different terms, to obtain the 1-jet prolongation of $Z$ given by proposition~\ref{anx:liftVF:geom}. However, the formulation in components~\refp{preservContactModEq} doesn't appear to be clear for us. Thus, in the next section we propose a general and geometric definition for the lift of vector fields.

\subsection{Jet prolongation- geometrical definition}\label{sect:jet-pGeom}
This section rely on another way to compute the one-jet prolongation of vector fields, defining
\begin{definition}[The contact transformation]
A local diffeomorphism $\mathcal{T} : \JE \rightarrow \JE$ defines a contact transformation if it maps holonomic sections to holonomic ones.
\end{definition}
So, starting from an arbitrary vector field $Z\in \Chp{E}$, the problem is to extend it to $j^1Z\in \Chp{\JE}$. To do that, we consider the transformation of a holonomic section, $j^1\sect$, with tangent vector $\bar X$, that derives from a section $\sect$ of $\pi$ with tangent vector $X$. The idea is then to obtain a equation for $j^1Z$ specifying that the transformed section is also holonomic by checking the holonomic criteria $\fcv\big|_{\ptJE}(\bar X) =0$ (proposition \ref{prop:holsec2}).

Nonetheless, this criteria uses elements of the one-jet bundle tangent space, $\bar X\in \Chp{\JE}$, while the given data $Z$ of our problem  belongs to $TE$. Even if this criteria may also be projected in $TE$, it involves a inconvenient differential $\dd j^1\sect$ since, using the tangent vector $X\in T_{\ptE}E$ to the section $\sect$, we can adapt the holonomic criteria to
\begin{equation}\label{anx:condHol}
\fcv\big|_{\ptJE}(\bar X) = \fcv  \Big|_{\ptJE}\left(\dd j^1\sect \circ \dd\pi (X) \right) = 0.
\end{equation}
However, this can be simplified by noting that the contact form, $\fcv_{\ptJE}= (\dy{A}-\v{\mu}{A} \dx{\mu}) \otimes \basey{A}$, does not involve the  form $\dv{\mu}{A}$. Its evaluation along $\bar X$ or along the natural extension at point $\ptJE$ of $X=\dd\pi^1(\bar X)$ given by $\tilde{X} = (X,0)$ gives the same result
\begin{equation*}
\fcv\big|_{\ptJE}(\bar X)=\fcv\big|_{\ptJE}(\tilde{X} )=\fcv\big|_{\ptJE}((X,0)).
\end{equation*}

\paragraph*{Remark}
To be more precise, this property of the contact form $\fcv$ may be expressed more intrinsically  by
\begin{equation*}
\fcv\big|_{\ptJE}=\pi_{\b v}^*\fcv\big|_{\ptJE},\quad \pi_{\b v}=\varphi_{\b v}\circ\pi^1
\end{equation*}
where $\varphi_{\b v}$ is a constant section of $\pi^1$ given by
\begin{eqnarray*}
\varphi_{\b v} : E &\rightarrow & \JE\\
	\ptE&\mapsto& \varphi_{\b v} (\ptE) =(\ptE,\b v)= \ptJE.
\end{eqnarray*}
That is to say that $\fcv$ is invariant by projection on specific hyper-planes of $\JE$.
\begin{proof}
On one hand, by definition, $\fcv\big|_{\ptJE}(\bar X ) = \dd^V_\ptE \sect (\dd\pi^1(\bar X)) = \dd^V_\ptE\sect (X)$. On the other hand
\begin{eqnarray*}
\pi_{\b v}^*\fcv\big|_{\ptJE}(\bar X )&=&
(\varphi_{\b v}\circ\pi^1)^*\fcv\big|_{\ptJE}(\bar X )=(\pi^1)^*(\varphi_{\b v})^*\fcv\big|_{\ptJE}(\bar X )\\
&=&(\varphi_{\b v})^*\fcv\big|_{\ptJE}(\dd \pi^1(\bar X ))=\fcv\big|_{\ptJE}(\dd \varphi_{\b v}(\dd \pi^1(\bar X )))\\
&=&\fcv\big|_{\ptJE}(\dd \varphi_{\b v} (X)),\quad \text{since }\dd \pi^1(\bar X )=X\\
&=& \dd^V_\ptE \sect (\dd\pi^1\circ \dd \varphi_{\b v}(X)) = \dd^V_\ptE \sect(X).\diamond
\end{eqnarray*}
\end{proof}
This property allows us to project the holonomic criteria (prop. \ref{prop:holsec2}) on $TE$ by defining 
\begin{definition}[The holonomic map]\label{def:holMap}
Let $\ptE = (\x{\mu}, \y{A})$ be a point of $E$ in a local system of coordinates, $X$ be vector of $T_\ptE E$ and $\fcv$ be the canonical contact form on $\JE$. The holonomic map is
\begin{eqnarray*}
\Hol : TE &\rightarrow& \JE\\
	(\ptE, X) &\mapsto& \ptJE = (\x{\mu}, \y{A}, \v{\mu}{A})
\end{eqnarray*}
with ${\b v}$ such that 
\begin{equation}\label{anx:condHol1}
\fcv  \Big|_{\ptJE}\left( (X, 0)\right) = 0\quad \text{(or $(\varphi_{\b v})^*\fcv\big|_{\ptJE}(X)=0$).}
\end{equation}
\end{definition}
And conversely,
\begin{definition}[The inverse holonomic map]\label{def:invholMap}
Let $\ptJE = (\x{\mu}, \y{A}, \v{\mu}{A})$ be a point of $\JE$ in a local system of coordinates. The inverse holonomic map is
\begin{eqnarray*}
\Hol^{-1} : \JE &\rightarrow& TE\\
	\ptJE &\mapsto& (\ptE, X)  
\end{eqnarray*}
with $\ptE = (\x{\mu}, \y{A})$ and $X$ a vector of $T_\ptE E$ such that 
\begin{equation}\label{anx:condHol2}
\fcv  \Big|_{\ptJE}\left( (X, 0)\right) = 0\quad \text{(or $(\varphi_{\b v})^*\fcv\big|_{\ptJE}(X)=0)$}.
\end{equation}
\end{definition}
All this allows us to obtain the proposition 
\begin{proposition}[]
If $X$ is a tangent vector to a section $\sect$ at point $\ptE\in E$ then the holonomic lifted section $j^1\sect$ goes through the point  $\ptJE = \Hol(\ptE,X)$ of $\JE$.
\end{proposition}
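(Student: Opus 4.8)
The plan is to show that the point $\prolonge{\sect}(\b x)$, with $\b x=\pi(\ptE)$, satisfies the defining property of $\Hol(\ptE,X)$ stated in Definition~\ref{def:holMap}. By the definition of the canonical prolongation recalled in Section~\ref{app:01}, the holonomic section $\prolonge{\sect}$ passes at $\b x$ through the point $\ptJE=(\x\mu(\b x),\sect^A(\b x),\derp{\sect^A}{\x\mu}(\b x))$. Since $\ptE=\sect(\b x)$ lies on the section, the base and fibre coordinates $(\x\mu,\y{A})$ of $\ptJE$ already coincide with those of $\ptE$, so only the velocity coordinates remain to be matched; concretely, it remains to verify that the choice $\v{\mu}{A}=\derp{\sect^A}{\x\mu}$ realizes the holonomy condition $\fcv\big|_{\ptJE}\big((X,0)\big)=0$.

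First I would lift $X$ to the prolonged section. As $X$ is tangent to the image of $\sect$ at $\ptE$, it is of the form $X=T_\b x\sect(w)$ with $w=T\pi(X)\in T_\b x M$, and it induces the vector $\bar X=T_\b x\prolonge{\sect}(w)$ tangent to the image of $\prolonge{\sect}$ at $\ptJE$. By Proposition~\ref{prop:holsec2} every vector tangent to a holonomic section lies in the kernel of the contact form, whence $\fcv\big|_{\ptJE}(\bar X)=0$.

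The second step transfers this vanishing from $\bar X$ to $(X,0)$. Expanding $\bar X$ in the natural basis $\bpm\basex{\mu}&\basey{A}&\basev{A}{\mu}\epm$ shows that $\bar X$ and $(X,0)$ share the same components along $\basex{\mu}$ and $\basey{A}$ and differ only along the velocity directions $\basev{A}{\mu}$. Because the contact form $\fcv_{\ptJE}=(\dy{A}-\v{\mu}{A}\dx{\mu})\otimes\basey{A}$ carries no $\dv{\mu}{A}$ term---this is exactly the projection invariance $\fcv\big|_{\ptJE}=\pi_{\b v}^*\fcv\big|_{\ptJE}$ established in the Remark above---its evaluation is insensitive to the velocity components, so $\fcv\big|_{\ptJE}(\bar X)=\fcv\big|_{\ptJE}\big((X,0)\big)$. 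Combining the two steps gives $\fcv\big|_{\ptJE}\big((X,0)\big)=0$, which is precisely the condition of Definition~\ref{def:holMap}; hence $\prolonge{\sect}(\b x)=\Hol(\ptE,X)$. As a direct cross-check one may write $X=w^\mu\big(\basex{\mu}+\derp{\sect^A}{\x\mu}\basey{A}\big)$ and read off $\fcv\big|_{\ptJE}\big((X,0)\big)=\big(w^\mu\derp{\sect^A}{\x\mu}-\v{\nu}{A}w^\nu\big)\basey{A}$, which vanishes exactly for $\v{\mu}{A}=\derp{\sect^A}{\x\mu}$.

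The delicate point---and where care is needed in reading the statement---is that a single tangent vector $X$ does not pin down all the velocity coordinates $\v{\mu}{A}$ when $\dim M>1$, so $\Hol$ of Definition~\ref{def:holMap} must be understood as returning the value compatible with the holonomy condition. The genuine content of the proposition is then the consistency of the section's velocities $\derp{\sect^A}{\x\mu}$ with that condition, which is what the argument above establishes; in the one-variable case $\dim M=1$ the condition pins down $\b v$ uniquely and one recovers the classical contact relation~\refp{eq:ContactCartan} of Cartan's lesson.
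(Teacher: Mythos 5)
Your proof is correct and follows exactly the route the paper intends: the paper gives no explicit proof of this proposition (it is presented as a consequence of the preceding build-up), and your argument — lift $X$ to $\bar X=T\prolonge{\sect}(w)$, invoke Proposition~\ref{prop:holsec2} to get $\fcv\big|_{\ptJE}(\bar X)=0$, then use the absence of $\dv{\mu}{A}$ terms in $\fcv$ (the projection-invariance remark) to transfer the vanishing to $(X,0)$ — is precisely the chain of observations the paper assembles before stating the result. Your closing remark that a single tangent vector underdetermines the velocity coordinates when $\dim M>1$ (so that $\Hol$ must be read as a consistency condition, and the paper's own algorithm accordingly uses $n+1$ vectors $X_\mu$) is a legitimate and worthwhile clarification of the statement.
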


According to this, the one-jet prolongation, $j^1Z$, of $Z$ can now be defined using the differential of the holonomic map without referring to sections as
\begin{definition}[One-jet prolongation of vector fields]\label{def:JetProl}
Let $\ptJE \in \JE$ and $(\ptE,X)\in T_{\ptE}E$ such that $(\ptE,X)=\Hol^{-1}(\ptJE \in \JE)$. The one-jet prolongation, $j^1Z$ at point $\ptJE$ of a vector field $Z\in \Chp{E}$ is the differential of the holonomic map $\dd\Hol : TT_{(\ptE, X)}E  \rightarrow  T_{\ptJE}\JE$ evaluated in the $Z$ direction
\begin{equation*}
j^1Z=\dd\Hol(Z).
\end{equation*} 
\end{definition}
In practice the following algorithm is used
\begin{enumerate}
\item start from a given point $\ptJE\in \JE$ and a vector field $Z$ over $E$,
\item compute $(\ptE,X)=\Hol^{-1}(\ptJE)$. This induces a field of (hyper)planes 
characterized by $n+1$ tangent and normalized vector fields $X_{\mu}$, representative of the point $\ptJE$
\item consider an integral curve $\mathcal{C} : \R \supset I \rightarrow TE$ of the one-parameter transformation group  $\tau_{ \Eps}^Z$ along $Z$ given by $\Eps\mapsto (\ptE_\Eps,{X}_\mu^{\Eps})=(\tau^Z_\Eps(\ptE),T_{\ptE}{(\tau_{ \Eps}^Z)} (X_\mu))$, where the notation $T{f}(X)$ means the tangential map of $f$ evaluated in the direction $X$, also denoted $f_*X$ pushforward of $X$ by $f$.
\item use the Lie braket $\crochetL{X_\mu}{Z}$ to express $T_{\ptE}{(\tau_{ \Eps}^Z)} (X_\mu)$
\item determine the curve $\Eps \mapsto \ptJE_\Eps=\Hol(\ptE_\Eps,{X}_\mu^{\Eps})$ 
\item compute the differential at point $\ptJE$ by the infinitesimal procedure
 \begin{equation*}
j^1Z=\lim_{\Eps\rightarrow 0}\frac{\ptJE_\Eps-\ptJE}{\Eps}.
\end{equation*}
\end{enumerate}
Let us see this algorithm in details.
\paragraph*{Step (i)} Let $\ptJE = (\x{\mu}, \y{A}, \v{\mu}{A})$ be a given point of $\JE$ in a local system of coordinates and let $Z\in \Chp{E}$ be a vector field over $E$ given by $Z=\alpha^{\mu}\basex{\mu}+\beta^A\basey{A}$.
\paragraph*{Step (ii)} 
The inverse holonomic map \ref{def:invholMap} is used to compute the tangent vector fields representative of the point $\ptJE$ by  $(\ptE,X)=\Hol^{-1}(\ptJE)$, that is
\begin{equation}\label{holonomicXmu}
\b\fcv_{\ptJE}(({X},0))=0\Leftrightarrow \dy{A}\left({X}\right)=\v{\nu}{A}\dx{\nu}\left({X}\right).
\end{equation}
It gives  a field of (hyper)planes characterized by $n+1$ tangent vector fields. Without loss of generality, these vectors can be 
normalized to
\begin{equation}\label{eq:NX}
{X}_{\mu}=\basex{\mu}+\v{\mu}{B}\basey{B},\quad \dx{\nu}\left({X}_{\mu}\right)=\delta_\mu^\nu
,\quad ({\mu}=1,\ldots,n,0).
\end{equation}
Obviously ${X}_{\mu}$ verifies \refp{holonomicXmu}.
\paragraph*{Step (iii)}
Let us consider an integral curve $\mathcal{C}$ of the one-parameter transformation group  $\tau_{ \Eps}^Z$ along $Z$  given by
\begin{eqnarray*}
\mathcal{C}: \R \supset I &\rightarrow& E\\
\Eps&\mapsto& (\ptE_\Eps,{X}_\mu^{\Eps})=(\tau^Z_\Eps(\ptE),T_{\ptE}{(\tau_{ \Eps}^Z)} (X_\mu)).
\end{eqnarray*}
\paragraph*{Step (iv)}
This last procedure involves the bracket of vector fields. More precisely, the Lie derivative of the two vector fields $X_\mu$ and $Z$, at point $\ptE$, is given by (see~\cite{jost2005riemannian}, for example),
\begin{eqnarray*}
&&\Lie{Z}X_\mu=\crochetL{Z}{X_\mu}\Big|_\ptE=\der{}{ \Eps}\bigg|_{ \Eps=0}\left((\tau_\Eps^Z)^* X_\mu\big|_{\ptE_\Eps}\right)\\
&&=\der{}{ \Eps}\bigg|_{ \Eps=0}\left(T_{\ptE_\Eps}{(\tau_{- \Eps}^Z)} (X_\mu)\right)
=\lim_{ \Eps\rightarrow 0}\frac{T_{\ptE_\Eps}{(\tau_{- \Eps}^Z)} (X_\mu)-X_\mu\big|_{\ptE}}{ \Eps}
\end{eqnarray*}
where $\ptE_\Eps=\tau_{\Eps}^Z(\ptE)$. Evaluating this definition at point $\ptE_\Eps$, one obtains a finite expansion of the tangent map of the transformation $\tau_{ \Eps}^Z$
\begin{equation}\label{expXmu}
X_\mu^{ \Eps}\big|_{\ptE_\Eps}=T_{\ptE}{(\tau_{ \Eps}^Z)} (X_\mu)=X_\mu\big|_{\ptE_\Eps}+ \Eps\crochetL{X_\mu}{Z}\big|_{\ptE_\Eps}+\mathcal{O}( \Eps^2).
\end{equation}
\paragraph*{Step (v)}
Using the definition \ref{def:holMap}, the curve $\Eps \mapsto \ptJE_\Eps=\Hol(\ptE_\Eps,{X}_\mu^{\Eps})$ is obtained by solving
\begin{equation}\label{holonomicXmuEps}
\b\fcv_{\ptJE_\Eps}(({X}_\mu^{\Eps},0))=0\Leftrightarrow \dy{A}\left({X}_\mu^{\Eps}\right)=(v_{\Eps})^A_{\mu}\dx{\nu}\left({X}_\mu^{\Eps}\right).
\end{equation}
\paragraph*{Step (vi)}
Using the finite expansion \refp{expXmu} and tacking into account \refp{holonomicXmu} and \refp{eq:NX}, the holonomic criteria \refp{holonomicXmuEps} becomes 
\begin{equation}
\Big((v_\Eps)_\mu^A - \v{\mu}{A} \Big)= \Eps \, \fc{A}\big|_{\ptJE_\Eps}(\crochetL{X_\mu}{Z}) + \mathcal{O}(\Eps^2),
\end{equation}
where the contact form \refp{eq:contact} is used again. This expansion furnishes the third component of the jet-prolongation, $j^1Z$, at point $\ptJE=(\x{\mu},\y{A},\v{\mu}{A})$ 
\begin{equation}\label{eq:3C}
\gamma^A_{\mu}=\dv{\mu}{A}(j^1Z)=
\lim_{\Eps\rightarrow 0}\frac{(v_\Eps)_\mu^A - \v{\mu}{A}}{\Eps}=
\fc{A}\big|_{\ptJE}(\crochetL{ X_\mu}{Z}) = \fc{A}\big|_{\ptJE}(\Lie{ X_\mu}{Z}).
\end{equation}
This is another form of the proposition \ref{anx:liftVF:geom}. This can be seen by inserting into \refp{eq:3C} the following computation of the Lie bracket (where it is important to notice that the ${X}_{\mu}$ are defined for a fixed $\v{\mu}{B}$). For an arbitrary function $f$
\begin{eqnarray*}
\crochetL{ X_\mu}{Z}f &=&  X_\mu Z f - Z  X_\mu f 
=  X_\mu (\alpha_\nu \basex{\nu} + \beta^A \basey{A}) f - Z(\basex{\mu} + \v{\mu}{B} \basey{B}) f \\
&=& 
  (\basex{\mu} + \v{\mu}{B} \basey{B}) (\alpha_\nu \derp{f}{\x{\nu}} + \beta^A\derp{f}{\y{A}}) -  (\alpha_\nu \basex{\nu} + \beta^A \basey{A})  (\derp{f}{\x{\mu}}+ \v{\mu}{B}\derp{f}{\y{B}})\\
  &=& \left( \derp{\alpha_\nu}{\x{\mu}} + \v{\mu}{B}\derp{\alpha_\nu}{\y{B}}\right)\derp{f}{\x{\nu}} + \left(  \derp{\beta^A}{\x{\mu}}+ \v{\mu}{B} \derp{\beta^A}{\y{B}}\right) \derp{f}{\y{A}}.
\end{eqnarray*}


\section{The Hamilton principle: calculus of variations}\label{anx:deltaA}
This appendix is dedicated to obtain the Poincaré-Cartan form $\PCL$ from the Hamilton principle (definition \ref{def:HP}). In other words, to establish equation \refp{eq:dA2}. This step is not a straightforward exercise since it consists
\begin{enumerate}
\item to formulate the Hamilton principle using a Lie derivative,
\item to use the Cartan formula and the lift of vector field,
\item to "integrate by part",
\item and to introduce into the boundary integral the contact form to reveal the Poincaré-Cartan form.
\end{enumerate}
Following~\cite{Echeverria00} in applying the Hamilton principle \ref{def:HP}, we look for “stationary” action with respect to a diffeomorphism $\tau_\Eps^Z$ : a local one-parameter transformation group associated to an arbitrary vector field  $Z\in \Chp{E}$. This transformation induces a diffeomorphism $\tau_M:M\rightarrow M$ on the basis $M$ in such a way that the following diagram commutes
\begin{diagram}
\JE&\rTo^{\prolonge{\tau_\Eps^Z}}&\JE\\
\dTo^{\pi^1}& &\dTo^{\pi^1}\\
E&\rTo^{\tau_\Eps^Z}&E\\
\dTo^{\pi}& &\dTo^{\pi}\\
M&\rTo^{\tau_M}&M\\
\end{diagram}
Then, the variations of $\sect$ are given by a holonomic section $\sect_\Eps=\tau_\Eps^Z \circ \sect \circ \tau_M^{-1}$. Since these diffeomorphisms are characterized by the following property
$
\prolonge{\sect_\Eps}=\prolonge{(\tau_\Eps^Z\circ\sect\circ \tau_M^{-1})} = \prolonge{\tau_\Eps^Z} \circ\prolonge{\sect}\circ \tau_M^{-1}
$, the variation of the action functional 
\begin{equation*}
\delta \mathcal{A} =\left.\der{}{\Eps}\right|_{\Eps = 0}\mathcal{A}(\sect_\Eps)
= \lim_{\Eps\rightarrow 0}
\frac{\mathcal{A}(\sect_\Eps)- \mathcal{A}(\sect)}{\Eps},
\end{equation*}
may be written, for $\mathcal{U}\subset M$, using a Lie derivative as
\begin{eqnarray*}
\delta \mathcal{A} &=&
\lim_{\Eps\rightarrow 0}\frac{1}{\Eps} 
\left(\int_{\tau_M(\mathcal{U})} \pullback{\sect_\Eps}\Lag - \int_{\mathcal{U}} \pullback{\sect}\Lag\right)
=\lim_{\Eps\rightarrow 0}\frac{1}{\Eps} 
\left(\int_{\tau_M(\mathcal{U})} \pullback{\tau_\Eps^Z \circ\prolonge{\sect}\circ \tau_M^{-1}}\Lag - \int_{\mathcal{U}} \pullback{\sect}\Lag\right)\\
&=&\lim_{\Eps\rightarrow 0}\frac{1}{\Eps} 
\left(\int_{\tau_M(\mathcal{U})} (\tau_M^{-1})^*\pullback{\sect}\pullback{\tau_\Eps^Z} \Lag - \int_{\mathcal{U}} \pullback{\sect}\Lag\right)
=\lim_{\Eps\rightarrow 0}\frac{1}{\Eps} 
\left(\int_{\mathcal{U}} \pullback{\sect}\pullback{\tau_\Eps^Z} \Lag - \int_{\mathcal{U}} \pullback{\sect}\Lag\right)\\
&=&\lim_{\Eps\rightarrow 0}\frac{1}{\Eps} 
\int_{\mathcal{U}} \pullback{\sect}\left[\pullback{\tau_\Eps^Z} \Lag -\Lag\right]
= \int_{\mathcal{U}}\pullback{\sect}\Lie{j^1Z}\Lag.
\end{eqnarray*}
Using the Cartan formula, $\Lie{j^1Z}\Lag=\dd(j^1Z\contr \Lag)+j^1Z\contr
\dd\Lag$, and applying Stokes' theorem, this yields to
\begin{align}\label{eq:split1}
\delta \mathcal{A} = \int_{\partial\mathcal{U}}\pullback{\sect}\left(j^1Z\contr
\Lag\right)+\int_{\mathcal{U}}\pullback{\sect}j^1Z\contr d\Lag= I_1 + I_2.
\end{align}
Let us compute $j^1Z\contr d\Lag$ to determine $I_2$ first. It is convenient to introduce the contact form $\fcv$ in the proposition \ref{anx:liftVF:geom} which gives the lift of $Z$ at point $\ptJE=(x^{\mu},\y{A},\v{\mu}{A})$. With the notation $\z{A}=\beta^A- \v{\nu}{A}\alpha^{\nu}=\fc{A}(j^1Z)$ and the partial derivatives of the Lagrangian, we have
\begin{eqnarray*}
j^1Z\contr \dd\Lag&=&j^1Z\contr \left(\dd\Ld\wedge \vol\right)
=j^1Z\contr \left(
\derp{\Ld}{\y{A}}\fc{A}\wedge\vol+\derp{\Ld}{\v{\mu}{A}}\dv{\mu}{A}\wedge\vol
\right)\quad \text{(since $\dy{A}\wedge\vol=\fc{A}\wedge\vol$)}\\
&=&\derp{\Ld}{\y{A}}
\left[
\fc{A}(j^1Z)\vol-\fc{A}\wedge\vol(j^1Z)
\right]+
\derp{\Ld}{\v{\mu}{A}}
\left[
\dv{\mu}{A}(j^1Z)\vol-\dv{\mu}{A}\wedge\vol(j^1Z)
\right]\\
&=&\derp{\Ld}{\y{A}}
\left[
\z{A}\vol-\alpha^\lambda\fc{A}\wedge\dnx{n}{\lambda}
\right]+
\derp{\Ld}{\v{\mu}{A}}
\left[
\left(\derp{\z{A}}{\x{\mu}}+\v{\mu}{B}\derp{\z{A}}{\y{B}}\right)\vol
-\alpha^\lambda\dv{\mu}{A}\wedge\dnx{n}{\lambda}
\right],
\end{eqnarray*}
using the notation $d^{n}x_{\mu}=\basex{\mu}\contr \vol$ with the identity $dx^{\nu}\wedge d^{n}x_{\mu}=\vol\delta_{\nu\mu}$ (see appendix \ref{app:OUF}). The term $\derp{\z{A}}{\x{\mu}}\vol$ may be written in another way 
\begin{eqnarray*}
\derp{\z{A}}{\x{\mu}}\vol=\derp{\z{A}}{\x{\nu}}\vol\delta_{\mu\nu}
&=&\derp{\z{A}}{\x{\nu}}\dx{\nu}\wedge\dnx{n}{\mu}
=\left(\dd \z{A}-\derp{\z{A}}{\y{B}}\dy{B}-\derp{\z{A}}{\v{\lambda}{B}}\dv{\lambda}{B}\right)\wedge\dnx{n}{\mu}\\
&=&\dd \z{A}\wedge\dnx{n}{\mu}
-\derp{\z{A}}{\y{B}}\dy{B}\wedge\dnx{n}{\mu}+
\alpha^\lambda\dv{\lambda}{A}\wedge\dnx{n}{\mu},
\end{eqnarray*}
since $\derp{\z{A}}{\v{\lambda}{B}}= -\alpha^\lambda\delta_{B}^{A}$. Futhermore, taking into account that ("integration by part") 
\begin{eqnarray*}
\derp{\Ld}{\v{\mu}{A}}\dd \z{A}\wedge\dnx{n}{\mu}=\dd \left(\z{A}\derp{\Ld}{\v{\mu}{A}}\dnx{n}{\mu}\right)
-\z{A} \dd \left(\derp{\Ld}{\v{\mu}{A}}\right)\wedge\dnx{n}{\mu},
\end{eqnarray*}
and writing $\vol=\frac{1}{n+1}\dx{\mu}\wedge\dnx{n}{\mu}$, a total differential and the contact form $\fcv$ are displayed
\begin{eqnarray*}
j^1Z\contr \dd\Lag&=&
\dd \left(\z{A}\derp{\Ld}{\v{\mu}{A}}\dnx{n}{\mu}\right)
-\left[ \z{A} \left(
\dd \left(\derp{\Ld}{\v{\mu}{A}}\right)-
\frac{1}{n+1}\derp{\Ld}{\y{A}}\dx{\mu}\right)\wedge\dnx{n}{\mu}\right]\\
&-&\left[
\derp{\Ld}{\v{\mu}{A}}\derp{\z{A}}{\y{B}}
+\derp{\Ld}{\y{B}}\alpha^\mu
\right]\fc{B}\wedge\dnx{n}{\mu}
+\left[
\derp{\Ld}{\v{\mu}{A}}\alpha^\lambda
\left(
\dv{\lambda}{A}\wedge\dnx{n}{\mu}-
\dv{\mu}{A}\wedge\dnx{n}{\lambda}
\right)
\right]
\end{eqnarray*}
Using the holomic criteria $\pullback{\sect}\fc{B}\equiv 0$, and the fact that
\begin{equation*}
\pullback{\sect}
\left(
\dv{\lambda}{A}\wedge\dnx{n}{\mu}-
\dv{\mu}{A}\wedge\dnx{n}{\lambda}
\right)
=
\derpm{\sect^A}{\x{\mu}}{\x{\lambda}}-\derpm{\sect^A}{\x{\lambda}}{\x{\mu}}
\vol=0,
\end{equation*}
the two last terms are canceled when they are evaluated along $\prolonge\sect$. Then by Stokes'  theorem again a new expression of the
variation of the action~\refp{eq:split1} is obtained
\begin{eqnarray*}
\delta\mathcal{A}&=&\int_{\partial\mathcal{U}}\pullback{\sect}\left(\z{A}\derp{\Ld}{\v{\mu}{A}}\dnx{n}{\mu}+j^1Z\contr\Lag
\right)
-\int_{\mathcal{U}}\pullback{\sect}\left[ \z{A} \left(
\dd \left(\derp{\Ld}{\v{\mu}{A}}\right)-
\frac{1}{n+1}\derp{\Ld}{\y{A}}\dx{\mu}\right)\wedge\dnx{n}{\mu}\right]
,\quad \forall \zv.
\end{eqnarray*}
As usual, the first integral, along the boundaries $\partial \mathcal{U}$, can also be written 
\begin{eqnarray*}
\int_{\partial\mathcal{U}}\pullback{\sect}\left[
j^1Z\contr\left(\derp{\Ld}{\v{\mu}{A}}\fc{A}\wedge\dnx{n}{\mu}+\Lag\right)\right],
\end{eqnarray*}
using the
contact form with $\z{A}=\fc{A}(j^1Z)$ and using the holomic criteria again to insure that $\pullback{\sect}\left[
\left(j^1Z\contr\fc{A}\right)\wedge\dnx{n}{\mu}\right]=\pullback{\sect}\left[
j^1Z\contr\left(\fc{A}\wedge\dnx{n}{\mu}\right)\right]$.
Gathering all the terms together, the shape of equation \refp{eq:dA2} ($\delta\mathcal{A}=\int_{\partial\mathcal{U}}\ldots-\int_{\mathcal{U}}\ldots$) is obtained if the (Lagrangian) Poincaré-Cartan $(n+1)$-form in $\JE$ is 
introduced by
\begin{align*}
\PCL=\derp{\Ld}{\v{\mu}{A}}\fc{A}\wedge\dnx{n}{\mu}+\Lag.
\end{align*}
\paragraph*{Remark} The Legendre transforms displayed in the well-known Poincaré-Cartan (n+1)-form $\PCL$ in~\refp{eq:PC}, appear naturally by replacing the contact form $\fcv$ by its expression in coordinates~\refp{eq:contact}.
However, it is useful to keep the contact form and to introduce the Euler-Lagrange form~\refp{eq:ELF} in the Lagrangian multisymplectic $(n+2)$-form on $\JE$, $\PSL =-\dd \PCL $ given in~\refp{PSL}. These calculations are carried out starting from the last convenient expression of $\PCL$
\begin{eqnarray*}
-\PSL&=&\dd \PCL=
\dd\left(\derp{\Ld}{\v{\mu}{A}}\right)\wedge\fc{A}\wedge\dnx{n}{\mu}
+\derp{\Ld}{\v{\mu}{A}}\dd \fc{A}\wedge\dnx{n}{\mu}
+\dd\Ld\wedge\vol\\
&=&\left(\mathcal{T}_{A}^{\mu}+
\frac{1}{n+1}\derp{\Ld}{\y{A}}\dx{\mu}\right)
\wedge\fc{A}\wedge\dnx{n}{\mu}
-\derp{\Ld}{\v{\mu}{A}}\dd \v{\nu}{A}\wedge \dx{\nu}
\wedge\dnx{n}{\mu}
+\derp{\Ld}{\y{A}}\dy{A}\wedge\vol+\derp{\Ld}{\v{\mu}{A}}\dd \v{\mu}{A}\wedge\vol\\
&=&\mathcal{T}_{A}^{\mu}
\wedge\fc{A}\wedge\dnx{n}{\mu}
-\derp{\Ld}{\y{A}}
\fc{A}\wedge\vol
-\cancel{\derp{\Ld}{\v{\mu}{A}}\dd \v{\mu}{A}\wedge \vol}
+\derp{\Ld}{\y{A}}\dy{A}\wedge\vol
+\cancel{\derp{\Ld}{\v{\mu}{A}}\dd \v{\mu}{A}\wedge\vol}\\
&=&\mathcal{T}_{A}^{\mu}\wedge
\fc{A}\wedge\dnx{n}{\mu}
-\derp{\Ld}{\y{A}}
\dy{A}\wedge\vol
+\derp{\Ld}{\y{A}}\dy{A}\wedge\vol\\
&=&\mathcal{T}_{A}^{\mu}\wedge
\fc{A}\wedge\dnx{n}{\mu}.\diamond
\end{eqnarray*}

\section{Lift of vector fileds with Lie groups}\label{app:LVFG}

This appendix is dedicated to the one-jet prolongation of vector fields on a principal $G$ bundle. The process is the one used in app.~\ref{sect:jet-pGeom} adapted to the presence of a Lie group $G$.

\paragraph*{Step (i)}Let $\ptJE=(x^{\mu},\g{A},\xiL^A_\mu)$ be a given point of $\JE$ in a local system of coordinates and let $Z\in \Chp{E}$ be a vector field over $E$ expressed in the left-invariant basis $\baseyTG{A}$ given by $Z = \alpha^{\mu}\basex{\mu}+\beta^A\baseyTG{A}$.

\paragraph*{Step (ii)}
The inverse holonomic map \refp{def:invholMap} is used to compute the tangent vector fields representative of the point $\ptJE$ by  $(\ptE,X)=\Hol^{-1}(\ptJE)$. This is achieved by choosing the contact form~\refp{eq:contactG} adapted to the principal $G$ bundle
\begin{equation}\label{holonomicXmuG}
\fcvG_{\ptJE}((X,0))=0\Leftrightarrow \mcfL^A\left(X\right)=\vL{\nu}{A}\dx{\nu}\left(X\right).
\end{equation}
It gives  a field of (hyper)planes characterized by $n+1$ tangent vector fields. Without loss of generality, these vectors can be 
normalized to
\begin{equation}\label{eq:NXG}
{X}_{\mu}=\basex{\mu}+\vL{\mu}{B}\baseyL{B},
\quad \text{with }\dx{\nu}\left({X}_{\mu}\right)=\delta_\mu^\nu
,\quad ({\mu}=1,\ldots,n+1).
\end{equation}
Obviously ${X}_{\mu}$ verifies \refp{holonomicXmuG}.
\paragraph*{Step (iii)}
Let us consider an integral curve $\mathcal{C}$ of the one-parameter transformation group $\tau_{ \Eps}^Z$ along $Z$  given by
\begin{eqnarray*}
\mathcal{C}: \R \supset I &\rightarrow& E\\
\Eps&\mapsto& (\ptE_\Eps,{X}_\mu^{\Eps})=(\tau^Z_\Eps(\ptE),T_{\ptE}{(\tau_{ \Eps}^Z)} (X_\mu)).
\end{eqnarray*}
\paragraph*{Step (iv)}
This last procedure involves the bracket of vector fields. More precisely, the Lie derivative of the two vector fields $X_\mu$ and $Z$, at point $\ptE$, is given by (see~\cite{jost2005riemannian}  for example),
\begin{eqnarray*}
&&\Lie{Z}X_\mu=\crochetL{Z}{X_\mu}\Big|_\ptE=\der{}{ \Eps}\bigg|_{ \Eps=0}\left((\tau_\Eps^Z)^* X_\mu\big|_{\ptE_\Eps}\right)\\
&&=\der{}{ \Eps}\bigg|_{ \Eps=0}\left(T_{\ptE_\Eps}{(\tau_{- \Eps}^Z)} (X_\mu)\right)
=\lim_{ \Eps\rightarrow 0}\frac{T_{\ptE_\Eps}{(\tau_{- \Eps}^Z)} (X_\mu)-X_\mu\big|_{\ptE}}{ \Eps}
\end{eqnarray*}
where $\ptE_\Eps=\tau_{\Eps}^Z(\ptE)$. Evaluating this definition at point $\ptE_\Eps$, one obtains a finite expansion of the tangent map of the transformation $\tau_{ \Eps}^Z$
\begin{equation}\label{expXmuG}
X_\mu^{ \Eps}\big|_{\ptE_\Eps}=T_{\ptE}{(\tau_{ \Eps}^Z)} (X_\mu)=X_\mu\big|_{\ptE_\Eps}+ \Eps\crochetL{X_\mu}{Z}\big|_{\ptE_\Eps}+\mathcal{O}( \Eps^2).
\end{equation}
\paragraph*{Step (v)}
Using definition~\ref{def:holMap}, the curve $\Eps \mapsto \ptJE_\Eps=\Hol(\ptE_\Eps,{X}_\mu^{\Eps})$ is obtained by solving
\begin{equation}\label{holonomicXmuEpsG}
\fcvG_{\ptJE_{\Eps}}((X_\mu^{\Eps},0))=0\Leftrightarrow (\mcfL_{\Eps})^A\left(X_\mu^\Eps\right)=(\xiL_{\Eps})^A_{\nu}\dx{\nu}\left(X_\mu^\Eps\right),
\end{equation}
where the contact form \refp{eq:contactG} is used again.
\paragraph*{Step (vi)}
This expression involves the variation of the Maurer-Cartan form $(\mcfL_{ \Eps})^A$ in the $Z$ direction that can be expressed using the Lie derivative formula 
\begin{equation*}
\left(\Lie{Z}\mcfL^A\right)({X}_\mu)
=\lim_{\Eps\rightarrow 0}\frac{ \left.(\mcfL_{ \Eps})^A\right|_{\ptE_{\Eps}}(X_\mu^\Eps)-\left.\mcfL^A\right|_{\ptE}({X}_\mu)}{\Eps},
\end{equation*}
to give
\begin{equation}\label{eq:DLmcG}
\left.(\mcfL_\Eps)^A\right|_{\ptE_{\Eps}}(X_\mu^\Eps)=\left.\mcfL^A\right|_{\ptE}({X}_\mu)+\Eps\left(\Lie{Z}\mcfL^A\right)({X}_\mu)+\mathcal{O}(\Eps^2).
\end{equation}

\paragraph*{Step (vi)}
Using the finite expansion \refp{expXmuG} and \refp{eq:DLmcG} and tacking into account \refp{holonomicXmuG} and \refp{eq:NXG}, the holonomic criteria \refp{holonomicXmuEpsG} becomes 
\begin{equation}
\left((\xiL_{\Eps})^A_{\nu}- \xiL^A_{\mu}\right) =
\Eps
\Big(
\left(\Lie{Z}\mcfL^A\right)({X}_{\mu})
-\xiL^A_{\nu}dx^{\nu}(\crochetL{{X}_{\mu}}{Z})
\Big)
 +\mathcal{O}(\Eps^2).
\end{equation}
This expansion furnishes the third component of the jet-prolongation, $j^1Z$, at point $(\x{\mu},\g{A},\vL{\mu}{A})$ 
\begin{equation}\label{eq:3CG}
\gamma^A_{\mu}=d\xiL^A_\mu(j^1Z)=\left(\Lie{Z}\mcfL^A\right)({X}_{\mu})
-\xiL^A_{\nu}dx^{\nu}(\crochetL{{X}_{\mu}}{Z})
\end{equation}
- an expression that can be simplified using three more steps.
\paragraph*{Step (vii)}
Using the vectorial 1-form $\mcfLv$, we have on one hand, $\mcfLv_{\ptE}({X}_{\mu})=\boldsymbol{\xiL}_{\mu}$.  On the other hand taking into account the Maurer-Cartan equation for $\mcfLv$ (zero curvature equation)
\begin{equation}\label{eq:MCeq }
d\mcfLv+\crochetL{\mcfLv}{\mcfLv}=0,
\end{equation}
we have
\begin{eqnarray*}
&&\left(\Lie{Z}\mcfLv\right)({X}_{\mu})=
\left(
d(Z\contr \mcfLv)+(Z\contr d\mcfLv)
\right)({X}_{\mu})\\
&=&
\left(
d\boldsymbol{\beta}-(Z\contr \crochetL{\mcfLv}{\mcfLv})
\right)({X}_{\mu})\quad \text{since $d\mcfLv=-\crochetL{\mcfLv}{\mcfLv}$}\\
&=&
\left(
d\boldsymbol{\beta}-(\crochetL{\mcfLv(Z)}{\mcfLv})
\right)({X}_{\mu})\\
&=&
d\boldsymbol{\beta}({X}_{\mu})+\crochetL{\mcfLv({X}_{\mu})}{\mcfLv(Z)}
\\
&=&
d\boldsymbol{\beta}({X}_{\mu})+\crochetL{\boldsymbol{\xiL}_{\mu}}{\boldsymbol{\beta}},\quad
\left(=d\boldsymbol{\beta}({X}_{\mu})+\mcfLv\crochetL{{X}_{\mu}}{Z}\right)
\end{eqnarray*}
and then \refp{eq:3CG} becomes
\begin{eqnarray*}
d\xiL^A_\mu(j^1Z)&=&d\beta^A ({X}_{\mu})+ \underbrace{\mcfL^A(\crochetL{{X}_{\mu}}{Z})-\xiL^A_{\nu}dx^{\nu}(\crochetL{{X}_{\mu}}{Z})}_{\fcG{A}_{\ptJE}(\crochetL{{X}_{\mu}}{Z})}\\
&=& d\beta^A ({X}_{\mu}) + \crochetL{\boldsymbol{\xiL}_{\mu}}{\boldsymbol{\beta}}^A -\xiL^A_{\nu}dx^{\nu}(\crochetL{{X}_{\mu}}{Z}).
\end{eqnarray*}
\paragraph*{Step (viii)}
As the differential of the map $(x^{\mu},y^A)\mapsto \baseyTG{A}$ has no component along $\basex{\mu}$, \ie   $d(\baseyTG{A})=0\basex{\mu}+\eta_A^B\baseyTG{B}$ where $\eta_A^B$ is a 1-form, the computation of the commutator
 \begin{eqnarray*}
&&\crochetL{{X}_{\mu}}{Z}=dZ({X}_{\mu})-d{X}_{\mu}(Z)\\
&=&
\left(d\alpha^{\nu} \basex{\nu}+d\beta^A \baseyTG{A} +\beta^A d(\baseyTG{A})\right)({X}_{\mu})-
\left(\xiL^A_{\mu} d(\baseyTG{A})\right)(Z)\\
&=&d\alpha^{\nu} ({X}_{\mu})\basex{\nu}+\left[
(d\beta^B+\beta^A\eta_A^B)({X_\mu})-\xiL^A_{\mu} \eta_A^B(Z)
\right]\baseyTG{B},
\end{eqnarray*}
shows that the term $dx^{\nu}(\crochetL{{X}_{\mu}}{Z})$ equals $d\alpha^{\nu}({X}_{\mu})$. It is important to notice, in this computation, that the ${X}_{\mu}$ are defined for a fixed $\xiL^B_{\mu}.$ So we have now
\begin{equation}\label{eq:prolong4G}
d\xiL^A_\mu(j^1Z)=d\beta^A ({X}_{\mu})+\crochetL{\boldsymbol{\xiL}_{\mu}}{\boldsymbol{\beta}}^A-\xiL^A_{\nu}d\alpha^{\nu}({X}_{\mu}).
\end{equation}
\paragraph*{Step (ix)}
Expressing any vector ${X}_{\mu}$ in both basis $\baseyTG{A}$ and $\basey{A}$ gives ${X}_{\mu}=\basex{\mu}+\xiL^I_{\mu}\baseyTG{I}=\basex{\mu}+v^Q_{\mu}\basey{Q}$, for some $\xiL^I_{\mu}$ and $v^Q_{\mu}$. That is, as $\dy{A}$ is the dual basis to the basis $\basey{A}$,
\begin{equation*}
\begin{cases}
	\mcfL^A({X}_{\mu})&=\xiL^A_{\mu}=\mcfL^A(\basex{\mu}+v^Q_{\mu}\basey{Q})=v^Q_{\mu}\mcfL^A(\basey{Q})\\
	\dy{B}({X}_{\mu})&=v^B_{\mu}=\dy{B}(\basex{\mu}+\xiL^I_{\mu}\baseyTG{I})=\xiL^I_{\mu} \dy{B}(\baseyTG{I})
\end{cases}
\end{equation*} and then since $\beta^A$ is a function of $\b x$ and $\b y$, we have
\begin{eqnarray*}
&&d\beta^A({X}_{\mu})=\left(\derp{\beta^A}{x^{\nu}}dx^{\nu}+\derp{\beta^A}{y^B}\dy{B}\right)
\left(\basex{\mu}+\xiL^I_{\mu}\baseyTG{I}\right)\\
&&=\derp{\beta^A}{x^{\mu}}+\derp{\beta^A}{y^B}\xiL^I_{\mu}\dy{B}(\baseyTG{I})=
\left(\derp{\beta^A}{x^{\mu}}+v^B_{\mu}\derp{\beta^A}{y^B}
\right).
\end{eqnarray*}
Same computation for $d\alpha^{\nu}(X_\mu)$ gives 
\begin{equation*}
d\alpha^{\nu}(X_\mu) = \derp{\alpha^{\nu}}{x^\mu} + v_\mu^B\derp{\alpha^{\nu}}{y^B}.
\end{equation*}
Finally, including these expressions into \refp{eq:prolong4G} and introducing $\zG{A}=\beta^A-\xiL^A_{\nu} \alpha^{\nu}$, the lift of vector fields is given by proposition~\ref{prop:liftG}.

\section{The Hamilton principle: calculus of variations with reduction by Lie groups}\label{app:dAG}

This appendix is dedicated to obtain the variation of the action~\refp{eq:dA3} using a reduced Lagrangian starting from the Hamilton principle~\ref{def:HP}. This step is mainly the one used for appendix~\ref{anx:deltaA} except that now the lift of $Z$ is given by proposition~\ref{prop:liftG} where a Lie bracket appears. As in these preceding computations, the variation is splitted into two integrals
\begin{align}\label{eq:G_split1}
\delta \mathcal{A} = \int_{\partial\mathcal{U}}\pullback{\sect}\left(j^1Z\contr
\lag\right)+\int_{\mathcal{U}}\pullback{\sect}j^1Z\contr d\lag= I_1 + I_2.
\end{align}
With the lift of $Z$ at point $\ptJE=(x^{\mu},\y{A},\vL{\mu}{A})$ given by proposition~\ref{prop:liftG}, let us compute $j^1Z\contr d\lag$ to determine $I_2$ first. It is convenient to introduce the contact form $\fcvG$~(\ref{eq:contactG}) in this proposition. With the notation $\zG{A}=\beta^A- \vL{\nu}{A}\alpha^{\nu}= \fcG{A}(j^1Z)$, so that $\vol(j^1Z)=\alpha^\nu\wedge\dnx{n}{\nu}$ and the partial derivatives of the Lagrangian, we have
\begin{eqnarray*}
j^1Z\contr \dd\lag&=&j^1Z\contr \left(\dd\ld\wedge \vol\right)
=j^1Z\contr \left(
\derp{\ld}{\y{A}}T^{A}_{B}\fcG{B}\wedge\vol+\derp{\ld}{\vL{\mu}{A}}\dd \vL{\mu}{A}\wedge\vol
\right)\\
&=&\derp{\ld}{\y{A}}
T^{A}_{B}\left[
\fcG{B}(j^1Z)\vol-\fcG{B}\wedge\vol(j^1Z)
\right]+
\derp{\ld}{\vL{\mu}{A}}
\left[
\dd \vL{\mu}{A}(j^1Z)\vol-\dd \vL{\mu}{A}\wedge\vol(j^1Z)
\right]\\
&=&\derp{\ld}{\y{A}}T^{A}_{B}
\left[
\zG{B}\vol-\alpha^\nu \fcG{B}\wedge\dnx{n}{\nu}
\right]+
\derp{\ld}{\vL{\mu}{A}}
\left[
\left(
\derp{\zG{A}}{x^{\mu}}+\xiL^C_{\mu}T^B_C
\derp{\zG{A}}{y^B}
+\crochetL{\boldsymbol{\xiL}_{\mu}}{\boldsymbol{\beta}}^A\right)\vol
-\alpha^\nu\dd \vL{\mu}{A}\wedge\dnx{n}{\nu}
\right].
\end{eqnarray*}
The term $\derp{\zG{A}}{\x{\mu}}\vol$ may be written in another way 
\begin{eqnarray*}
\derp{\zG{A}}{\x{\mu}}\vol=\derp{\zG{A}}{\x{\nu}}\vol\delta_{\mu\nu}
=\derp{\zG{A}}{\x{\nu}}\dx{\nu}\wedge\dnx{n}{\mu}
&=&\left(\dd \zG{A}-\derp{\zG{A}}{\y{B}}{T}^B_C\mcfL^{C}-\derp{\zG{A}}{\vL{\nu}{B}}\dd \vL{\nu}{B}\right)\wedge\dnx{n}{\mu}\\
&=&\dd \zG{A}\wedge\dnx{n}{\mu}-\derp{\zG{A}}{\y{B}}{T}^B_C\mcfL^{C}\wedge\dnx{n}{\mu}+
\alpha^\nu\dd \vL{\nu}{A}\wedge\dnx{n}{\mu}.
\end{eqnarray*}
Futhermore, taking into account (integration by part) that
\begin{eqnarray*}
\derp{\ld}{\vL{\mu}{A}}\dd \zG{A}\wedge\dnx{n}{\mu}=\dd \left(\zG{A}\derp{\ld}{\vL{\mu}{A}}\dnx{n}{\mu}\right)
-\zG{A} \dd \left(\derp{\ld}{\vL{\mu}{A}}\right)\wedge\dnx{n}{\mu},
\end{eqnarray*}
a total differential and the contact form $\fcvG$ are displayed
\begin{eqnarray*}
j^1Z\contr \dd\lag&=&
\dd \left(\zG{A}\derp{\ld}{\vL{\mu}{A}}\dnx{n}{\mu}\right)
-\left[ \zG{A} \left(
\dd \left(\derp{\ld}{\vL{\mu}{A}}\right)\wedge\dnx{n}{\mu}
-\left(\ade{\vLv_{\nu}}{\derp{\ld}{\vLv_{\nu}}}\right)_A\vol
-T^{B}_{A}\derp{\ld}{y^B}\vol
\right)\right]\\
&-&T^{B}_{C}\left[
\derp{\ld}{\vL{\mu}{A}}\derp{\zG{A}}{\y{B}}
+\derp{\ld}{\y{B}}\alpha^\mu
\right]\fcG{C}\wedge\dnx{n}{\mu}
+\left[
\derp{\ld}{\vL{\mu}{A}}\alpha^\nu
\left(
\dd \vL{\nu}{A}\wedge\dnx{n}{\mu}-
\dd \vL{\mu}{A}\wedge\dnx{n}{\nu}
+\crochetL{\vLv_{\mu}}{\vLv_{\nu}}^A\vol
\right)
\right],
\end{eqnarray*}
with the co-adjoint operator $\ade{}{}$ defined by 
$
\dual{\boldsymbol{\pi}}{\crochetL{\boldsymbol{\xiL}_{\mu}}{\boldsymbol{\beta}}}=
\dual{\boldsymbol{\pi}}{\ad{\boldsymbol{\xiL}_{\mu}}{\boldsymbol{\beta}}}=
\dual{\ade{\boldsymbol{\xiL}_{\mu}}{\boldsymbol{\pi}}}{\boldsymbol{\beta}}
$. The two last terms in parentheses are canceled when they are evaluated along $\prolonge\sect$. This is achieved by using the holomic criteria $\pullback{\sect}\fcG{C}\equiv 0$ and the Maurer-Cartan equation $d\mcfLv+\crochetL{\mcfLv}{\mcfLv}=0$ (see appendix~\ref{app:zero}). Then by Stokes'  theorem and using the contact form with $\zG{A}=\fcG{A}(j^1Z)$, a new expression of the
variation of the action~\refp{eq:G_split1} is obtained as
\begin{equation*}
\delta\mathcal{A}=\int_{\partial\mathcal{U}}\pullback{\sect}
\left(
j^1Z\contr\PCl
\right)
-\int_{\mathcal{U}}
\pullback{\sect}\left(
j^1Z\contr
\left(
\fcG{A}\wedge\Gamma_{A}
\right)\right),
\end{equation*}
with
\begin{equation*}
\PCl=\derp{\ld}{\vL{\mu}{A}}\fcG{A}\wedge\dnx{n}{\mu}+\lag,
\quad\text{and}\quad
\Gamma_{A}=
\dd \left(\derp{\ld}{\vL{\mu}{A}}\right)\wedge\dnx{n}{\mu}
-\left(\ade{\vLv_{\nu}}{\derp{\ld}{\vLv_{\nu}}}\right)_A\vol
-T^{B}_{A}\derp{\ld}{y^B}\vol.
\end{equation*}
\paragraph*{Remark} The Legendre transforms displayed in the (reduced) Poincaré-Cartan (n+1)-form $\PCl$ given by~\refp{eq:G_PC}, appear naturally by replacing the contact form $\fcvG$ by its coordinates expression~\refp{eq:contactG}. However,  the Lagrangian pre-multisymplectic $(n+2)$-form on $\JE$,
\begin{equation*}
\PSl=
\fcG{A}\wedge\Gamma_{A}
+\derp{\ld}{\vL{\mu}{A}}\crochetL{\fcvG}{\fcvG}^A
\wedge\dnx{n}{\mu}
\end{equation*}
given in~\refp{eq:PSl} is obtained by carrying out 
the following calculations starting from the last convenient expression of $\PCl$
\begin{eqnarray*}
\PSl&=&-\dd \PCl=-\dd \left(
\derp{\ld}{\vL{\mu}{A}}\fcG{A}\wedge\dnx{n}{\mu}+\lag
\right)=
\fcG{A}\wedge\dd \left(\derp{\ld}{\vL{\mu}{A}}\right)\wedge\dnx{n}{\mu}
-\derp{\ld}{\vL{\mu}{A}}\dd \fcG{A}\wedge\dnx{n}{\mu}
-\dd\ld\wedge\vol\\
&=&\fcG{A}\wedge
\left(
\Gamma_{A}+
\left(\ade{\vLv_{\nu}}{\derp{\ld}{\vLv_{\nu}}}\right)_A\vol
+T^{B}_{A}\derp{\ld}{y^B}\vol
\right)
\\
&-&\derp{\ld}{\vL{\mu}{A}}\dd \mcfL^{A}\wedge\dnx{n}{\mu}
+\cancel{\derp{\ld}{\vL{\mu}{A}}\dd \vL{\nu}{A}\wedge \dx{\nu}\wedge\dnx{n}{\mu}}
-\derp{\ld}{\y{A}}\dy{A}\wedge\vol
-\cancel{\derp{\ld}{\vL{\mu}{A}}\dd \vL{\mu}{A}\wedge\vol}\\
&=&\fcG{A}\wedge\Gamma_{A}
+\left(\ade{\vLv_{\nu}}{\derp{\ld}{\vLv_{\nu}}}\right)_A\mcfL^{A}\wedge\vol
+\cancel{T^{B}_{A}\derp{\ld}{y^B}\mcfL^{A}\wedge\vol}
+\derp{\ld}{\vL{\mu}{A}}\crochetL{\mcfLv}{\mcfLv}^A\wedge\dnx{n}{\mu}
-\cancel{T^{B}_{A}\derp{\ld}{\y{B}}\mcfL^{A}\wedge\vol}
\\
&=&\fcG{A}\wedge\Gamma_{A}
+\derp{\ld}{\vL{\mu}{A}}
\left(\underbrace{
\crochetL{\vLv_{\mu}}{\mcfLv}^A\wedge\vol
+
\crochetL{\mcfLv}{\mcfLv}^A\wedge\dnx{n}{\mu}}_{\Upsilon_{\mu}^{A}}
\right).
\end{eqnarray*}
As in the general case, since the contact form $\fcG{A}$ and the Euler-Lagrange form $\Gamma_{A}$ vanish along critical section $\prolonge\sect$, the (n+2)-form $\fcG{A}\wedge\Gamma_{A}$ has the same property. So do the additional form $\Upsilon_{\mu}^{A}$ \ie $\pullback{\sect}\left(W\contr\Upsilon_{\mu}^{A}\right)=0$, since we have (see appendix \ref{app:OUF} for the bracket of $\mathfrak{g}$-valued 1-forms)
\begin{eqnarray*}
\Upsilon_{\mu}^{A}
&=&
\crochetL{\vLv_{\mu}}{\mcfLv}^A\wedge\vol
+\crochetL{\mcfLv}{\mcfLv}^A\wedge\dnx{n}{\mu}\\
&=&
\crochetL{\vLv_{\mu}}{\fcvG}^A\wedge\vol
+
\crochetL{\vLv_{\mu}}{\vLv_{\nu}\dx{\nu}}^A\wedge\vol
+
\frac{1}{2}\crochetF{(\fcvG+\vLv_{\nu}\dx{\nu})}{(\fcvG+\vLv_{\alpha}\dx{\alpha})}^A\wedge\dnx{n}{\mu}\\
&=&
\crochetL{\vLv_{\mu}}{\fcvG}^A\wedge\vol
+
\crochetL{\vLv_{\mu}}{\vLv_{\nu}}^A\cancel{\dx{\nu}\wedge\vol}
+
\frac{1}{2}\crochetF{\fcvG}{\fcvG}^A\wedge\dnx{n}{\mu}
+
\crochetF{\fcvG}{\vLv_{\alpha}\dx{\alpha}}^A\wedge\dnx{n}{\mu}
+\frac{1}{2}\crochetF{\vLv_{\nu}\dx{\nu}}{\vLv_{\alpha}\dx{\alpha}}^A\wedge\dnx{n}{\mu}\\
&=&
\crochetL{\vLv_{\mu}}{\fcvG}^A\wedge\vol
+
\frac{1}{2}\crochetF{\fcvG}{\fcvG}^A\wedge\dnx{n}{\mu}
+
\crochetL{\fcvG}{\vLv_{\alpha}}^A\wedge\dx{\alpha}\wedge\dnx{n}{\mu}
+\frac{1}{2}\crochetL{\vLv_{\nu}\dx{\nu}}{\vLv_{\alpha}}^A\wedge\dx{\alpha}\wedge\dnx{n}{\mu}\\
&=&
\cancel{\crochetL{\vLv_{\mu}}{\fcvG}^A\wedge\vol}
+
\crochetL{\fcvG}{\fcvG}^A\wedge\dnx{n}{\mu}
+
\cancel{\crochetL{\fcvG}{\vLv_{\mu}}^A\wedge\vol}
+\frac{1}{2}\crochetL{\vLv_{\nu}\dx{\nu}}{\vLv_{\mu}}^A\wedge\vol\\
&=&
\crochetL{\fcvG}{\fcvG}^A\wedge\dnx{n}{\mu}
+\frac{1}{2}\crochetL{\vLv_{\nu}}{\vLv_{\mu}}^A\cancel{\dx{\nu}\wedge\vol}\\
&=&
\crochetL{\fcvG}{\fcvG}^A\wedge\dnx{n}{\mu}.
\end{eqnarray*}

\section{Change of basis}\label{app:CB}
Let $\dy{A}$ and $\mcfL^{A}$ be the dual basis of $\basey{B}$ and $\baseyTG{B}$ respectively. Any vector belonging to the tangent space of the fiber, $V\in\ChpV{\pi}{E}$, may be written in the two basis with specific coordinates
\begin{equation*}
V=\alpha^B\basey{B}\text{ or } V=\beta^B\baseyTG{B}.
\end{equation*}
So,
\begin{eqnarray*}
\dy{A}(V)&=&\alpha^A=\dy{A}(\beta^B\baseyTG{B})=\dy{A}(\baseyTG{B})\beta^B=\dy{A}(\baseyTG{B})\mcfL^B(V),\quad \forall V\in\ChpV{\pi}{E}
\end{eqnarray*}
Thus, the change of dual basis may be expressed by a matrix $T$ (conversely $L$) such that
\begin{eqnarray*}
\dy{A}&=\dy{A}(\baseyTG{B})\mcfL^B=T^{A}_{B} \mcfL^B,\quad &\text{that is } T^{A}_{B}=\dy{A}(\baseyTG{B}), \text{ conversely}\\
\mcfL^{A}&=\mcfL^{A}(\basey{B})\dy{B}=L^{A}_{B} \dy{B},\quad &\text{that is } L^{A}_{B}=\mcfL^{A}(\basey{B}).
\end{eqnarray*}
For a right invariant basis $\baseyR{B}$ and right Maurer-Cartan form $\mcfR$, we would have
\begin{eqnarray*}
\dy{A}&=\dy{A}(\baseyR{B})\mcfR^B=\tilde{T}^{A}_{B} \mcfR^B,\quad &\text{that is } \tilde{T}^{A}_{B}=\dy{A}(\baseyR{B}), \text{ conversely}\\
\mcfR^{A}&=\mcfR^{A}(\basey{B})\dy{B}=\tilde{L}^{A}_{B} \dy{B},\quad &\text{that is } \tilde{L}^{A}_{B}=\mcfR^{A}(\basey{B}).
\end{eqnarray*}

\section{Elements for the Hamiltonian formulation}\label{app:HF}
\subsection{The canonical form}
With the 1-forms $\theta_\Ha^A$ and $\mathcal{T}^{\mu}_A$ defined in~\refp{eq:1forms} (b,c), the canonical form $\Ocan=\Psymp{\mu}\wedge d^{n}x_{\mu}=dy^A\wedge dp^{\mu}_A \wedge d^{n}x_{\mu}$ yields 
\begin{eqnarray*}
\Ocan&=&
\left(\theta_\Ha^A+\derp{\Hd}{p^{\nu}_A}dx^\nu\right)\wedge
\left(\mathcal{T}^{\mu}_A-\frac{1}{n+1}\derp{\Hd}{y^A}dx^\mu\right)\wedge d^nx_\mu\\
&=&\theta_\Ha^A \wedge \mathcal{T}^{\mu}_A\wedge d^nx_\mu -
\theta_\Ha^A \wedge\derp{\Hd}{y^A}\vol+
\derp{\Hd}{p^{\nu}_A}dx^\nu \wedge  \mathcal{T}^{\mu}_A\wedge d^nx_\mu-
\derp{\Hd}{y^A} \derp{\Hd}{p^{\nu}_A} \cancel{dx^\nu\wedge \vol}.
\end{eqnarray*}
Let us evaluate this $(n+2)$-form on a test vector $W$ and the $(n+1)$-multi-vector $\mvec{X}$ using the lemma
\begin{lemma}[]\label{lemma:M}
Let $\alpha$ and $\beta$ be a $1$-form and $n$-form respectively. Let also $\mvec{X}=X_1\wedge \cdots \wedge X_{n+1}$ be a $n+1$-multivector, we have
\begin{equation}\label{eq:Ir1}
\alpha\wedge\beta (\mvec{X})=(-1)^{i+1}\alpha(X_i)\beta(\hat{\mvec{X}}_{i}),
\end{equation}
where $\hat{\mvec{X}}_{i}=X_1\wedge \cdots \wedge\hat{X}_i\wedge \cdots\wedge X_{n+1}$ is obtain from $\mvec{X}$ eliminating the i-th vector $X_i$.
\end{lemma}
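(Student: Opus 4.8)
The plan is to derive this from the defining combinatorial formula for the wedge product of forms evaluated on a decomposable multivector. Recall that for a $1$-form $\alpha$ and an $n$-form $\beta$, the $(n+1)$-form $\alpha\wedge\beta$ acts on vectors $X_1,\ldots,X_{n+1}$ through the $(1,n)$-shuffle sum
\[
(\alpha\wedge\beta)(X_1,\ldots,X_{n+1})=\sum_{\sigma}\operatorname{sgn}(\sigma)\,\alpha(X_{\sigma(1)})\,\beta(X_{\sigma(2)},\ldots,X_{\sigma(n+1)}),
\]
where $\sigma$ ranges over permutations with $\sigma(2)<\cdots<\sigma(n+1)$. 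First I would observe that such a shuffle is completely determined by the single index $i=\sigma(1)$ placed in the $\alpha$-slot, the remaining indices being forced into increasing order; hence the shuffle sum collapses to an ordinary sum over $i\in\{1,\ldots,n+1\}$.

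Next I would compute the sign of the shuffle $\sigma_i$ sending $X_i$ to the first position while preserving the relative order of the others. In one-line notation $\sigma_i=(i,1,\ldots,\hat\imath,\ldots,n+1)$, whose only inversions are the $i-1$ pairs formed by the leading entry $i$ with the smaller entries $1,\ldots,i-1$; therefore $\operatorname{sgn}(\sigma_i)=(-1)^{i-1}=(-1)^{i+1}$. The surviving factor $\beta(X_{\sigma_i(2)},\ldots,X_{\sigma_i(n+1)})$ is exactly $\beta$ evaluated on the ordered list $X_1,\ldots,\hat{X}_i,\ldots,X_{n+1}$, that is, on $\hat{\mvec{X}}_i$. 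Substituting both facts yields
\[
(\alpha\wedge\beta)(\mvec{X})=\sum_{i=1}^{n+1}(-1)^{i+1}\,\alpha(X_i)\,\beta(\hat{\mvec{X}}_i),
\]
which is precisely \refp{eq:Ir1} once the repeated index $i$ on the right-hand side is read with the implicit summation convention used throughout the paper. An equivalent route, which I would keep in reserve, is to apply the antiderivation property of the interior product, $X_1\contr(\alpha\wedge\beta)=\alpha(X_1)\,\beta-\alpha\wedge(X_1\contr\beta)$, and iterate; this reproduces the same alternating signs inductively.

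There is essentially no genuine obstacle here, as this is a standard exterior-algebra identity; the only point demanding care is the normalization convention for the wedge product (the shuffle convention, with no extra $\tfrac{1}{p!\,q!}$ factor) together with the orientation of the sign. I would therefore fix the shuffle convention at the outset so that the coefficient comes out as exactly $(-1)^{i+1}$, and sanity-check it on the smallest case $n=1$, where the formula must reduce to $(\alpha\wedge\beta)(X_1,X_2)=\alpha(X_1)\beta(X_2)-\alpha(X_2)\beta(X_1)$, matching $(-1)^{i+1}$ for $i=1,2$.
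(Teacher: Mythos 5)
Your proof is correct: with the shuffle (determinant) convention for the wedge product, the $(1,n)$-shuffle sum collapses to a sum over the single index $i$ placed in the $\alpha$-slot, and the sign $(-1)^{i-1}=(-1)^{i+1}$ from the $i-1$ inversions is exactly right, as your $n=1$ check confirms. The paper itself states Lemma~\ref{lemma:M} without any proof (it is used directly in appendix~\ref{app:HF}), so there is nothing to compare against; your argument, including the remark that the repeated index $i$ in \refp{eq:Ir1} is to be read with the paper's implicit summation convention, supplies the standard justification the paper omits.
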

and the corollary
\begin{corollary}[]
With the same hypothesis as the lemma~\ref{lemma:M}, with $\beta$  a $(n+1)$-form and 
 $W$ a vector, we have
\begin{equation}\label{eq:Ir2}
\alpha\wedge\beta (W,\mvec{X})=\alpha(W)\beta(\mvec{X})+(-1)^{j}\alpha(X_j)\beta(W \wedge \hat{\mvec{X}}_{j}).
\end{equation}
\end{corollary}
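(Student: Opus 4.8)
The plan is to reduce this $(n+2)$-form identity to the $(n+1)$-form identity already established in Lemma~\ref{lemma:M} by stripping off the extra vector $W$ with an interior product. First I would rewrite the left-hand side as
\begin{equation*}
\alpha\wedge\beta\,(W,\mvec{X})=\bigl(W\contr(\alpha\wedge\beta)\bigr)(\mvec{X}),
\end{equation*}
using the convention that feeding a form the multivector $(W,\mvec{X})=W\wedge X_1\wedge\cdots\wedge X_{n+1}$ amounts to contracting first with $W$ and then evaluating on $\mvec{X}$.

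Next I would expand the contraction by the antiderivation (graded Leibniz) property of the interior product. Since $\alpha$ is a $1$-form, this gives
\begin{equation*}
W\contr(\alpha\wedge\beta)=(W\contr\alpha)\,\beta-\alpha\wedge(W\contr\beta)=\alpha(W)\,\beta-\alpha\wedge(W\contr\beta).
\end{equation*}
Evaluating on $\mvec{X}$, the first term contributes exactly $\alpha(W)\,\beta(\mvec{X})$, which is the leading term of the claimed formula \refp{eq:Ir2}.

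For the remaining term I would note that $W\contr\beta$ is an $n$-form, because $\beta$ is an $(n+1)$-form, so $\alpha\wedge(W\contr\beta)$ is a wedge of a $1$-form and an $n$-form, evaluated on the $(n+1)$-multivector $\mvec{X}$ — precisely the configuration covered by Lemma~\ref{lemma:M}. Applying the lemma yields
\begin{equation*}
\bigl(\alpha\wedge(W\contr\beta)\bigr)(\mvec{X})=(-1)^{j+1}\alpha(X_j)\,(W\contr\beta)(\hat{\mvec{X}}_{j})=(-1)^{j+1}\alpha(X_j)\,\beta(W\wedge\hat{\mvec{X}}_{j}),
\end{equation*}
where I used $(W\contr\beta)(\hat{\mvec{X}}_{j})=\beta(W\wedge\hat{\mvec{X}}_{j})$. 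Carrying the overall minus sign from the Leibniz expansion turns $-(-1)^{j+1}$ into $(-1)^{j}$, reproducing the second term of \refp{eq:Ir2} and completing the derivation.

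The only delicate point is the sign bookkeeping: the degree-dependent sign $(-1)^{|\alpha|}=-1$ appearing in the antiderivation rule must be tracked consistently against the summation sign $(-1)^{j+1}$ inherited from Lemma~\ref{lemma:M}, and the orientation convention identifying $(W,\mvec{X})$ with $W\wedge\mvec{X}$ must agree with the one used in the lemma. There is no analytic content whatsoever; the corollary is a purely algebraic consequence of the lemma, so once the signs are matched the proof is immediate.
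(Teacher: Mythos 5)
Your proof is correct, and the sign bookkeeping checks out: the graded Leibniz rule $W\contr(\alpha\wedge\beta)=\alpha(W)\,\beta-\alpha\wedge(W\contr\beta)$ combined with Lemma~\ref{lemma:M} applied to the $n$-form $W\contr\beta$ gives $-(-1)^{j+1}=(-1)^{j}$, exactly the coefficient in \refp{eq:Ir2}. The paper itself states this corollary without proof, so there is nothing to compare against; your reduction via the interior product is the natural argument and is the one implicitly intended, since it relies only on the lemma and the identity $(W\contr\beta)(\hat{\mvec{X}}_{j})=\beta(W\wedge\hat{\mvec{X}}_{j})$.
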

It gives
\begin{eqnarray*}
\Ocan(W,\mvec{X})&=&
\theta_\Ha ^A (W) \wedge \cancel{\left(\mathcal{T}^{\mu}_A\wedge d^nx_\mu\right) (\mvec{X})}
+ (-1)^{\alpha}\cancel{\theta_\Ha^A(X_{\alpha})} \wedge \left(\mathcal{T}^{\mu}_A\wedge d^nx_\mu\right)(W,\hat{\mvec{X}}_{\alpha})\\
&-&
\theta_\Ha^A(W)\wedge\derp{\Hd}{y^A}\vol(\mvec{X})
- (-1)^{\alpha}\cancel{\theta_\Ha^A(X_{\alpha})} \wedge \derp{\Hd}{y^A}\vol (W,\hat{\mvec{X}}_{\alpha})\\
&+&
\derp{\Hd}{p^{\nu}_A}dx^\nu(W) \wedge \cancel{\left(\mathcal{T}^{\mu}_A\wedge d^nx_\mu\right)(\mvec{X})}
+ (-1)^{\alpha}\derp{\Hd}{p^{\nu}_A}dx^\nu (X_{\alpha}) \wedge \left(\mathcal{T}^{\mu}_A\wedge d^nx_\mu\right) (W,\hat{\mvec{X}}_{\alpha})\\
&=&
-\derp{\Hd}{y^A}\theta_\Ha^A(W)+ (-1)^{\alpha}\derp{\Hd}{p^{\alpha}_A}   \mathcal{T}^{\mu}_A\wedge d^nx_\mu (W,\hat{\mvec{X}}_{\alpha}).
\end{eqnarray*}
It appears clearly that if $W$ is part of  the multi-vector $\mvec{X}$ (that is $W=X_\nu$), this evaluation vanishes: $\mvec{X}\contr \Ocan(X_\nu)=0$, $\forall X_\nu$. In fact,  $\mathcal{T}^{\mu}_A\wedge d^nx_\mu (X_\nu,\hat{\mvec{X}}_{\alpha})=0$ if $\nu=\alpha$. Now, if $\nu\neq\alpha$
\begin{eqnarray*}
\mathcal{T}^{\mu}_A\wedge d^nx_\mu (X_\nu,\hat{\mvec{X}}_{\alpha})&=&
\mathcal{T}^{\mu}_A(X_\nu)d^nx_\mu (\hat{\mvec{X}}_{\alpha})+
(-1)^{\beta}\mathcal{T}^{\mu}_A(X_{\beta}) d^{n}x_{\mu}(X_\nu,\hat{\mvec{X}}_{\alpha \beta})\\
&=&
(-1)^{\alpha+1}\mathcal{T}^{\mu}_A(X_\nu)\delta_{\mu\alpha}+
(-1)^{\beta}\mathcal{T}^{\mu}_A(X_{\beta}) (-1)^{\alpha+1}\delta_{\mu\alpha}(-1)^{\beta+1}\delta_{\beta\nu}\\
&=&
(-1)^{\alpha+1}\mathcal{T}^{\alpha}_A(X_\nu)-(-1)^{\alpha+1}\mathcal{T}^{\alpha}_A(X_\nu)=0.
\end{eqnarray*}
So we have for $W=\beta_\nu X_\nu+W^{v}\in\Chp{\JE}$
\begin{eqnarray*}
\Ocan(W,\mvec{X})
&=&
-\derp{\Hd}{y^A}\theta_\Ha^A(W^{v})+ (-1)^{\alpha}\derp{\Hd}{p^{\alpha}_A}   \mathcal{T}^{\mu}_A\wedge d^nx_\mu (W^{v},\hat{\mvec{X}}_{\alpha})\\
&=&
-\derp{\Hd}{y^A}\theta_\Ha^A(W^{v})+ (-1)^{\alpha}\derp{\Hd}{p^{\alpha}_A}\left[
\mathcal{T}^{\mu}_A(W^{v})d^nx_\mu (\hat{\mvec{X}}_{\alpha})+
(-1)^{\beta}\mathcal{T}^{\mu}_A(X_{\beta})\cancel{ d^{n}x_{\mu}(W^{v},\hat{\mvec{X}}_{\alpha \beta})}\
\right]\\
&=&
-\derp{\Hd}{y^A}\theta_\Ha^A(W^{v})
-\derp{\Hd}{p^{\alpha}_A}\mathcal{T}^{\alpha}_A(W^{v})\\
&=&
-\derp{\Hd}{y^A}\theta_\Ha^A(W^{v})
-\derp{\Hd}{p^{\alpha}_A} dp^{\alpha}_A (W^{v})-\derp{\Hd}{y^A}\derp{\Hd}{p^{\alpha}_A}dx^{\alpha}(W^{v})\\
&=&
-\derp{\Hd}{y^A}\left[
\theta_\Ha^A+\derp{\Hd}{p^{\alpha}_A}dx^{\alpha}
\right]
(W^{v})
-\derp{\Hd}{p^{\alpha}_A} dp^{\alpha}_A (W^{v})\\
&=&
-\derp{\Hd}{y^A}dy^A(W^{v})-\derp{\Hd}{p^{\alpha}_A} dp^{\alpha}_A(W^{v}).
\end{eqnarray*}
\subsection{Elements for the Poisson equation}
Let us compute also
\begin{eqnarray*}
\left(d\Hd \wedge \vol\right)(W,\mvec{X})&=&d\Hd(W)\vol(\mvec{X})+
(-1)^{\alpha}d\Hd(X_{\alpha})\vol(W,\hat{\mvec{X}}_{\alpha})\\
&=&d\Hd(W)+
(-1)^{\alpha}d\Hd(X_{\alpha})\vol(W,\hat{\mvec{X}}_{\alpha}).
\end{eqnarray*}
Evaluated on $W=\beta_\nu X_\nu+W^{v}$ it gives 
\begin{eqnarray*}
\left(d\Hd \wedge \vol\right)(\beta_\nu X_\nu+W^{v},\mvec{X})
&=&
d\Hd(\beta_\nu X_\nu)+d\Hd(W^{v})+
(-1)^{\alpha}d\Hd(X_{\alpha})\vol(\beta_\nu X_\nu,\hat{\mvec{X}}_{\alpha})\\
&+&
(-1)^{\alpha}d\Hd(X_{\alpha})\cancel{\vol(W^{v},\hat{\mvec{X}}_{\alpha})}\\
&=&
d\Hd(\beta_\nu X_\nu)+d\Hd(W^{v})+
(-1)^{\alpha}\beta_\nu d\Hd(X_{\alpha})\delta_{\nu\alpha}(-1)^{\nu+1}\\
&=&
d\Hd(\beta_\nu X_\nu)+d\Hd(W^{v})+
(-1)^{\alpha}\beta_\alpha d\Hd(X_{\alpha})(-1)^{\alpha+1}\\
&=&d\Hd(W^{v})
\end{eqnarray*}

Computing equation~\refp{eq:DDWforms}(a) 
\begin{equation*}
\prld{\sect}^*\left(d\Hd \wedge d^{n}x_{\nu} 
  -\derp{\Hd}{x_\nu}\vol-\Psymp{\mu} \wedge d^{n-1}x_{\mu\nu}\right)=0.
\end{equation*}
Three terms have to be computed
\begin{eqnarray*}
\prld{\sect}^*(d\Hd \wedge d^{n}x_{\nu})(\basex{1},\ldots,\basex{n+1}) &=&\left(d\Hd \wedge d^{n}x_{\nu}\right)(\mvec{X}),\\
\prld{\sect}^*(\derp{\Hd}{x_\nu}\vol)(\basex{1},\ldots,\basex{n+1}) &=&\left(\derp{\Hd}{x_\nu}\vol\right)(\mvec{X})=\derp{\Hd}{x_\nu},\\
\prld{\sect}^*(\Psymp{\mu} \wedge d^{n-1}x_{\mu\nu})(\basex{1},\ldots,\basex{n+1}) &=&\left(\Psymp{\mu} \wedge d^{n-1}x_{\mu\nu}\right)(\mvec{X}).
\end{eqnarray*}
The last one may be computed knowing that
\begin{eqnarray*}
\Ocan(X_{\nu})&=&\Psymp{\mu} (X_{\nu})\wedge d^{n}x_{\mu}+\Psymp{\mu} \wedge d^{n}x_{\mu}(X_{\nu})\\
&=&
\Psymp{\mu} (X_{\nu})\wedge d^{n}x_{\mu}+\Psymp{\mu} \wedge d^{n-1}x_{\mu\nu}.
\end{eqnarray*}
So we have
\begin{equation*}
0=\Ocan(X_{\nu},\mvec{X})=
\left(\Psymp{\mu} (X_{\nu})\wedge d^{n}x_{\mu}\right)(\mvec{X})
+\left(\Psymp{\mu} \wedge d^{n-1}x_{\mu\nu}\right)(\mvec{X}),
\end{equation*}
\begin{eqnarray*}
\left(\Psymp{\mu} \wedge d^{n-1}x_{\mu\nu}\right)(\mvec{X})&=&
-\left(\Psymp{\mu} (X_{\nu})\wedge d^{n}x_{\mu}\right)(\mvec{X})\\
&=&-
(-1)^{\beta+1}\Psymp{\mu} (X_{\nu},X_{\beta})d^{n}x_{\mu}(\hat{\mvec{X}}_{\beta})\\
&=&-
(-1)^{\beta+1}\Psymp{\mu} (X_{\nu},X_{\beta})(-1)^{\beta+1}\delta_{ \mu \beta}\\
&=&
-\Psymp{\mu} (X_{\nu},X_{\mu})\\
&=&
\Psymp{\mu} (X_{\mu},X_{\alpha})dx^\alpha (X_{\nu}).
\end{eqnarray*}
The two other terms yield
\begin{eqnarray*}
\left(d\Hd \wedge d^{n}x_{\nu}-\derp{\Hd}{x_\nu}\vol\right)(\mvec{X})&=&
(-1)^{\beta+1}d\Hd (X_{\beta})d^{n}x_{\nu}(\hat{\mvec{X}}_{\beta})-\derp{\Hd}{x_\nu}\vol(\mvec{X})\\
&=&
(-1)^{\beta+1}d\Hd (X_{\beta})(-1)^{\nu+1}\delta_{ \nu \beta}-\derp{\Hd}{x_\nu}\\
&=&
d\Hd (X_{\nu})-\derp{\Hd}{x_\nu}.
\end{eqnarray*}
So, according to~\refp{eq:DDWforms}(a), a one-form $D\Hd$, that vanishes on vectors tangent to the optimal section $\prld{\sect}$, may be introduced 
\begin{equation}\label{anx:DH}
\begin{cases}
D\Hd=d\Hd-\derp{\Hd}{x_\mu}dx^{\mu}-\Psymp{\mu} (X_{\mu},X_{\alpha})dx^\alpha\\
D\Hd(X_{\nu})=0.
\end{cases}
\end{equation}
However, using partial derivatives for the differential $d\Hd=\derp{\Hd}{x_\alpha}dx^{\alpha}+\derp{\Hd}{y^A}dy^A+\derp{\Hd}{p^{\alpha}_A} dp^{\alpha}_A$ and taking into account the traditional de Donder Weyl equations~\refp{eq:1formsDDW} (b,c), it can be shown that the 1-form $d\Hd-\derp{\Hd}{x_\alpha}dx^{\alpha}-X_{\mu}\contr\Psymp{\mu}$, which differ slightly from $D\Hd$, vanishes identically 
\begin{equation}\label{eq:null}
d\Hd-\derp{\Hd}{x_\alpha}dx^{\alpha}-X_{\mu}\contr\Psymp{\mu}\equiv 0.
\end{equation}
So, for any vector $W\in\Chp{\JE}$ decomposed as $W=\beta_{\nu}X_{\nu}+W^{v}$, we have
\begin{eqnarray*}
D\Hd(W)&=&d\Hd(W)-\derp{\Hd}{x_\mu}dx^{\mu}(W)-\Psymp{\mu} (X_{\mu},X_{\alpha})dx^\alpha(W)\\
&=&
\underbrace{d\Hd(W)-\derp{\Hd}{x_\mu}dx^{\mu}(W)-\Psymp{\mu} (X_{\mu},W)}_{=0}+\Psymp{\mu} (X_{\mu},W^{v})\\
&=&
X_{\mu}\contr\Psymp{\mu}(W^{v})=\left(
\derp{\Hd}{y^A}\theta_\Ha^A+\derp{\Hd}{p^{\mu}_A}\mathcal{T}^{\mu}_A 
\right)(W^{v}).
\end{eqnarray*}

\section{Maurer-Cartan equation}\label{app:zero}
The differential of the Maurer-Cartan 1-form $\mcfLv$, dual basis of the left-invariant basis $\baseyTG{}$ defined by $\mcfL^A(\baseyTG{B})=\delta_{B}^A$, may be rely to the Lie bracket using the Maurer-Cartan equation 
\begin{equation}\label{eq:MCE}
d\mcfLv+\crochetL{\mcfLv}{\mcfLv}=0.
\end{equation}
Due to the vanishing right hand side, it is also named \emph{zero curvature equation} in the literature. Note that  a minus sign is required for a right invariance choice $d\mcfRv-\crochetL{\mcfRv}{\mcfRv}=0$.
Nevertheless, considering a left-invariant basis, let us recall that
if a section $\boldsymbol{\sect}$ is a representative of a point $\ptJE=(x^{\mu},\y{A},\xiL^A_\mu)$ of the 1-jet bundle $\JE$  over $\ptE=(x^{\mu},\y{A})$, then  
\begin{equation*}
\pullback\sect(\xiL^A_\mu)=\left.\mcfL^A\right|_{\ptE}(\derp{\boldsymbol{\sect}}{x^{\mu}}).
\end{equation*}
Using the pull back definition, it can be written
\begin{equation*}
\sect^*\mcfLv=\pullback\sect(\vLv_\mu dx^{\mu}),
\end{equation*}
since $\derp{\boldsymbol{\sect}}{x^{\mu}}$ corresponds to the push forward $T_{\boldsymbol{\sect}}\basex{\mu}$, of the basis vector $\basex{\mu}$. So, the differential $\dd \mcfLv$ gives in one hand
\begin{eqnarray*}
\sect^*\dd \mcfLv(\basex{i},\basex{j})&=&\pullback\sect(\dd \vLv_\mu dx^{\mu})(\basex{i},\basex{j})\\
&=&\dd \left(\pullback\sect \vLv_\mu\right)\wedge dx^{\mu}(\basex{i},\basex{j})\\
&=&\derp{\left(\pullback\sect \vLv_\mu\right)}{x_\nu}dx^{\nu}\wedge dx^{\mu}(\basex{i},\basex{j})\\
&=&\derp{\left(\pullback\sect \vLv_j\right)}{x_i}-\derp{\left(\pullback\sect \vLv_i\right)}{x_j}.
\end{eqnarray*}
On the other hand, using the Maurer-Cartan equation, 
\begin{eqnarray*}
\sect^*\dd \mcfLv(\basex{i},\basex{j})&=&-\sect^*\crochetL{\mcfLv}{\mcfLv}(\basex{i},\basex{j})\\
&=&-\crochetL{\sect^*\mcfLv}{\sect^*\mcfLv}(\basex{i},\basex{j})\\
&=&-\crochetL{\left(\pullback\sect \vLv_i\right)}{\left(\pullback\sect \vLv_\mu\right)}.
\end{eqnarray*}
So, knowing that $\pullback\sect (\dd \vLv_\nu \wedge d^{n}x_{\mu})=\derp{\left(\pullback\sect \vLv_\nu\right)}{x_\mu}\vol$, this gives the useful identity
\begin{equation}
\pullback\sect\left(
\dd \vL{\nu}{A}\wedge\dnx{n}{\mu}-
\dd \vL{\mu}{A}\wedge\dnx{n}{\nu}
+\crochetL{\vLv_{\mu}}{\vLv_{\nu}}^A\vol
\right)=0.
\end{equation}
\section{Other useful formulae}\label{app:OUF}
\subsection{Volume form and contractions}
\begin{align}\label{eq:dnx}
d^{n}x_{\mu}=\basex{\mu}\contr \vol&=(-1)^{\mu+1}dx^1\wedge\ldots\wedge
dx^{\mu-1}\wedge dx^{\mu+1}\wedge \ldots \wedge dx^{n+1}\\
&=(-1)^{\mu+1}dx^1\wedge\ldots\wedge
dx^{\mu-1}\wedge \widehat{dx^{\mu}}\wedge dx^{\mu+1}\wedge \ldots \wedge dx^{n+1}
\end{align}
\begin{equation}\label{id:01}
\dx{\nu}\wedge\dnx{n}{\mu}=\vol\delta_{\mu\nu}
\end{equation}
\begin{equation}\label{id:02}
\dx{\mu}\wedge\dnx{n}{\mu}=(n+1)\vol
\end{equation}
\begin{equation*}
d^{n-1}x_{\mu\nu} = \partial_{x_\nu}\contr d^nx_\mu
	= \begin{cases}
	(-1)^{\mu+\nu}dx^1 \wedge \cdots \wedge \widehat{dx^{\nu}}\wedge\cdots \wedge  \widehat{dx^{\mu}}\wedge \cdots dx^{n+1}, & \nu < \mu\\
	0, & \nu=\mu,\\
	(-1)^{\mu+\nu+1}dx^1 \wedge \cdots \wedge\widehat{dx^{\mu}}\wedge\cdots \wedge  \widehat{dx^{\nu}}\wedge \cdots dx^{n+1}, &\nu > \mu,
	\end{cases}
\end{equation*}
\begin{equation}\label{id:03}
\dx{\alpha}\wedge\dnx{n-1}{\mu\nu}=\dnx{n}{\mu}\delta_{\alpha\nu}-\dnx{n}{\nu}\delta_{\alpha\mu}
\end{equation}
\begin{equation}\label{id:04}
\dx{\mu}\wedge\dnx{n-1}{\mu\nu}=-n\dnx{n}{\nu}\quad
\dx{\nu}\wedge\dnx{n-1}{\mu\nu}=+n\dnx{n}{\mu}
\end{equation}

\subsection{$\mathfrak{g}$-valued 1-forms}
Letting $(\baseyTG{1},\baseyTG{2},\ldots,\baseyTG{N})$ be a basis for $\mathfrak{g}$, a $\mathfrak{g}$-valued 1-form may be written $\b\alpha=\alpha^i\otimes\baseyTG{i}$ where the $\alpha^i$ are a 1-form. The "bracket" on $\mathfrak{g}$-valued 1-forms is a combination of exterior product and bracket operation
\begin{eqnarray*}
\crochetL{\b\alpha}{\b\beta}(X,Y)&=&\crochetL{\b\alpha(X)}{\b\beta(Y)}\\
\crochetF{\b\alpha}{\b\beta}(X,Y)&=&\crochetL{\b\alpha(X)}{\b\beta(Y)}-\crochetL{\b\alpha(Y)}{\b\beta(X)}.
\end{eqnarray*}
Writing $\b\alpha=\alpha^i\otimes\baseyTG{i}$ and $\b\beta=\beta^j\otimes\baseyTG{j}$, we have
\begin{equation*}
\crochetF{\b\alpha}{\b\beta}=\alpha^i\wedge\beta^j\otimes\crochetL{\baseyTG{i}}{\baseyTG{j}}.
\end{equation*}
In particular, we have the symmetry
\begin{equation*}
\crochetF{\b\alpha}{\b\beta}=\crochetL{\b\alpha}{\b\beta}+\crochetL{\b\beta}{\b\alpha},
\end{equation*}
which gives
\begin{equation*}
\crochetF{\b\alpha}{\b\alpha}=2\crochetL{\b\alpha}{\b\alpha}.
\end{equation*}
The followings identities are useful. Let $\b\alpha$ and $\b\beta$ be $\mathfrak{g}$-valued 1-forms, $\vLv$ a vector in $\mathfrak{g}$ and $\b{\pi}$ a covector in $\mathfrak{g}^*$
\begin{eqnarray*}
&&\crochetF{\b\alpha}{\vLv_{\nu}\dx{\nu}}=\crochetL{\b\alpha}{\vLv_{\nu}}\wedge\dx{\nu}\\
&&\dual{\momL{}{A}}{\crochetF{\b\alpha}{\b\beta}^A}=\left(\ade{\b\alpha}{\b{\pi}}\right)_A\wedge\beta^A\\
&&\dual{\momL{}{A}}{\crochetF{\b\alpha}{\vLv}^A}=\left(\ade{\b\alpha}{\b{\pi}}\right)_A\vL{}{A}.
\end{eqnarray*}

\addcontentsline{toc}{part}{References}

\bibliography{UM}
\bibliographystyle{apsrev4-1custom}

\end{document}